\algnewcommand{\LeftComment}[1]{{\color{teal}\textbf{\Statex \(\triangleright\) #1}}} 
\newcommand{\Z}{{\mathcal{Z}}}
\newcommand{\A}{{\mathcal{A}}}
\newcommand{\N}{{\mathcal{N}}}
\newcommand{\xw}[1] {{\textcolor{blue}{XW: #1}}}
\renewcommand{\paragraph}[1]{\noindent {\bf #1}}
\begin{document}

\title{Blockchain CAP Theorem Allows User-Dependent Adaptivity and Finality}

\author{Suryanarayana Sankagiri\inst{1}\thanks{The first two authors contributed equally to this work.}
\and
Xuechao Wang\inst{1}{$^\star$}
\and
Sreeram Kannan \inst{2}
\and 
 Pramod Viswanath\inst{1}
}

\authorrunning{S.\ Sankagiri et al.}

\institute{University of Illinois, Urbana-Champaign IL 61801  \and
University of Washington at Seattle, WA}

\maketitle
\begin{abstract}
    Longest-chain blockchain protocols, such as Bitcoin, guarantee liveness even when the number of actively participating users is variable, i.e., they are adaptive. However, they are not safe under network partitions, i.e., they do not guarantee finality. On the other hand, classical blockchain protocols, like PBFT, achieve finality but not adaptivity. Indeed, the CAP theorem in the context of blockchains asserts that no protocol can simultaneously offer both adaptivity and finality. 

    We propose a new blockchain protocol, called the checkpointed longest chain, that offers individual users the choice between finality and adaptivity instead of imposing it at a system-wide level. This protocol's salient feature is that it supports two distinct confirmation rules: one that guarantees adaptivity and the other finality. The more optimistic adaptive rule always confirms blocks that are marked as finalized by the more conservative rule, and may possibly confirm more blocks during variable participation levels. Clients (users) make a local choice between the confirmation rules as per their personal preference, while miners follow a fixed block proposal rule that is consistent with both confirmation rules. The proposed protocol has the additional benefit of intrinsic validity: the finalized blocks always lie on a single blockchain, and therefore miners can attest to the validity of transactions while proposing blocks. Our protocol builds on the notion of a finality gadget, a popular technique for adding finality to longest-chain protocols.
\end{abstract}
\section{Introduction} \label{sec:intro}

The longest-chain protocol, introduced by Nakamoto in Bitcoin \cite{bitcoin}, is the prototypical example of a blockchain protocol that operates in a permissionless setting. Put differently, the longest-chain protocol is {\em adaptive}: it remain safe and live irrespective of the number of active participants (nodes) in the system, as long as the fraction of adversarial nodes among the active ones is less than a half. Adaptivity (also known as dynamic availability) is a desirable property in a highly decentralized system such as a cryptocurrency. The downside of this protocol is that they are insecure during prolonged periods of network partition (asynchrony). This is unavoidable; miners in disconnected portions of the network keep extending their blockchains separately, unaware of the other chains. When synchrony resumes, one of the chains wins, which implies that blocks on the other chains get unconfirmed. These features are present not just in the longest-chain protocol, but also in other protocols that are derived from it, e.g., Prism \cite{bagaria2019prism}.

In stark contrast, committee-based consensus protocols like PBFT \cite{castro1999practical} and Hotstuff \cite{yin2019hotstuff} offer strong {\em finality} guarantees. These protocols remain safe even during periods of asynchrony, and regain liveness when synchrony resumes. Finality is also a desirable property for blockchains, as they may operate under conditions where synchrony cannot be guaranteed at all times. However, all finality-guaranteeing protocols come with a caveat: they are built for the permissioned setting. These protocols make progress only when enough number of votes have been accrued for each block. If a significant fraction of nodes become inactive (go offline), the protocol stalls completely. This prevents them from being adaptive. 

We posit that the two disparate class of protocols is a consequence of the CAP theorem, a famous impossibility result in distributed systems. The theorem states that in the presence of a {\em network partition}, a distributed system cannot guarantee both {\em consistency} (safety) and {\em availability} (liveness) \cite{gilbert2002brewer, gilbert2012perspectives}. The theorem has led to a classification of system designs, on the basis of whether they favor liveness or safety during network partitions. Blockchains, being distributed systems, inherit the trade-offs implicated by the CAP theorem. In particular, we see that the longest-chain class of protocols favor liveness while the committee-based protocols favor safety.

Recently, Lewis-Pye and Roughgarden \cite{lewis2020resource} prove a CAP theorem for blockchains that highlights the adaptivity-finality trade-off explicitly. They show that ``a fundamental dichotomy holds between protocols (such as Bitcoin) that are adaptive, in the sense that they can function given unpredictable levels of participation, and protocols (such as Algorand) that have certain finality properties". The essence of this impossibility result is the following: it is difficult to distinguish network asynchrony from a reduced number of participants in the blockchain system. Therefore, a protocol is bound to behave similarly under both these conditions. In particular, adaptive protocols must continue to extend blockchains during asynchrony (thereby compromising finality), while finality-guaranteeing protocols must stall under reduced participation (which means they cannot have adaptivity).

In this paper, we investigate whether the aforementioned trade-off between adaptivity and finality can be resolved at a user level, rather than at a system-wide level. In particular, we seek to build a blockchain system wherein all (honest) nodes follow a {\em common} block proposing mechanism, but different nodes can choose between two {\em different confirmation rules}. Under appropriate bounds on adversarial participation, one rule must guarantee adaptivity, while the other {must guarantee finality}. When the system is operating under desirable conditions, i.e., a large enough fraction of nodes are active and the network is synchronous, the blocks confirmed by both rules must coincide. Protocols with such a dual-confirmation rule (aka dual-ledger) design are termed as ``Ebb-and-flow" protocols in \cite{neu2020ebb}. Such a design would be of interest in blockchains for many reasons. For example, for low-value transactions such as buying a coffee, a node may prefer liveness over safety, whereas for high-value transactions, it is natural to choose safety over liveness. Moreover, such a trade-off allows each node to make their own assumptions about the state of the network and choose a confirmation rule appropriately. 

At a high level, we are inspired in our formulation from analogous designs to adapt the CAP theorem in practical distributed system settings (Section 4 of \cite{gilbert2012perspectives}). More concretely, dual-ledger designs do not fall under the purview of the CAP theorem of \cite{lewis2020resource}; the formulation assumes a single confirmation rule. A second motivation is that a variety of {\em finality gadgets} (in combination with a blockchain protocol) may also be viewed as providing a user-dependent dual ledger option. We elaborate on some recent proposals in \cite{DinsdaleYoung2020AfgjortAP,stewart2020grandpa, karakostas2020securing} in \S\ref{sec:related}.

\subsection{Our contributions} Our main contribution is to propose a new protocol, called the \textit{checkpointed longest chain} protocol, which offers each node in the same blockchain system a choice between two different confirmation rules that have different adaptivity-finality trade-offs. 
\begin{itemize}
    \item {\bf Block Proposal}. Just as in the longest chain protocol, honest miners build new blocks by extending the chain that they currently hold. In addition, some honest users participate in a separate \textit{checkpointing protocol}, that marks certain blocks as checkpoints at regular intervals. An honest user adopts the \textit{longest chain that contains the latest checkpoint}.
    \item {\bf Confirmation rules}. The protocol's \textit{adaptivity-guaranteeing} confirmation rule is simply the $k$-deep rule. An honest user confirms a block if it sees $k$ blocks below it, for an appropriate choice of $k$. The protocol's \textit{finality-guaranteeing} rule is for honest users to treat all blocks in its chain up to the last checkpointed block as confirmed.
\end{itemize}
The protocol is designed such that new blocks continue to be mined at increasing heights even if the participation level is low, but new checkpoints appear only if there is sufficient participation. This illustrates the adaptivity-guaranteeing property of the $k$-deep rule. On the other hand, checkpoints are guaranteed to be on a single chain irrespective of network conditions, while the longest chain containing the latest checkpoint may keep alternating between divergent chains. This implies the finality property of the checkpointing rule.

Below, we highlight the key design principles of our protocol, followed by the security guarantees.

\subsubsection{Validity of Blockchains} Our protocol keeps intact the intrinsic \textit{validity} of blockchains. Roughly, validity means that the set of blocks that are confirmed all lie on a single, monotonically increasing chain. Thus, the validity of a transaction can be inferred from just the blockchain leading up to the particular block. More precisely, an honest user that is constructing the block has the assurance that if the block is confirmed, all transactions in the block are confirmed, which means it accrues all the transaction rewards. Moreover, a user can infer that a certain transaction is confirmed simply by knowing that the transaction was included in a particular block and that the block was confirmed (as per either rule), without knowledge of the contents of other blocks. These properties are recognized to be important for the blockchain to be incentive-compatible and to enable light clients respectively. Further, validity of blockchains offers spam-resistance and is compatible with existing sharding designs. Although validity of blockchains is a common feature of a majority of protocols, recent high performance designs crucially decouple validation from consensus {\cite{bagaria2019prism,fitzi2018parallel}}. The Snap-and-Chat design of \cite{neu2020ebb} for the same problem also does not have this feature.

\begin{figure}[ht]
\begin{centering}
\includegraphics[scale=0.2]{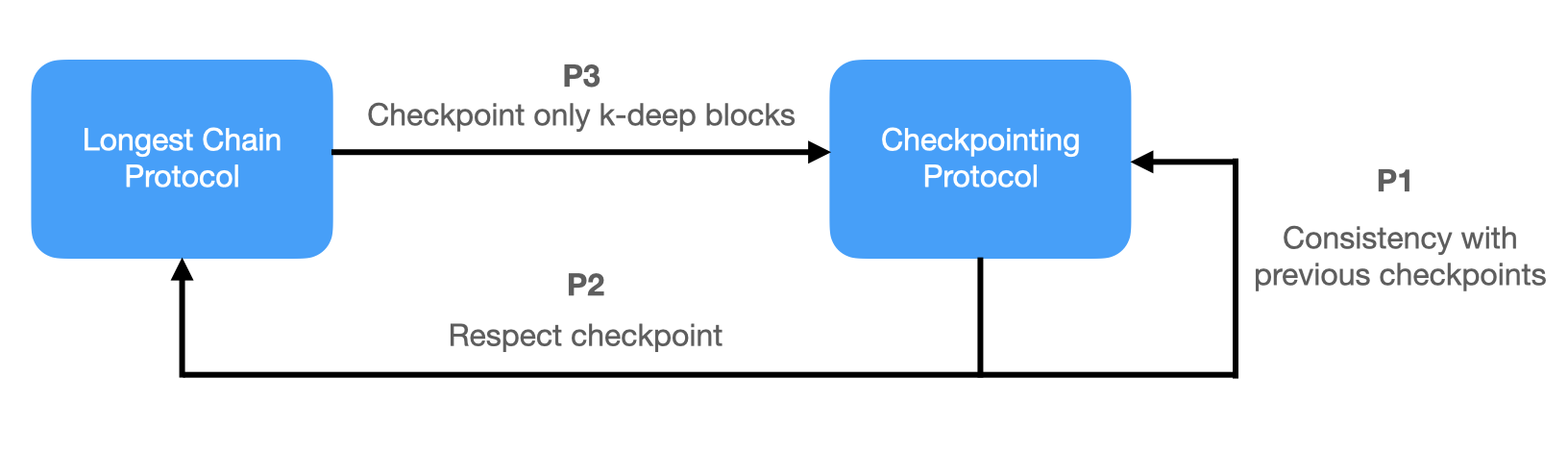}
\caption{The Checkpointed Longest Chain}
\end{centering}
\label{fig:1}
\end{figure}

\subsubsection{Interaction between Checkpointing and Longest Chain Protocol} In our design, the longest chain rule (i.e., the block proposal rule) and the checkpointing rule satisfy some constraints with respect to each other. The interlocking structure of constraints between the checkpointing protocol and the longest chain protocol is illustrated in Figure~\ref{fig:1} and detailed below.
\begin{itemize}
    \item \textbf{P1: Consistency with previous checkpoints.} The sequence of checkpoints must be on a chain. This places a self-consistency condition on the checkpointing protocol.
    \item \textbf{P2: Checkpointed longest chain rule.} The longest chain protocol respect the previous checkpoints. Honest users build adopt the longest chain that contains the latest checkpoint and mine new blocks at the tip of this chain.
    \item \textbf{P3: Checkpoints are deep enough.} {The checkpointing protocol should only checkpoint blocks that are sufficiently deep in some honestly held chain.} Furthermore, such a chain should be made available along with the checkpoint message.
\end{itemize}  

All these three conditions are required to achieve our goals. Condition {\bf P1} is a requisite for a validity preserving protocol (in particular, a validity preserving, finality-guaranteeing confirmation rule). The condition {\bf P2} says that honest nodes should adopt the longest checkpointed chain, rather than the longest chain. Given {\bf P1}, the  condition {\bf P2} is required to ensure new checkpoints are produced after a period of asynchrony. Without this condition {\bf P2}, the block proposal rule is simply the longest chain rule. During asynchrony, the longest chain may be one that does not include some of the checkpoints, and there is no correction mechanism to ever bring the longest chain downstream of the last checkpoint. Condition {\bf P3} ensures that during synchrony, any block that is eventually checkpointed would be a part of all honestly held chains. Thus, the dynamics of the protocol would be as if nodes are simply following the longest chain rule, and the known security guarantees would apply. If no such condition is placed, the checkpointing protocol could checkpoint an arbitrary block that possibly forks from the main chain by a large margin. The rule {\bf P2} would force honest nodes to switch to this fork, thereby violating safety of the $k$-deep rule.

\subsubsection{The Checkpointing Protocol} The checkpointing protocol in our design is a BFT consensus protocol. It is a slight modification of Algorand BA, that is presented in \cite{chen2018algorand}. The protocol is extended from a single-iteration byzantine agreement protocol to a multi-iteration checkpointing protocol. In each iteration, nodes run the checkpointing protocol to checkpoint a new block, and {\bf P1} guarantees that the sequence of checkpoints lie on a single chain. Coupled with the chain adoption rule (longest checkpointed chain), these checkpoints offer deterministic finality and safety against network partitions.

We state the safety guarantee of the checkpointing protocol as a basic checkpointing property (CP) as below:
\begin{itemize}
    \item \textbf{CP0: Safety.} All honest users checkpoint the same block in one iteration of the checkpointing protocol, even during network partition. This checkpoint lies on the same chain as all previous checkpoints.
\end{itemize}

While one might consider different protocols for the checkpointing mechanism, the protocol must further satisfy some key properties during periods of synchrony.
\begin{itemize}
    \item \textbf{CP1: Recency condition.} If a new block is checkpointed at some time, it must have been in a chain held by an honest user at some point in the \textit{recent past}.
    \item \textbf{CP2: Gap in checkpoints.} The interval between successive checkpoints must be large enough to allow {\bf CP1} to hold.
    \item \textbf{CP3: Conditional liveness.} If all honest nodes hold chains that have a common prefix (all but the last few blocks are common), then a new checkpoint will appear within a certain bounded time (i.e., there is an upper bound on the interval between two successive checkpoints).
\end{itemize}

{\bf CP1} helps in bounding the extent to which an honest user may have to drop honest blocks in its chain when it adopts a new checkpoint. Without this condition, (i.e., if arbitrarily old checkpoints can be issued), honest users may have to let go of many honestly mined blocks. This causes a loss in mining power, which we refer to as \textit{bleeding}. Our choice of Algorand BA was made because it has this property. Other natural candidates such as PBFT \cite{castro1999practical} and HotStuff \cite{yin2019hotstuff}, do not have this property. We discuss this in detail in Appendix \iftoggle{arxiv}{\ref{sec:discussion}}{D of the full paper \cite{sankagiri2020blockchain}}. {\bf CP2} ensures that the loss in honest mining power due to the above mechanism is limited. This condition can be incorporated by design. {\bf CP3} is a liveness property of the checkpointing protocol. It ensures new checkpoints appear at frequent intervals, leading to liveness of the finality-guaranteeing confirmation rule.

\subsubsection{Security Guarantees (Informal)} We show the following guarantees for proof-of-work longest chain along with Algorand BA (augmented with the appropriate validity condition).
\begin{enumerate}
    \item The $k$-deep rule is safe and live if the network is synchronous from the beginning and the fraction of adversaries among online users is less than a half.
    \item The checkpointing protocol, which is safe by design, is also live soon after the network partition is healed under the partially synchronous model. 
    \item The ledger from the checkpoint rule is a prefix of the ledger from the $k$-deep rule from the point of view of any user.
\end{enumerate}

\subsubsection{Proof Technique} We prove these security guarantees using the following strategy. First, we show that if all honest users hold chains that obey the $k$-common prefix condition for an extended period of time, then new blocks get checkpointed at regular intervals, and these blocks are part of the common prefix of the honestly held chains. This tells us that under conditions in which the vanilla longest chain rule is secure, the checkpointed longest chain rule has exactly the same dynamics. It therefore inherits the same security properties of the longest chain rule. Secondly, we show that once sufficient time has passed after a network partition is healed, the chains held by the honest user under the checkpointed longest chain rule are guaranteed to have the common prefix property. Coupled with the first result, we infer that new checkpoints will eventually be confirmed thereby proving liveness of the checkpoint-based confirmation rule under partial synchrony.

\subsubsection{Outline} We review related works in \S\ref{sec:related} and place our results in the context of several recent works on the same topic as this paper. The  similarities and differences in the techniques in our work with closely related works is pictorially illustrated in Figure~\ref{fig:related}. We state our network and security models formally in \S\ref{sec:model}. In \S\ref{sec:protocol}, we describe the checkpointed longest chain protocol, highlighting the roles of the miners and precisely stating the two confirmation rules. For clarity, we describe the checkpointing protocol as a black box with certain properties here; we give the full protocol in Appendix \ref{sec:checkpointing}. We state our main theorem, concerning the security guarantees of our protocol, in \S\ref{sec:mainresult}. In the rest of the section, we give a proof sketch of the theorem. The formal proofs are given \iftoggle{arxiv}{in Appendices \ref{sec:checkpointing_property} and \ref{sec:checkpointed_LC}}{in the full version of our paper \cite{sankagiri2020blockchain} in Appendices B and C}. We conclude the paper with a discussion and pointers for future work in \S\ref{sec:conclusion}.
\section{Related Work}
\label{sec:related}
\paragraph{CAP Theorem} The formal connection between the CAP theorem to blockchains is recently made in  \cite{lewis2020resource}, by  providing an abstract framework in which a wide class of blockchain protocols can be placed, including longest chain protocols (both Proof-of-Work based \cite{garay2015bitcoin} and PoS based, e.g., \cite{daian2019snow}) as well as BFT-style protocols (e.g., \cite{chen2016algorand}).  The main result of \cite{lewis2020resource} says that a protocol (containing a  block ``encoding" procedure and a block confirmation rule) which is adaptive (i.e., which remains live in an unsized setting) cannot offer finality (i.e., deterministic safety, even under arbitrary network conditions) and vice-versa. However, the same paper is mute on the topic of whether different block confirmation rules could offer different guarantees, which is the entire focus of this work. We are inspired in our formulation from analogous designs to adapt the CAP theorem in practical distributed system settings (Section 4 of \cite{gilbert2012perspectives}). 


\paragraph{User-Dependent Confirmation Rules} The idea of giving users the option of choosing their own confirmation rule, based on their beliefs about the network conditions and desired security level was pioneered by Nakamoto themself, via choosing the value of $k$ in the $k$-deep confirmation rule. A recent work \cite{malkhi2019flexible} allows users the option of choosing between partially synchronous confirmation rule and a synchronous one. However, {\em neither} of the confirmation rules are adaptive, as the block proposal rule itself is a committee-based one and cannot make progress once the participation is below a required level. Therefore, they do not ``break" the CAP theorem as proposed in \cite{lewis2020resource}. In contrast, our protocol offers users the choice of an adaptive system.

\paragraph{Hybrid Consensus} Hybrid consensus \cite{pass2017hybrid} is a different technique of incorporating a BFT protocol into the longest-chain protocol. In a nutshell, the idea in this design is to use the longest-chain protocol to randomly select a committee, which then executes a BFT protocol to confirm blocks. The committee consists of the miners of the blocks on the longest chain over a shifting interval of constant size, and is thus re-elected periodically. Hybrid consensus addresses the problem of building a responsive protocol (i.e., latency proportional to network delay) in a PoW setting. It does not, however, address the adaptivity-finality dilemma. In fact, as we illustrate here, the protocol offers neither adaptivity nor finality. Once a miner is elected as a committee member, it is obliged to stay active until the period of its committee is over, which compromises on the adaptivity property. As for finality, although the blocks are confirmed by a finality-based protocol, the committee election mechanism compromises the finality guarantee. During asynchrony, nodes cannot achieve consensus on who belongs to the committee, and thus the whole protocol loses safety.

\begin{figure}
\begin{centering}
\includegraphics[scale=0.16]{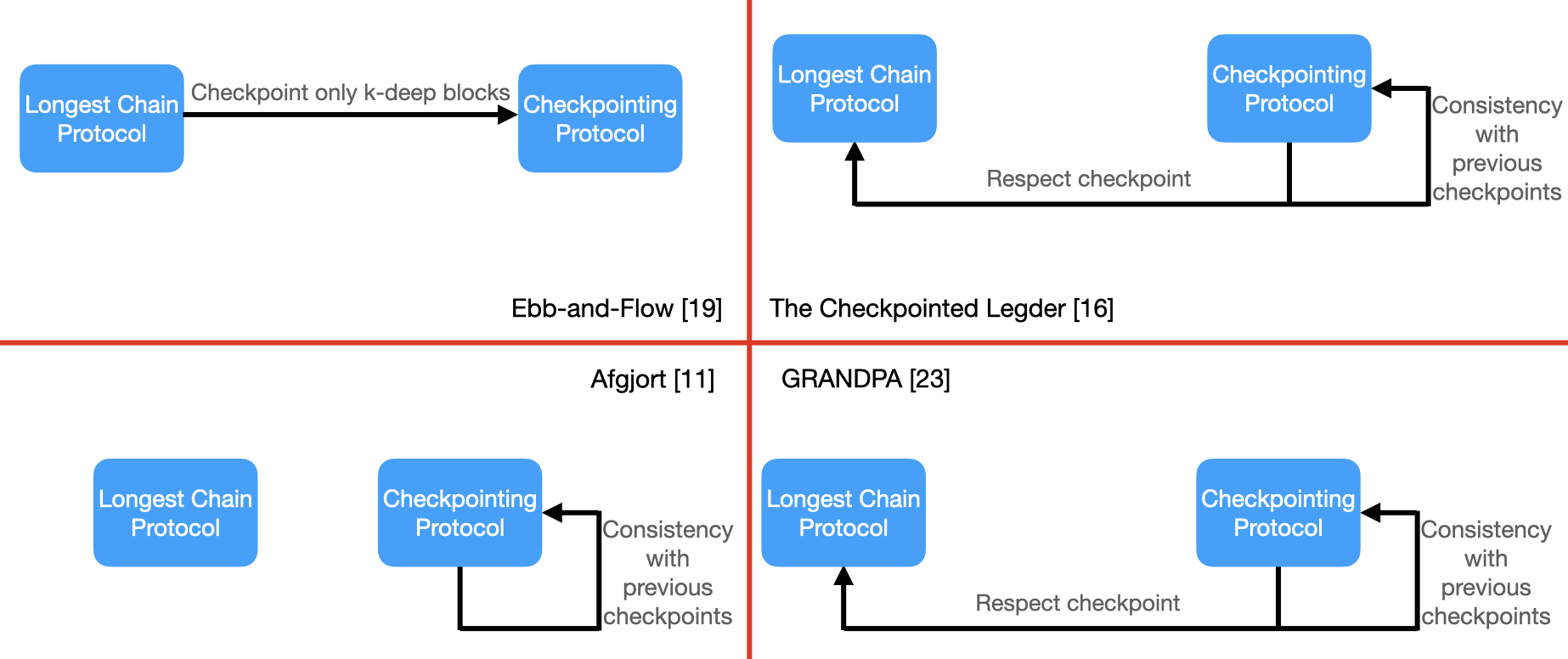}
\caption{Four related works in our framework (counterclockwise): In Ebb-and-Flow \cite{neu2020ebb}, previous checkpoints are not respected by either the longest chain protocol or the checkpointing protocol. Hence checkpoints may not form a single chain and the final ledger is constructed by sanitization, which makes it impossible to have validation before block proposal. Afgjort \cite{DinsdaleYoung2020AfgjortAP} takes an off-the-shelf longest chain protocol and keeps checkpointing blocks on it. Due to lack of interaction between the longest chain protocol and the checkpointing protocol, the protocol may fail in a partially synchronous setting when the security of the blockchain is broken. The Checkpointed Ledger \cite{karakostas2020securing} and GRANDPA \cite{stewart2020grandpa} are very similar to our design; the difference is that the checkpoints don't need to be buried deep enough in the longest chain. This minor change makes their protocols become insecure under the variable participation setting.}
\end{centering}
\label{fig:related}
\end{figure}

\paragraph{Checkpointing} is a protocol run on top of the longest chain protocol (run by largely honest parties) that deterministically marks certain blocks as finalized; a simple form of checkpointing was  pioneered, and maintained until 2014, by Satoshi Nakamoto (presumably honest). In the context of our work, checkpointing provides finality (essentially by fiat) while the longest chain rule ensures adaptivity. 
We can interpret our paper as an attempt to provide appropriate conditions on the interaction between the checkpointing and the longest chain protocol, and also to show how to realize such checkpointing in a distributed manner. We note that \cite{karakostas2020securing} is a  recent work that studies the checkpointed ledger when the longest-chain protocol has super-majority adversaries; this regime is different from that studied in the present paper, where we assume that the longest-chain protocol has majority honest and the checkpointing protocol has $2/3$ honest users. For completeness, we have included the architecture of checkpointed ledger in the comparison in Figure~\ref{fig:related}.

\paragraph{Casper FFG and Gasper} Casper FFG ({\it Casper: the Friendly Finality Gadget}), presented in \cite{buterin2017casper}, pioneered the study of finality gadgets. Casper FFG also introduced the notion of  ``economic security". The finality gadget allows cryptographic proofs of malicious behavior which can be used to disincentivize such behavior. We do not explore this aspect of finality gadgets in our paper. Casper FFG, as presented in \cite{buterin2017casper}, is not a completely specified protocol. A recent follow-up paper called Gasper \cite{buterin2020combining} provides a complete protocol. Gasper can be viewed as a checkpointed longest chain protocol, with the longest chain rule replaced by the {\it Latest Message Driven Greediest Heaviest Observed Subtree} (LMD GHOST) rule and the checkpointing protocol being  Casper FFG.  Gasper does not provide formal security guarantees; in fact, the work by Neu et al. \cite{neu2020ebb} shows a liveness attack on Gasper.

\paragraph{Afgjort}  Afgjort \cite{DinsdaleYoung2020AfgjortAP} is a recent finality gadget proposed by Dinsdale et al., which formally describes the desirable properties that a finality gadget must have, some of them are similar to the ones we require as well. Afgjort has a two-layer design: it takes an off-the-shelf longest chain protocol and adds a finality layer to it. Such a system can work as desired when the network is synchronous, including when the participation levels are variable. However, it is destined to fail in one of two ways in a partially synchronous setting: either the finality gadget stops finalizing new blocks (i.e., violates liveness), or it finalizes ``conflicting" blocks, i.e., blocks in two different chains (i.e., violates validity). Thus Afgjort does not provide the guarantees we seek in this work. More generally, any layer-two finality gadget on top of an adaptive protocol would not meet the desired objectives for the same reason; they would be missing rule {\bf P2} (see Figure \ref{fig:related}).

\paragraph{GRANDPA}  GRANDPA \cite{stewart2020grandpa} is another recent finality gadget, proposed by Stewart et al. Just as in our protocol (and unlike Afgjort), GRANDPA alters the underlying block production mechanism to respect blocks that has been finalized by the finality gadget. GRANDPA's security relies on the assumption that the block-production mechanism (longest-chain rule/GHOST) provides a type of ``conditional eventual consensus". I.e., If nodes keep building on the last finalized block and don’t finalize any new blocks, then eventually they have consensus on a longer chain. This argument, though presumably true, is not a rigorous security analysis; in our work, we prove security for a completely specified model and protocol.  

More importantly, comparing Figures \ref{fig:1} and \ref{fig:related}, we see that GRANDPA does not have property {\bf P3}. This leads to a security vulnerability in the variable participation setting. The issue arises due to the following subtlety: in the variable participation setting, a particular block could be `locked' onto by the checkpointing protocol, but could be checkpointed much later (this does not happen under full participation). A block at the tip of the longest chain could soon be displaced from it by the adversary. If this block is locked while it is on the chain, but checkpointed only when it is out of the chain, the safety of the $k$-deep rule will be violated. A more detailed description of this attack is given in Appendix \iftoggle{arxiv}{\ref{sec:discussion}}{D of the full version of our paper \cite{sankagiri2020blockchain}}. 

\paragraph{Snap-and-Chat protocols} A concurrent work by Neu et al. \cite{neu2020ebb} also solves the adaptivity vs finality dilemma posed by the CAP theorem. (Their work appeared online a few weeks prior to ours).
They introduce a technique, called Snap-and-Chat, to combine a longest chain protocol with any BFT-style one to obtain a protocol that has the properties we desire. More specifically, their protocol produces two different ledgers (ordered list of transactions), one that offers adaptivity and the other that offers finality. A user can choose to pick either one, depending on its belief about the network condition. These ledgers are proven to be consistent with each other (the finality guaranteeing ledger is a prefix of the adaptive one).  While the outcomes of the design of \cite{neu2020ebb} are analogous to ours, the approach adopted and the resulting blockchain protocol are quite different. 
A major advantage of the approach of \cite{neu2020ebb} is its generality: unlike other existing approaches, it offers a blackbox construction and proof which can be used to combine a variety of adaptive and finality preserving protocols. However, the design of \cite{neu2020ebb} has an important caveat: the sequence of blocks that are confirmed in their protocol do not necessarily form a single chain. Indeed, in the partially synchronous setting, the finality-guaranteeing confirmation rule can finalize blocks on two different forks one after the other. The design of \cite{neu2020ebb} overcomes this apparent issue by constructing a ledger after the blocks have been finalized, and ``sanitizing" it at a later step. 

The sanitization step implies that not all transactions in a confirmed block may be part of the ledger: for example, if it is a double spend relative to a transaction occurring earlier in the ledger, such transaction will be removed. Put differently, the validity of a particular transaction in a particular block is decided only once the block is confirmed, and not when the block is proposed. Most practical blockchain systems are designed with coupled validation, i.e., an honest block proposer can ensure that blocks contain only valid transactions. Our approach maintains coupled validity of blockchains and does not have this shortcoming, as highlighted in Section \S\ref{sec:intro}. 
\section{Security Model}
\label{sec:model}

\noindent {\bf Environment.} A blockchain protocol $\Pi$ is directed by an \emph{environment} $\Z(1^\kappa)$, where $\kappa$ is the security parameter, i.e., the length of the hash function output.
This environment
$(i)$ initiates a set of participating nodes $\N$;
$(ii)$ manages a public-key infrastructure (PKI) and assigns each node with a unique cryptographic identity;
$(iii)$ manages nodes through an adversary $\A$ which \emph{corrupts} a subset of nodes before the protocol execution starts;
$(iv)$ manages all accesses of each node from/to the environment including broadcasting and receiving messages.

\noindent {\bf Network model.}  The nodes’ individual timers do not need to be synchronized or almost synchronized. We only require they have the same speed. In our network, we have a variety of messages, including blocks, votes, etc. The following message delay bounds apply to all messages. Secondly, all messages sent by honest nodes are broadcast messages. Thirdly, all honest nodes re-broadcast any message they have heard. The adversary does not suffer any message delay. In general, we assume that the adversary controls the order of delivery of messages, but the end-to-end network delay between any two honest nodes is subject to some further constraints that are specified below. We operate under either one of the two settings specified below: 
\begin{itemize}
    \item {\bf M1 (Partial synchrony model):} A global stabilization time, \textsf{GST}, is chosen by the adversary, unknown to the honest nodes and also to the protocol designer. Before \textsf{GST}, the adversary can delay messages arbitrarily. After \textsf{GST}, all messages sent between honest nodes are delivered within $\Delta$ time. Moreover, messages sent by honest nodes before \textsf{GST} are also delivered by \textsf{GST} + $\Delta$.
    \item {\bf M2 (Synchrony model):} All messages sent from one honest node to another are delivered within $\Delta$ time. 
\end{itemize}

\noindent {\bf Participation model.} We introduce the notion of \textit{sized/unsized} number of nodes to capture the notion that  node network activity (online or offline) varies  over time. It is important to note that we allow for adversarial nodes to go offline as well. A node can come online and go offline at any time during the protocol. An online honest node always executes the protocol faithfully. An offline node, be it honest or adversarial, does not send any messages. Messages that are scheduled to be delivered during the time when the node is offline are delivered immediately after the node comes online again. We consider two different scenarios:
\begin{itemize}
    \item {\bf U1 (Sized setting):} All nodes stay online at all times.
    \item {\bf U2 (Unsized setting):} Nodes can come online and offline at arbitrary times, at the discretion of the adversary, provided a certain minimal number of nodes stay online. 
\end{itemize}

\noindent {\bf Abstract protocol model.} We describe here our protocol in the abstract framework that was developed in \cite{lewis2020resource}. In this framework, a blockchain protocol is specified as a tuple $\Pi = (I,O,C)$, where $I$ denotes the {\it instruction set}, $O$ is an {\it oracle} that abstracts out the leader election process, and $C$ is a confirmation rule (e.g., the $k$-deep confirmation rule). The bounds on adversarial ratios are also specified through $O$. The rationale for adopting this framework is to point out precisely how we circumvent the CAP theorem for blockchains, which is the main result of \cite{lewis2020resource}. 

The theorem states that no protocol $\Pi = (I,O,C)$ can simultaneously offer both finality (safety in the partially synchronous setting) and adaptivity (liveness in the unsized setting). Our checkpointed longest chain protocol is a $4$-tuple $\Pi_{\rm CLC} = (I,O,C_1,C_2)$, with two confirmation rules $C_1$ and $C_2$. This entire protocol can be split at a node level into two protocols: $\Pi_{\rm fin}  = (I,O,C_1)$ and $\Pi_{\rm ada}  = (I,O,C_2)$. The protocol $\Pi_{\rm fin}$ guarantees (deterministic) safety and (probabilistic) liveness under network assumption ${\rm M1}$ and participation level ${\rm U1}$, just as many BFT-type consensus protocols do. It therefore guarantees finality. $\Pi_{\rm ada}$ guarantees safety and liveness under network assumption $M2$ and participation level $U2$, just as longest chain protocols do. It thereby guarantees adaptivity. Finally, both $\Pi_{\rm fin}$ and $\Pi_{\rm ada}$ are safe and live under conditions $M2$ and $U1$; moreover, the set of blocks confirmed by $\Pi_{\rm fin}$ at any time is a subset of those confirmed by $\Pi_{\rm ada}$. A similar property is proven in \cite{neu2020ebb}.

Each node can choose one of the two confirmation rules according to their demands and assumptions. We specify the exact specifications of $I, O, C_1$ and $C_2$ and the associated security properties in the next section.
\section{Protocol Description}
\label{sec:protocol}
\noindent {\bf Nodes in the protocol.} In our protocol, let $\N$ denote the set of all participating nodes. Among these, there are two subsets of participating nodes $\N_1$ and $\N_2$. {We call nodes in $\N_1$ {\em miners}, whose role is to propose new blocks. We call nodes in $\N_2$ {\em checkpointers}, whose role is to vote for blocks on the blockchains they currently hold and mark them as {\em checkpoints}. Note that we allow for any relation among these sets, including the possibility that all nodes play both roles ($\N_1 = \N_2 = \N$), or the two kinds of nodes are disjoint ($\N_1 \cap \N_2 = \phi$)}. The instruction set $I$ and the oracle $O$ is different for miners and checkpointers, and is specified below. Note that in the sized/unsized setting, both $\N_1$ and $\N_2$ are sized/unsized.

Before we specify the protocol, we introduce some terminology pertaining to checkpoints. {A {\em checkpoint certificate} is a set of votes from at least $2/3$ of the checkpointers for a certain block, that certifies that the block is a checkpoint.} When an honest node receives both a checkpoint certificate and a chain that contains the block being checkpointed, we say that the honest node has heard of a checkpoint. We say that a checkpoint \textit{appears at time $t$} if the $t$ is the first time that an honest node hears of it. 

\noindent {\bf $I$ for all nodes.} Every honest node holds a single blockchain at all times, which may be updated upon receiving new messages. They all follow the {\it checkpointed longest chain rule}, which states that a node selects the longest chain \textit{that extends the last checkpointed block} it has heard of so far. Ties are broken by the adversary. To elaborate, a node keeps track of the last checkpoint it has heard of so far. If a node receives a longer chain that includes the last checkpoint, it adopts the received new chain. A node ignores all chains which do not contain the last checkpoint block as per their knowledge. 

\noindent {\bf $I, O$ for miners.} An honest miner adopts the tip of its checkpointed longest chain as the parent block. The miner forms a new block with a {\it digest} of the parent block and all transactions in its buffer, and broadcasts it when it is chosen as a leader by the oracle. The oracle $O$ chooses miners as leaders at random intervals, that can be modeled as a Poisson process with rate $\lambda$. This joint leader election process can be further split into two independent Poisson processes. Let the fraction of online adversarial miners be $\beta < 1/2$. The honest blocks arrive as a Poisson process of rate $(1-\beta)\lambda$, while the adversarial blocks arrive as an independent Poisson process of rate $\beta \lambda$.

\noindent {\bf $I, O$ for checkpointers.} The checkpointers run a multi-iteration Byzantine Agreement (BA) protocol $\Pi_{\rm BA}$ to checkpoint blocks, with each iteration checkpointing one block. Our protocol is a slight variant of the Algorand BA protocol from \cite{chen2018algorand}. The complete protocol is described in Appendix \ref{sec:checkpointing} with a minor modification from Algorand, which is highlighted in red. This modification is made so as to enable a consistency check across iterations without losing liveness (i.e., to get a multi-iteration BA from a single iteration one). 

Briefly, the protocol works as follows. Each iteration is split into {\em periods}. Each period has a unique leader chosen by the oracle. {Nominally, the values on which consensus is to be achieved amongst all checkpointers are the blockchains held by the checkpointers. In practice, the values may be the hashes of the last block of the blockchain.} A key difference from the Algorand protocol is that we allow the checkpointers to change their values at the beginning of each period. The checkpointers aim to achieve consensus amongst these values. The final chain that has been agreed upon is broadcast to all honest nodes (not just checkpointers) together with a certificate. The block that is exactly $k$-deep in this chain is chosen as the checkpoint block in current iteration. Here, $k$ is a parameter of the protocol $\Pi_{\rm CLC}$, and is chosen to be $\Theta(\kappa)$. We call it the \textit{checkpoint depth parameter}. 

\subsubsection{Confirmation rules} We propose the following two confirmation rules for $\Pi_{\rm CLC}$, which have different security guarantees under different assumptions. Either rule can be adopted by any node in the protocol
\begin{itemize}
    \item $C_1$: Confirm the chain up to the last known checkpoint.
    \item $C_2$: Confirm all but the last $k'$ blocks in the checkpointed longest chain, i.e., in the chain that is currently held, where $k' = \Theta(\kappa)$
\end{itemize}
Note that every honest node may choose any value of $k'$ that they wish.

\subsubsection{Intuition behind the checkpointing protocol}
The checkpointing protocol is designed such that under optimistic conditions, a checkpoint is achieved within one period. These optimistic conditions are that the leader of a period is honest, the network is synchronous, and all chains held by honest checkpointers satisfy the common prefix property. The sequence of leaders for this protocol is guaranteed to be chosen in an i.i.d. fashion amongst online checkpointers by the oracle $O$. We assume that the fraction of adversarial checkpointers is $\beta' < 1/3$. Thus, the probability of selecting an adversarial leader at any round is $< 1/3$.

We now state some key properties of the checkpointing protocol. These properties, {\bf CP0, CP1, CP2, CP3} were introduced in \S\ref{sec:intro}  and  are elaborated upon below.
\begin{itemize}
    \item \textbf{CP0: Safety.} All honest nodes checkpoint the same block in one iteration of the checkpointing protocol, even during network partition. This checkpoint lies on the same chain as all previous checkpoints. This is a safety property of the checkpointing protocol, which will be essential in guaranteeing the safety of $\Pi_{\rm fin}$.
    \item \textbf{CP1: Recency condition.} If a new block is checkpointed at some time $t$ by an honest node for the first time, it must have been in a chain held by an honest node at some time $t' \geq t - d$. Here, $d$ is the \textit{recency} parameter of the protocol, and is of the order $O(\sqrt{\kappa} \Delta)$. 
    \item \textbf{CP2: Gap in checkpoints.} If a new block is checkpointed at some time $t$ by an honest node for the first time, then the next iteration of the checkpointing protocol (to decide the next checkpoint) will begin at time $t + e$. Here, $e$ is the \textit{inter-checkpoint interval} and is chosen such that $e \gg d$.
    \item \textbf{CP3: Conditional liveness.} If all honest nodes hold chains that are self-consistent (they satisfy $k$-common prefix property), during an iteration of the checkpointing protocol, then the checkpointing protocol finishes within $O(\Delta)$ time. Thus, in a period where all honest nodes hold chains that are self-consistent, checkpoints appear within $e + O(\Delta)$ time of each other.
\end{itemize}

In Appendix \iftoggle{arxiv}{\ref{sec:checkpointing_property}}{B of the full version \cite{sankagiri2020blockchain}}, we prove that the checkpointing protocol we use (modified Algorand BA) satisfies these properties. As such, any protocol that satisfies {\bf CP0 - CP3} can be used in our design.

\section{Main Result}
\label{sec:mainresult}

To state the main security result of our protocol, recall the notations set in \S\ref{sec:model}. The complete protocol is denoted by $\Pi_{\rm CLC} = (I,O,C_1,C_2)$ and its two user-dependent variations are $\Pi_{\rm fin}  = (I,O,C_1)$ and $\Pi_{\rm ada}  = (I,O,C_2)$. We also recall the notation ${\rm M1, M2}$ for the partially synchronous and synchronous network model, and ${\rm U1}$, ${\rm U2}$ for the sized and unsized participation models. Clearly, ${\rm M1}$ is a more general setting than ${\rm M2}$ and ${\rm U2}$ is a more general setting than ${\rm U1}$. In Theorem \ref{thm:main}, a result stated for a general setting also applies to the more restricted setting, but not vice-versa. 
We first define the notion of safety and liveness that we use for our protocols.
\begin{definition}[Safety]\label{def:safety}
A blockchain protocol $\Pi$ is safe if the set of blocks confirmed during the execution is a non-decreasing set. Put differently, $\Pi$ is safe if a block once confirmed by the confirmation rule remains confirmed for all time thereafter. 
\end{definition}

\begin{definition}[Liveness]\label{def:liveness}
A blockchain protocol $\Pi$ is live if there exists constants $c, c' > 0$ s.t. the number of new honest blocks confirmed in any interval $[r, s]$ is at least $\lfloor c(s-r) - c'\rfloor$.
\end{definition}
We now state our main theorem, concerning the safety and liveness of $\Pi_{\rm fin}$ and $\Pi_{\rm ada}$.
\begin{theorem}
\label{thm:main}
{Assume the fraction of adversarial mining power among total honest mining power is bounded by $\beta < 1/2$. Further, assume the fraction of adversarial checkpointers is always less than $1/3$.} Then, the protocol $\Pi_{\rm CLC}$ has the following security guarantees for an execution that runs for a duration of $T_{\rm max} = O({\rm poly}(\kappa))$ time, where $\kappa$ is the security parameter, :
\begin{itemize}
    \item \textbf{Security of $\Pi_{\rm fin}$:} The protocol $\Pi_{\rm fin}$ is safe, and is live after $O(\textsf{GST} + \kappa)$ time in the setting (${\rm M1, U1}$), except with probability negligible in $\kappa$.
    \item \textbf{Security of $\Pi_{\rm ada}$:} The protocol $\Pi_{\rm ada}$ is safe and live in the setting (${\rm M2, U2}$), except with probability negligible in $\kappa$.
    \item \textbf{Nested Protocols:} {At any time $t$}, the set of blocks confirmed by $\Pi_{\rm ada}$ is a superset of the set of blocks confirmed by $\Pi_{\rm fin}$ in all settings.
\end{itemize}
\end{theorem}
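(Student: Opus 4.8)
I would prove the three bullets in order, front-loading the parts that follow almost directly from the black-box properties. \emph{Safety of $\Pi_{\rm fin}$:} by \textbf{CP0}, in every iteration all honest nodes checkpoint the same block and successive checkpoints lie on one chain; since a checkpoint certificate needs $2/3$ of the checkpointers, an adversary with $\beta'<1/3$ cannot produce a conflicting one, so from any node's local view the ``last known checkpoint'' only advances along this fixed chain and the $C_1$-confirmed set is non-decreasing --- with no assumption on the network or participation. \emph{Nesting:} the checkpointed longest chain rule forces a node's chain to extend its last known checkpoint, and that checkpoint was $k$-deep in the chain agreed by the checkpointers, hence it sits at depth $\ge k$ in the (at-least-as-long) held chain, and it stays there since the node only ever switches to longer chains extending it or a later checkpoint. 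Provided $k'\le k$ (both $\Theta(\kappa)$), the prefix up to the last checkpoint is therefore contained in ``all but the last $k'$ blocks'' at every time and for every node, which is the claim.

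\textbf{The structural lemma (the heart).} The key step is: whenever all honest nodes have, throughout a long enough window ending at the present, held chains satisfying the $k$-common-prefix property, the constraint \textbf{P2} is slack --- every checkpoint that has appeared is already a prefix block of all honest chains --- so honest miners behave exactly as under the vanilla longest chain rule, while by \textbf{CP1}--\textbf{CP3} new checkpoints keep appearing, each $k$-deep in a then-honest chain. The apparent circularity (a good common prefix is needed to place checkpoints safely, and safely placed checkpoints are needed so that adopting them does not break common prefix) I would resolve by induction over time: assuming the longest-chain backbone properties (chain growth, chain quality, $k$-common prefix) have held up to time $t$, any checkpoint appearing by $t$ was by \textbf{CP1} $k$-deep in an honest chain within the last $d$ time, and since $d\ll e$ while $k=\Theta(\kappa)$, the induction hypothesis puts that block in every honest chain permanently, so no honest miner is ever pulled off its chain; hence on a short interval $[t,t+\delta]$ the execution is literally that of the longest chain protocol and the backbone properties propagate, except with probability negligible in $\kappa$, uniformly over the $T_{\rm max}=\mathrm{poly}(\kappa)$ horizon by a union bound.

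\textbf{Deducing the two security statements.} For $\Pi_{\rm ada}$ under $(\mathrm{M2,U2})$, synchrony holds from the outset so the induction is seeded at genesis: the checkpointed longest chain is the vanilla longest chain throughout, and the known adaptivity of the longest chain rule (safety and liveness in the unsized setting for $\beta<1/2$) together with $k'=\Theta(\kappa)$ yields safety and liveness of $C_2$. Liveness of $\Pi_{\rm fin}$ under $(\mathrm{M1,U1})$ is the main obstacle. Before $\textsf{GST}$ honest chains may diverge arbitrarily (checkpoints remain safe, but the longest chain extending the last checkpoint can differ across nodes), so I must show the $k$-common-prefix property is re-established within $O(\textsf{GST}+\kappa)$ time. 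This is a longest-chain convergence argument corrected for ``bleeding'': adopting a new checkpoint can force an honest node to discard some honest blocks, but \textbf{CP1}--\textbf{CP2} bound this to at most the honest blocks of the last $d$ time, at most once per $e\gg d$, so the effective honest block rate stays a constant factor above the adversary's and the pre-$\textsf{GST}$ lead is overcome in time linear in $\textsf{GST}$ plus $\Theta(\kappa)$. Once common prefix holds, the structural lemma applies and \textbf{CP3} makes each checkpointing iteration terminate in $O(\Delta)$ --- in $O(1)$ periods in expectation, hence $O(\log\kappa)$ periods except with negligible probability since $\beta'<1/3$ --- so a fresh checkpoint burying new honest blocks (by chain quality) appears every $e+O(\Delta)$, giving liveness of $C_1$ with constants as in Definition~\ref{def:liveness}.
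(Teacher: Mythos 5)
Your proposal follows essentially the same route as the paper: safety of $\Pi_{\rm fin}$ from \textbf{CP0} plus the checkpoint-respecting chain rule, security of $\Pi_{\rm ada}$ and the nesting property from \textbf{P2}/\textbf{P3} via reduction to the vanilla longest chain, and liveness of $\Pi_{\rm fin}$ by re-establishing the $k$-common-prefix after $\textsf{GST}$ with the ``bleeding'' bounded through \textbf{CP1}/\textbf{CP2} (at most $\approx d$ worth of honest work lost per inter-checkpoint interval of length $e\gg d$) and checkpoint liveness from \textbf{CP3} --- exactly the decomposition in the paper's proof sketch and its Appendix C analysis. The only piece you omit is the other half of the paper's proof, namely verifying that the modified Algorand BA itself satisfies \textbf{CP0}--\textbf{CP3} (the paper's Appendix B); you invoke these as black-box properties, which matches the paper's modular structure but is still required for the theorem as stated about $\Pi_{\rm CLC}$.
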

We provide a proof sketch for Theorem \ref{thm:main} below. A detailed proof is relegated to Appendices \iftoggle{arxiv}{\ref{sec:checkpointing_property} and \ref{sec:checkpointed_LC}}{B and C of the full version of our paper \cite{sankagiri2020blockchain}}.

\subsubsection{Proof sketch} 
Our proof for Theorem \ref{thm:main} can be split into two parts. First, Appendix \iftoggle{arxiv}{\ref{sec:checkpointing_property}}{B of the full version \cite{sankagiri2020blockchain}}, we show that our checkpointing protocol $\Pi_{BA}$ satisfies the four checkpointing properties given in \S\ref{sec:protocol}, namely \textbf{CP0, CP1, CP2, CP3}. Then, in Appendix \iftoggle{arxiv}{\ref{sec:checkpointed_LC}}{C of the full version \cite{sankagiri2020blockchain}}, we show that our complete protocol $\Pi_{\rm CLC}$ satisfies the desired security properties if it uses \textit{any} sub-protocol for checkpointing that satisfies the above properties.
\begin{figure}[htbp]
    \centering
    \includegraphics[width=\linewidth]{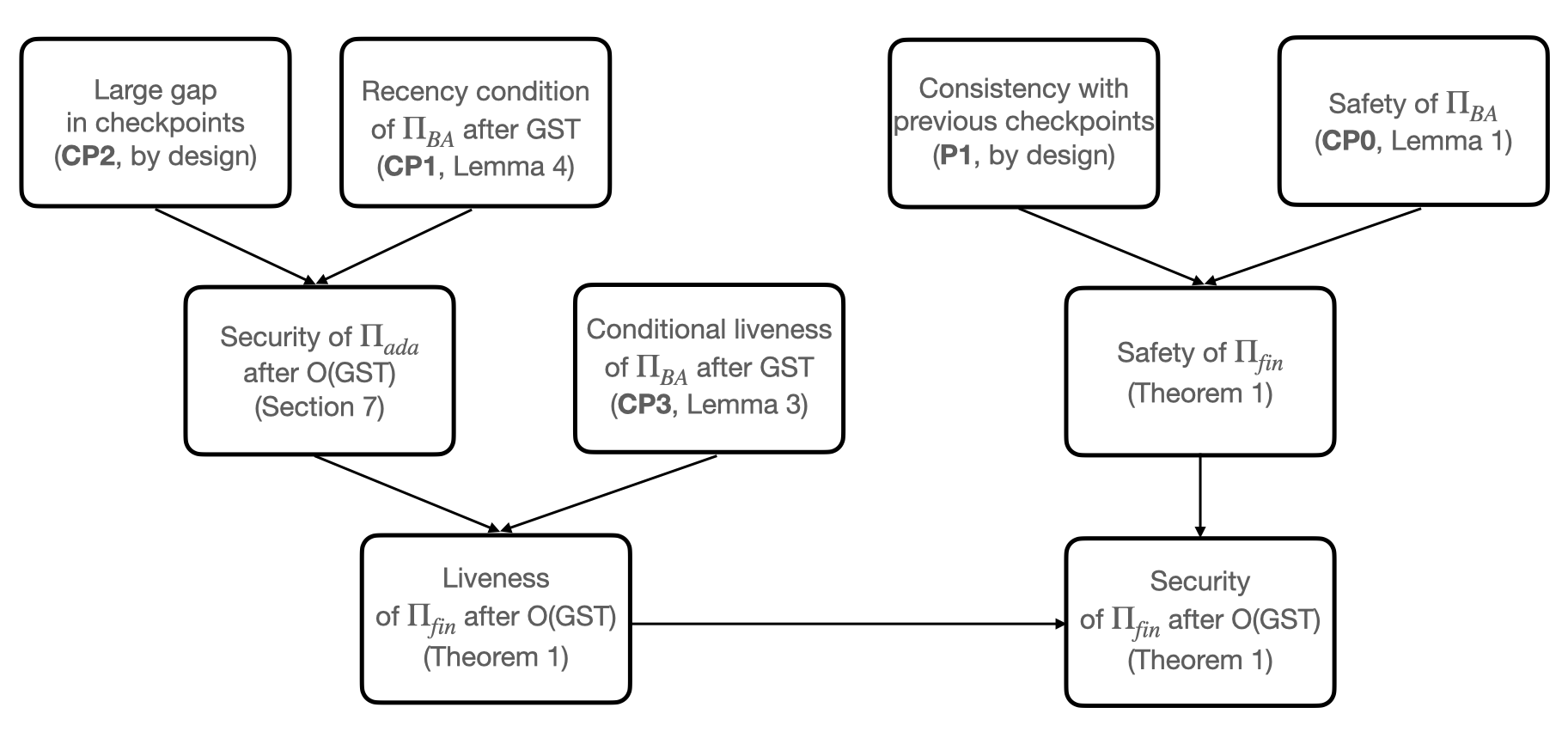}
    \caption{Flowchart for proving the security property of $\Pi_{\rm fin}$ under setting (${\rm M1, U1}$)}
    \label{fig:proof_map1}
\end{figure}
\begin{figure}[htbp]
    \centering
    \includegraphics[width=0.5\linewidth]{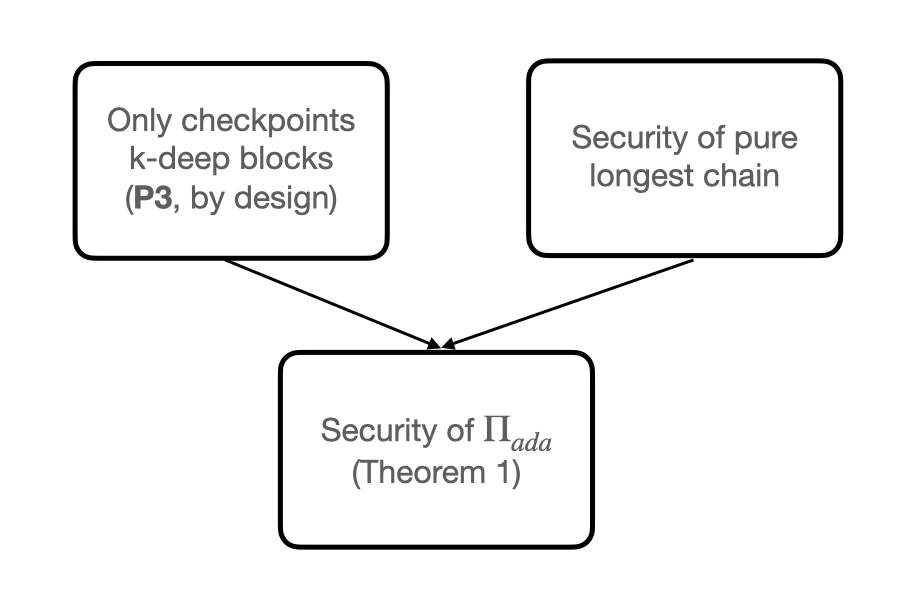}
    \caption{Flowchart for proving the security property of $\Pi_{\rm ada}$ under setting (${\rm M2, U2}$)}
    \label{fig:proof_map2}
\end{figure}
\begin{figure}[htbp]
    \centering
    \includegraphics[width=0.5\linewidth]{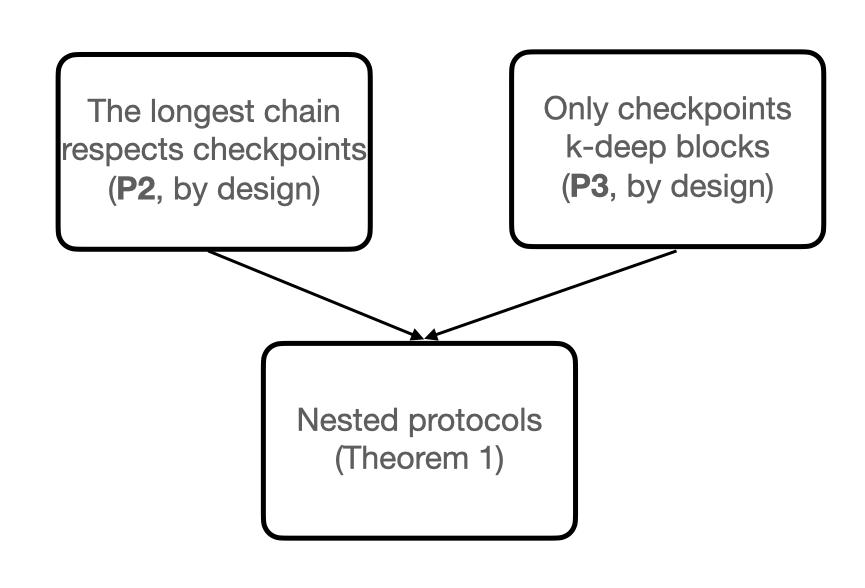}
    \caption{Flowchart for proving the nesting property of $\Pi_{\rm fin}$ and $\Pi_{\rm ada}$}
    \label{fig:proof_map3}
\end{figure}
We first highlight some aspects of Theorem \ref{thm:main} that are straightforward to deduce, assuming that $\Pi_{BA}$ does satisfy the aforementioned properties.

Firstly, the safety of $\Pi_{\rm fin}$ under setting (${\rm M1, U1}$) can be deduced from two facts: 1) ({\bf CP0}) safety of $\Pi_{BA}$ holds under setting (${\rm M1, U1}$), and 2) ({\bf P1}) the checkpointed longest chain rule respects all previous checkpoints (see Fig. \ref{fig:proof_map1}). Specially, {\bf CP0} of $\Pi_{BA}$ ensure that all checkpoints are uniquely decided by all honest nodes and they lie on a single chain by {\bf P1}, while the checkpointed longest chain protocol instructs honest nodes to always adopt the longest chain with the last checkpoint block. Therefore, an honest node will always keep every checkpoint block it has seen forever. 

Secondly, the security of $\Pi_{\rm ada}$ under setting (${\rm M2, U2}$) are deduced by two facts: 1) security of pure longest chain rule; 2) ({\bf P3}) the checkpointing protocol respects the $k$-deep rule (see Fig. \ref{fig:proof_map2}). {\bf P3} ensures that every checkpoint must be at least $k$-deep in the chain of some honest node when it is decided by $\Pi_{BA}$, while a block that is $k$-deep should remain in the longest chain of all honest nodes forever with high probability under synchrony \cite{pass2017analysis,ren2019analysis,dembo2020everything}. Therefore, every checkpoint will already be present in all honest nodes' chains when it appears. Thus, the chains held by the nodes of the protocol are indistinguishable from the case where they are following the pure longest chain protocol. Given that the confirmation rule is also the same for both protocols ($k$-deep rule), $\Pi_{\rm ada}$ inherits the safety and liveness properties of the pure longest chain protocol. 

Thirdly, the nesting property of the confirmed blocks also follow immediately by design, given that any checkpointed block must have been already $k$-deep in a node's chain ({\bf P3}) and the checkpointed longest chain is defined as the longest chain that contains the latest checkpoint ({\bf P2}), (see Fig. \ref{fig:proof_map3}).

It remains to prove the liveness guarantee of $\Pi_{\rm fin}$ under setting (${\rm M1, U1}$). In Appendix \iftoggle{arxiv}{\ref{sec:checkpointing_property}}{B (see full version \cite{sankagiri2020blockchain})}, we evaluate the safety and liveness of our checkpointing protocol $\Pi_{BA}$ -- a modified version of Algorand BA. We show that it satisfies the desired checkpointing properties listed in \S\ref{sec:intro}. A distinguishing property of $\Pi_{BA}$ is the recency condition ({\bf CP1}). This property states that $\Pi_{BA}$ only outputs checkpoints that are recently contained in some honest node's chain. Without this property, the adversary could make all honest nodes waste a long time mining on a chain that will never be checkpointed. 

In Appendix \iftoggle{arxiv}{\ref{sec:checkpointed_LC}}{C (see full version \cite{sankagiri2020blockchain})}, we prove an important result of $\Pi_{\rm ada}$ under partial synchronous model: the safety and liveness property will hold after $O(\rm{GST})$ time. This is essential to guarantee the liveness of $\Pi_{\rm fin}$ as a block can be only checkpointed when it lies in the $k$-common prefix of the checkpointed longest chains of all honest nodes. Finally, combining all these results (see Fig. \ref{fig:proof_map1}), we show that $\Pi_{\rm fin}$ is live after $O(\rm{GST})$ time with high probability, Thereby completing the proof of the main theorem.

\section{Conclusion}\label{sec:conclusion}
In this paper, we presented the design of a new finality gadget, to be used with Proof-of-Work blockchains. The proposed gadget provides each user a choice between an adaptivity guaranteeing confirmation rule and a finality guaranteeing one. This paper underscores the fact that it is possible to circumvent the impossibility result suggested by the CAP theorem by appropriately combining the longest-chain protocol with a committee-based BFT protocol. However, this paper is only a first step towards designing a viable protocol. Several interesting directions of research remain open, which we highlight below.

In our protocol, we used Algorand BA as our checkpointing protocol since it satisfies properties {\bf CP0-CP3}. Other natural candidates, such as PBFT \cite{castro1999practical}, Hotstuff \cite{yin2019hotstuff} and Streamlet \cite{chan2020streamlet} do not satisfy these properties. In particular, they do not satisfy the recency condition {\bf CP1} (we elaborate on this in Appendix \iftoggle{arxiv}{\ref{sec:discussion}}{D of the full version \cite{sankagiri2020blockchain}}). It would be interesting to see whether these conditions are necessary, or merely required for the proof. In the latter case, many more BFT protocols could be used for checkpointing.

We have shown that our protocol is essentially a finality gadget, just as Afgjort \cite{DinsdaleYoung2020AfgjortAP}, GRANDPA \cite{stewart2020grandpa} and Gasper \cite{buterin2020combining}. Through Figure \ref{fig:related}, it we have shown how some of these protocols could be tweaked to enhance their functionality to the level of the protocol we have designed. Formally analyzing the security of GRANDPA, and tweaking Gasper to make it secure (with a formal proof) are interesting open problems that could be tackled using the tools of this paper.

Finality gadgets/checkpointing could potentially offer many more properties. For example, they could protect against a dishonest mining majority, as shown in \cite{karakostas2020securing}. They could potentially also be used to have lower latency, and even responsive confirmation of blocks. Designing a protocol that achieves these properties in addition to the ones we show is an exciting design problem. Designing an incentive-compatible finality gadget also remains an open problem. Finally, a system implementation of such protocols could lead to newer considerations, such as communication complexity, latency, which would pave the way for future research.
\section{Acknowledgements}
This research is supported in part by a gift from IOHK Inc., an Army Research Office grant W911NF1810332 and by the National Science Foundation under grants CCF 17-05007 and CCF 19-00636. 

\bibliographystyle{acm}
\bibliography{references}

\appendix

\section*{Appendix}
\section{Algorand BA is a Checkpointing Protocol}
\label{sec:checkpointing}

We outline the full Algorand Byzantine Agreement (BA) protocol below for completeness. A minor modification (marked in red), adding validation, is the only addition to the original protocol. Honest checkpointers run a multi-iteration BA to commit checkpoints. The goal of the $i$-th iteration is to achieve consensus on the $i$-th checkpoint. Each iteration is divided into multiple periods and view changes will happen across periods.

All checkpointers start period $1$ of iteration $1$ at the same time (time $0$). Checkpointer $i$ starts period $1$ of iteration $n$ after it receives $2t + 1$ cert-votes for some value $v$ for the same period $p$ of iteration $n-1$ and waits for another fixed time $e$, and only if it has not yet started an iteration $n' > n$. Checkpointer $i$ starts period $p \geq 2$ of iteration $n$ after it receives $2t + 1$ next-votes for some value $v$ for period $p - 1$ of iteration $n$, and only if it has not yet started a period $p' > p$ for the same iteration $n$, or some iteration $n' > n$. For any iteration $n$, checkpointer $i$ sets its starting value for period $p \geq 2$, $st^p_i$, to $v$ (the value for which $2t+1$ next-votes were received and based on which the new period was started). For $p = 1$, $st^1_i = \perp$. The moment checkpointer $i$ starts period $p$ of iteration $n$, he finishes all previous periods and iterations and resets a local timer ${clock}_i$ to 0. At the beginning of every period, every honest checkpointer $i$ sets $v_i$ to be the checkpointed longest chain in its view at that time.

Each period has a unique leader known to all checkpointers. The leader is assigned in an i.i.d. fashion by the permitter oracle $O$. Each checkpointer $i$ keeps a timer ${clock}_i$ which it resets to 0 every time it starts a new period. As long as $i$ remains in the same period, ${clock}_i$ keeps counting. Recall that we assume the checkpointers’ individual timers have the same speed. In each period, an honest checkpointer executes the following instructions step by step.

\textbf{Step 1: [Value Proposal by leader]} The leader of the period does the following when ${clock}_i = 0$; the rest do nothing. If $(p = 1)$ OR $((p \geq 2)$ AND (the leader has received $2t + 1$ next-votes for $\perp$ for period $p - 1$ of iteration $n))$, then it proposes its value $v_i$.
Else if $((p \geq 2)$ AND (the leader has received $2t + 1$ next-votes for some value $v \neq \perp$ for period $p - 1$ of iteration $n))$, then the leader proposes $v$.

\textbf{Step 2: [The Filtering Step]}  Checkpointer $i$ does the following when ${clock}_i = 2\Delta$. If$(p = 1)$ OR $((p \geq 2)$ AND ($i$ has received $2t + 1$ next-votes for $\perp$ for period $p - 1$ of iteration $n))$, then $i$ soft-votes the value $v$ proposed by the leader of current period \textcolor{red}{if (it hears of it) AND (the value is VALID OR $i$ has received $2t+1$ next-votes for $v$ for period $p-1$)}. Else if $((p \geq 2)$ AND ($i$ has received $2t + 1$ next-votes for some value $v \neq \perp$ for period $p - 1$ of iteration $n))$, then $i$ soft-votes $v$.

\textbf{Step 3: [The Certifying Step]} Checkpointer $i$ does the following when ${clock}_i \in (2\Delta,4\Delta)$. If $i$ sees $2t + 1$ soft-votes for some value $v \neq \perp$, then $i$ cert-votes $v$.

\textbf{Step 4: [The Period’s First Finishing Step]}  Checkpointer $i$ does the following when ${clock}_i = 4\Delta$. If $i$ has certified some value $v$ for period $p$, he next-votes $v$. Else if $($ $(p \geq 2)$ AND ($i$ has seen $2t+1$ next-votes for $\perp$ for period $p-1$ of iteration $n$)$)$, he next-votes $\perp$. Else he next-votes his starting value $st^p_i$.

\textbf{Step 5: [The Period’s Second Finishing Step]} Checkpointer $i$ does the following when ${clock}_i \in (4\Delta,\infty)$ until he is able to finish period $p$. If $i$ sees $2t + 1$ soft-votes for some value $v\neq \perp$ for period $p$, then $i$ next-votes $v$. If $( (p \geq 2)$ AND $(i$ sees $2t + 1$ next-votes for $\perp$ for period $p-1$) AND ($i$ has not certified in period $p$)$)$, then $i$ next-votes $\perp$.

Halting condition: Checkpointer $i$ HALTS current iteration if he sees $2t+1$ cert-votes for some value $v$ for the same period $p$, and sets $v$ to be his output. Those cert-votes form a certificate for $v$. The block that is exactly $k$-deep in $v$ is chosen as the checkpoint in current iteration.

A proposed value $v$ (with block $B$ being exactly $k$-deep in it) from period $p$ of iteration $n$ is VALID for checkpointer $i$ (in the same period and iteration) if:
\begin{itemize}
    \item Value $v$ is proposed by the leader of that period;
    \item Block $B$ is a descendant of all previously checkpointed blocks with smaller iteration number;
    \item Block $B$ is contained in the checkpointed longest chain that the checkpointer $i$ holds when entering period $p$.
\end{itemize}

The only modification to the protocol is in Step 2, in the first condition, where the notion of validity is introduced. This is the only place where new proposals are considered. The validity notion helps transform the Algorand BA protocol into the multi-iteration checkpointing protocol that we desire. In Appendix \iftoggle{arxiv}{\ref{sec:checkpointing_property}}{B (see full version \cite{sankagiri2020blockchain})}, we show that this protocol indeed satisfies properties {\bf CP0-CP3}, as mentioned in \S\ref{sec:protocol}.
\iftoggle{arxiv}
{
\section{Checkpointing Properties}
\label{sec:checkpointing_property}

In this section, we prove several important properties of our modified Algorand BA protocol. We assume throughout that the number of adversarial checkpointers is strictly less than one-third the total number of checkpointers; this is necessary to achieve any security results for a partially synchronous consensus protocol such as Algorand BA.
\begin{itemize}
    \item It is safe even under {asynchronous network conditions}({\bf CP0}, proved in Lemma \ref{lem:ba_safe}). Moreover, it remains deadlock free, in the sense that it can always proceed as long as messages will be delivered eventually (Lemma \ref{lem:deadlock_free}).
    \item It is live after $\textsf{GST}$ if the local checkpointed longest chains of all honest nodes satisfy common prefix property ({\bf CP3}, proved in Lemma \ref{lem:cond_live}).
    \item It only outputs recently inputted values after $\textsf{GST}$ ({\bf CP1}, proved in Lemma \ref{lem:recent}).
\end{itemize}

We first define some useful notations according to the BA protocol described in Appendix \ref{sec:checkpointing}.

\begin{definition}[Potential starting value for period $p$]
A value $v$ that has been next-voted by $t+1$ honest nodes for period $p-1$.
\end{definition}

\begin{definition}[Committed value for period $p$]
A value $v$ that has been cert-voted by $2t+1$ nodes for period $p$.
\end{definition}

\begin{definition}[Potentially committed value for period $p$]
A value $v$ that has been cert-voted by $t+1$ honest nodes for period $p$.
\end{definition}

Although we slightly altered Algorand BA protocol (which is highlighted in red in Appendix \ref{sec:checkpointing}), we note that our modification does not break the safety of the protocol or cause any deadlock in Lemma \ref{lem:ba_safe} and Lemma \ref{lem:deadlock_free}. At a high level, the validity check only causes less soft-votes from honest nodes, which is indistinguishable with the case where the leader is malicious and no value receives  at least $2t+1$ soft-votes in some period. Therefore, the safety and deadlock-free property remain.

\begin{lemma}[Asynchronous Safety, {\bf CP0}]
\label{lem:ba_safe}
Even when the network is partitioned, the protocol ensures safety of the system so that no two honest nodes will finish one iteration of the protocol with different outputs.
\end{lemma}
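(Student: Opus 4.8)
The plan is to adapt the standard Algorand BA safety argument to our setting, checking along the way that (a) the red validity clause in Step~2 never enlarges the set of values honest checkpointers vote for, and (b) no part of the argument relies on the $\Delta$ message-delay bound, so it survives an arbitrary network partition. Throughout, write $n$ for the number of checkpointers and $t$ for the bound on adversarial ones, with $n \ge 3t+1$; every argument below uses only quorum intersection, namely that any two sets of $2t+1$ checkpointers share at least $t+1$ members and hence at least one honest member.

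\textbf{Uniqueness within a period.} First I would show that in any single period $p$ of a fixed iteration, at most one value can be \emph{committed}, i.e.\ receive $2t+1$ cert-votes. If $v \ne v'$ were both committed, the two cert-vote quorums would share an honest checkpointer, contradicting the fact that an honest checkpointer cert-votes at most one value per period (Step~3 fires at most once, for a value with $2t+1$ soft-votes, and at most one value can gather $2t+1$ soft-votes, again by quorum intersection; the validity check here only \emph{suppresses} honest soft-votes, so it cannot manufacture a second soft-vote quorum). The same counting shows at most one non-$\perp$ value can gather $2t+1$ next-votes in period $p$, so the \emph{potential starting value} entering period $p+1$, when it exists, is unique.

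\textbf{Persistence across periods.} The heart of the proof is the invariant: if some value $v$ is \emph{potentially committed} in period $p$ (cert-voted by $\ge t+1$ honest checkpointers), then in every later period $q > p$ of the same iteration, (i)~no value other than $v$ receives $2t+1$ next-votes, so the potential starting value entering period $q+1$, if any, is $v$; and (ii)~no value other than $v$ is cert-voted. I would prove (i)--(ii) by induction on $q$. Base case $q=p$: the $\ge t+1$ honest nodes that cert-voted $v$ next-vote $v$ in Step~4 and, having certified in period $p$, never next-vote $\perp$ in Step~5, so neither $\perp$ nor any $v'\ne v$ can reach $2t+1$ next-votes for period $p$. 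Inductive step: assuming (i)--(ii) for all periods in $[p,q-1]$, the only starting value an honest node can hold entering period $q$ is $v$, and the Step~1 proposal rules (which key off the $2t+1$ next-votes of period $q-1$) permit an honest leader to propose only $v$; hence honest nodes soft-vote only $v$ in Step~2, cert-vote only $v$ in Step~3, and next-vote only $v$ or $\perp$ in Steps~4--5, giving (i)--(ii) for period $q$. The validity predicate is consistent with this: a node holding $2t+1$ next-votes for $v$ from the previous period is permitted to soft-vote $v$ irrespective of the content checks, and otherwise the checks can only further restrict, never broaden, the honest soft-vote.

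\textbf{Conclusion.} Let $p^\star$ be the first period of the iteration in which any value is committed, and $v^\star$ that value. By the first step it is the unique committed value of period $p^\star$; being committed it is in particular potentially committed, so by the second step every later period can commit only $v^\star$. Thus across the whole iteration the set of committed values is exactly $\{v^\star\}$, and since an honest node halts the iteration precisely upon seeing $2t+1$ cert-votes for some value in some period, every honest node that halts outputs $v^\star$. None of these steps invoked the $\Delta$ bound --- they concern only which honest checkpointers could have voted for what --- so the statement holds under ${\rm M1}$ even before \textsf{GST}, and a fortiori under a full partition. I expect the main obstacle to be the persistence step: carefully ruling out every way a stray vote could slip in --- especially in Step~5, where an honest node may next-vote more than once per period, and in the interaction of $\perp$ next-votes with the potential-starting-value bookkeeping --- while keeping the induction hypothesis exactly strong enough and simultaneously verifying that the red validity clause never enlarges the honest vote sets.
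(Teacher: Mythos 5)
Your proposal is correct and follows essentially the same route as the paper's proof: quorum intersection gives at most one committed/potentially committed value per period, a potentially committed value becomes the unique potential starting value (with $\perp$ excluded because the $t+1$ honest cert-voters never next-vote $\perp$), and an induction over periods shows no other value can ever be committed, with the observation that the added validity clause only suppresses honest soft-votes. The extra care you take with Step~5 and the $\perp$ bookkeeping matches the paper's argument rather than departing from it.
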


\begin{proof}
The following properties hold even during a network partition.
\begin{itemize}
    \item By quorum intersection, as each honest node only soft-votes one value, then at most one value is committed or potentially committed for each period $p$ in one iteration.
    \item If a value $v$ is potentially committed for period $p$, then only $v$ can receive $2t+1$ next-votes for period $p$. Thus, the unique potential starting value for period $p+1$ is $v$.
    \item If a period $p$ has a unique potential starting value $v \neq \perp$, then only $v$ can be committed for period $p$. Moreover, honest nodes will only next-vote $v$ for period $p$, so the unique potential starting value for period $p+1$ is also $v$. Inductively, any future periods $p'>p$ can only have $v$ as a potential starting value. Thus, once a value is potentially committed, it becomes the unique value that can be committed or potentially committed for any future period, and no two honest nodes will finish this iteration of the protocol with different outputs.
\end{itemize}

\end{proof}

\begin{lemma}[Asynchronous Deadlock-freedom]
\label{lem:deadlock_free}
As long as messages will be delivered eventually, an honest node can always leave period $p$, either by entering a higher period or meeting the halting condition for the current iteration.
\end{lemma}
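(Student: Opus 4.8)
The plan is to argue by contradiction: suppose some honest checkpointer $i_0$ enters period $p$ of some iteration $n$ and never leaves it, i.e., it never collects $2t+1$ next-votes for period $p$ (which would push it to period $p+1$) and never meets the halting condition for iteration $n$. The first step is to upgrade this into a statement that the whole protocol stalls at period $p$ of iteration $n$. Because every honest node re-broadcasts every message it hears and all messages are eventually delivered, any set of $2t+1$ next-votes for period $p$ (or $2t+1$ cert-votes for some value in iteration $n$) ever assembled by \emph{any} honest node is eventually seen by $i_0$; since $i_0$ is still in period $p$, it would then advance or halt, a contradiction. Hence no honest node ever assembles such a quorum, and — using the rule that entering period $q\ge 2$ requires $2t+1$ next-votes for period $q-1$, chained downward, together with the analogous rule across iterations — no honest node ever enters a period beyond $p$ in iteration $n$ and none halts iteration $n$. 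Running the re-broadcast argument in the other direction on the lower periods and earlier iterations shows every honest node eventually reaches period $p$ of iteration $n$ and stalls there. So it suffices to derive a contradiction by showing that, in this globally-stalled period, the honest checkpointers collectively produce $2t+1$ next-votes for a common value for period $p$.

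Next I would record two facts about period $p$. First, every honest node that enters period $p$ eventually reaches Step 4 (when its clock equals $4\Delta$), so it casts at least one next-vote for period $p$ — a value it certified, or $\perp$, or its starting value $st^p_i$. Second, by quorum intersection (each honest node soft-votes at most once in Step 2, and honest nodes number at least $2t+1$ out of at most $3t+1$), at most one non-$\perp$ value $v^\dagger$ can accumulate $2t+1$ soft-votes in period $p$.

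The heart of the proof is then a short case analysis. If some $v^\dagger\neq\perp$ does accumulate $2t+1$ soft-votes for period $p$: each honest node either cert-votes $v^\dagger$ in Step 3 and hence next-votes $v^\dagger$ in Step 4, or — if it missed that window — eventually sees the $2t+1$ soft-votes while executing Step 5 (it cannot finish the period, so it stays there) and next-votes $v^\dagger$; either way all (at least $2t+1$) honest nodes eventually next-vote $v^\dagger$ for period $p$, the desired contradiction. If instead no non-$\perp$ value gets $2t+1$ soft-votes in period $p$: then no honest node certifies in period $p$ and Step 5's soft-vote branch never fires, so every honest node's only next-votes in period $p$ are $\perp$ and/or its starting value $st^p_i$. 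For $p=1$ we have $st^1_i=\perp$ for all honest $i$, giving $2t+1$ next-votes for $\perp$. For $p\ge 2$, asynchronous safety (Lemma~\ref{lem:ba_safe}) yields a unique potential starting value $v^\ast$, so $st^p_i=v^\ast$ for every honest $i$; and an honest node next-votes $\perp$ in period $p$ only after seeing $2t+1$ next-votes for $\perp$ for period $p-1$. If such a $\perp$-quorum for period $p-1$ exists, it eventually reaches all honest nodes and they all next-vote $\perp$ for period $p$; otherwise no honest node ever next-votes $\perp$ in period $p$, so all of them next-vote $v^\ast$ in Step 4. In every sub-case $2t+1$ honest next-votes for a common value for period $p$ appear, contradicting the stall.

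I expect the main obstacle to be bookkeeping rather than a conceptual leap: rigorously justifying that the protocol really does stall at period $p$ of iteration $n$ (handling the coupling between periods and iterations, and the fact that Step 5 lets an honest node emit several next-votes within one period, so ``assembling a quorum'' must be tracked carefully), and confirming that the red validity modification in Step 2 introduces no deadlock. The latter is immediate: declining to soft-vote an invalid proposal is indistinguishable from the case of a silent or corrupt leader whose proposal receives fewer than $2t+1$ soft-votes, and Case~B's fallback to $st^p_i$ at Step 4 still fires, so progress is preserved.
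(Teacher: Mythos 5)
Your overall strategy (argue by contradiction, upgrade a single stuck node to a global stall via re-broadcast and eventual delivery, then do a case analysis to manufacture a next-vote quorum for period $p$) is sound and your Case~(i) (some $v^\dagger\neq\perp$ gets a soft-vote quorum) and the $p=1$ and ``$\perp$-quorum exists'' sub-cases of Case~(ii) go through essentially as in the paper. But there is a genuine gap in the remaining sub-case ($p\ge 2$, no non-$\perp$ soft-vote quorum in period $p$, no $2t+1$ next-votes for $\perp$ from period $p-1$): there you need all honest starting values $st^p_i$ to coincide, i.e.\ you need that \emph{at most one non-$\perp$ value can ever gather $2t+1$ next-votes for period $p-1$}. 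You attribute this to Lemma~\ref{lem:ba_safe}, but asynchronous safety neither states nor implies it: that lemma is about cert-vote quorums and iteration outputs, and in general a period can have two potential starting values (some non-$\perp$ $v$ and $\perp$), while quorum intersection alone does not rule out two distinct non-$\perp$ next-vote quorums, because an honest node may legitimately next-vote its (non-$\perp$) starting value in Step~4 and a different value with a soft-vote quorum in Step~5. If two non-$\perp$ values could each collect $2t+1$ next-votes for period $p-1$, honest nodes entering period $p$ could split their Step-4 next-votes between two starting values and never form a quorum, so your contradiction does not materialize.

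This missing uniqueness claim is precisely the technical heart of the paper's proof: it is established by an induction over periods, with the induction hypothesis ``at most two values have $2t+1$ next-votes for period $p-1$, and one of them is necessarily $\perp$,'' and a three-way case analysis (only $\perp$ has a quorum; only some $v$ has a quorum; both do, in which case a putative new quorum value $v'$ forces $t+1$ honest nodes to have seen the $\perp$-quorum, leaving $v$ with at most $2t$ next-votes). Once that claim is in place, the deadlock-freedom argument reduces to the same short case split you give (a point in your favor: your treatment of the validity modification in Step~2 and of the stall/catch-up bookkeeping is more explicit than the paper's). To repair your proof, you should either prove this inductive claim yourself or restructure the case analysis, as the paper does, around which values hold next-vote quorums for period $p-1$ rather than around soft-vote quorums for period $p$.
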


\begin{proof}
We first prove that there can never exist $2t+1$ next-votes for two different non-$\perp$ values from the same period $p$ by induction.

Start with $p = 1$. Note that every honest node sets $st^1_i = \perp$ and at most one value (say $v$) could receive more than $2t+1$ soft-votes. Therefore only value $v$ and $\perp$ could potentially receive more than $2t+1$ next-votes in period $1$. Note that it is possible that both $v$ and $\perp$ receive more than $2t+1$ next-votes: all the honest nodes could next-vote for $\perp$ in Step 4 and then next-vote for $v$ in Step 5 after seeing the $2t+1$ soft-votes for $v$.

Assume that the claim holds for period $p-1$ ($p \geq 2$): there exist at most two values each of which has $2t + 1$ next-votes for period $p-1$, and
one of them is necessarily $\perp$. Then there are three possible cases:

\begin{itemize}
    \item Case 1: Only $\perp$ has $2t + 1$ next-votes for period $p-1$. This is the same as the induction basis $p=1$.
    \item Case 2: Only $v$ has $2t+1$ next-votes for period $p-1$. Then in period $p$, every honest node will have starting value $st^p_i = v$, so only $v$ could receive soft-vote and next-vote.
    \item Case 3: Both $\perp$ and $v$ have $2t + 1$ next-votes for period $p-1$. Suppose a new value $v' \neq v$ receives $2t+1$ next-votes for period $p$, then $v'$ must have $2t+1$ soft-votes for period $p$ and at least $t+1$ soft-votes are from honest nodes, which means that at least $t+1$ honest nodes have received $2t + 1$ next-votes for $\perp$ for period $p - 1$ when they enter Step 2. Hence when they enter Step 4, they will next-vote $v'$ or $\perp$ instead of $v$. And they will not next-vote $v$ in Step 5 either. Then $v$ can have at most $t$ next-votes from honest nodes and at most $2t$ next-votes from all nodes for period $p$.
\end{itemize}
 
Therefore, we proved the claim for period $p$. Then we prove the lemma. Note that a node will stuck in period $p$ if and only if no value receives at least $2t+1$ next-votes for period $p$. Then by the previous claim we proved, there are only two possible cases:
\begin{itemize}
    \item Case 1: $\perp$ has $2t + 1$ next-votes for period $p-1$. If no value has at least $2t+1$ soft-votes for period $p$, then every honest node will just next-vote $\perp$ in Step 4 or Step 5 (since $2t+1$ next-votes on $\perp$ for period $p-1$ will eventually reach every honest node). Otherwise if some value $v$ has at least $2t+1$ soft-votes for period $p$, then every honest node will next-vote $v$.
    \item Case 2: Only some value $v$ has $2t+1$ next-votes for period $p-1$. Then in period $p$, every honest node will have starting value $st^p_i = v$, so only $v$ could receive soft-vote and next-vote. And every honest node will next-vote $v$ in Step 4.
\end{itemize}
\end{proof}

The original Algorand BA guarantees fast recovery from network partition: after $\textsf{GST}$, an agreement is reached in $O(\Delta)$ time. However, our protocol does not have such a strong liveness property as honest nodes may disagree on what is the longest chain for some time right after $\textsf{GST}$. Here in Lemma \ref{lem:cond_live}, we prove that our protocol is live and each iteration takes $O(\Delta)$ time to finish once the common prefix property of all honest nodes' longest chains is recovered, which may take $O(\textsf{GST})$ time as we will see in \S\ref{sec:checkpointed_LC}.  

\begin{lemma}[Conditional Liveness After $\textsf{GST}$, {\bf CP3}]
\label{lem:cond_live}
Suppose after time $T$, all honest nodes always have the same values, i.e., their local checkpointed longest chains satisfy common prefix property, then after $\max\{{\textsf{GST}},T\}$, each iteration of Algorand BA will finish within $O(\Delta)$ time. Moreover, all honest nodes finish the same iteration within time $\Delta$ apart. 
\end{lemma}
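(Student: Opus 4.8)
The plan is to combine three ingredients: (i) a clock-synchronization bound showing that, after \textsf{GST}, all honest checkpointers occupy the same period of a given iteration within a window of length $\Delta$; (ii) a progress bound showing each period lasts only $O(\Delta)$ wall-clock time; and (iii) the observation that a period led by an honest checkpointer, occurring in the ``good regime'' (after $\max\{\textsf{GST},T\}$, when all honest checkpointers hold identical checkpointed longest chains), necessarily halts the current iteration within $O(\Delta)$ of that period's start, with all honest nodes halting within $\Delta$ of one another. Since the oracle $O$ picks leaders i.i.d.\ and the adversarial fraction of checkpointers is below $1/3$, each period is honestly led with probability $>2/3$ independently of the others; an iteration begun after $\max\{\textsf{GST},T\}$ therefore ends no later than its first honestly-led period, so the number of periods it runs for is stochastically dominated by a geometric variable with constant mean, giving an $O(\Delta)$ bound on the iteration length in expectation (and $O(\Delta\log(1/\varepsilon))$ with probability $1-\varepsilon$).

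For (i): if the first honest node enters period $p$ of an iteration at time $s$, it does so upon collecting $2t+1$ next-votes for period $p-1$ (or $2t+1$ cert-votes, for a fresh iteration); since honest nodes re-broadcast every message they hear, all honest nodes hold those $2t+1$ votes by $s+\Delta$, so each honest node has either entered period $p$ or already left the iteration by $s+\Delta$. For (ii): once all honest nodes are in period $p$, each reaches $\mathrm{clock}_i=4\Delta$ within a further $\Delta$ and casts a next-vote; invoking the structure from Lemma~\ref{lem:deadlock_free} (at most one non-$\perp$ value can carry $2t+1$ next-votes for period $p-1$), a short case analysis --- on whether period $p-1$ produced $2t+1$ next-votes for $\perp$, for a single $v\neq\perp$, or for both --- shows that after \textsf{GST} some value accumulates $2t+1$ next-votes for period $p$ within $O(\Delta)$, so every honest node leaves period $p$ by $s+O(\Delta)$.

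For (iii), consider an honestly-led period $p$ past $\max\{\textsf{GST},T\}$. If period $p-1$ yielded $2t+1$ next-votes for some $v\neq\perp$, the honest leader is forced to propose $v$ and every honest node soft-votes $v$ with no validity check (the second branch of Step~2). Otherwise (including $p=1$) the leader proposes its own checkpointed longest chain $v^\star$; by the hypothesis of the lemma all honest nodes hold the very same chain, and since $k=\Theta(\kappa)$ while at most $O(\lambda\Delta)\ll k$ blocks can be mined within the $\Delta$-wide window separating honest nodes' entry into period $p$, the block exactly $k$-deep in $v^\star$ lies in every honest node's checkpointed longest chain at the moment it enters $p$; hence $v^\star$ is \textsc{valid} for all honest nodes and they all soft-vote it. Either way at least $2t+1$ honest nodes soft-vote a common value within $\Delta$ of their respective $\mathrm{clock}_i=2\Delta$; each honest node then sees $2t+1$ soft-votes and cert-votes that value by about $\mathrm{clock}_i=4\Delta$; those $\ge 2t+1$ cert-votes reach every honest node within a further $\Delta$, so all honest nodes meet the halting condition within $O(\Delta)$ of entering period $p$ and within $\Delta$ of each other --- which is exactly the ``finish the same iteration within $\Delta$ apart'' claim.

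The main obstacle I expect is the bookkeeping around the $\Delta$-wide skew with which honest nodes traverse a common period: one must ensure this skew neither prevents an honest leader's fresh proposal from being judged \textsc{valid} by the lagging honest nodes --- handled by the $k\gg\lambda\Delta$ gap above, together with the fact that within a single BA iteration no new checkpoint appears, so the common chain grows monotonically below its (short) unstable tip --- nor lets the adversary split the honest period-$(p-1)$ next-votes so as to stall period $p$ --- handled by re-broadcast combined with the single-non-$\perp$-value structure of Lemma~\ref{lem:deadlock_free}. Once these two points are pinned down, the rest is a routine chaining of the $O(\Delta)$ per-period bounds with the geometric tail bound on the waiting time for an honest leader.
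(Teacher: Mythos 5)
Your proposal is correct and follows essentially the same route as the paper's own proof: synchronize all honest checkpointers to a common period within $\Delta$ after $\textsf{GST}$ via re-broadcast of the triggering vote sets, show an honestly-led period past $\max\{\textsf{GST},T\}$ forces a common soft-vote (using either the $2t+1$ next-votes branch or validity of the leader's common chain) and hence halting within $O(\Delta)$ and within $\Delta$ of each other, show a dishonestly-led period still advances in $O(\Delta)$ via the deadlock-freedom structure, and finish with the geometric bound on waiting for an honest leader. Your extra care in checking that the $k$-deep block of the leader's proposal is in every honest node's chain despite the $\Delta$ skew is a slightly more explicit treatment of the validity condition than the paper gives, but the argument is the same in substance.
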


\begin{proof}

Let $N$ be the largest iteration number such that some honest node has committed some value $v$ before $\textsf{GST}$. Then all honest nodes will receive the certificate for $v$ within time $\Delta$ after $\textsf{GST}$ and they will all enter iteration $N+1$ after some time. If such $N$ does not exist, i.e., no value has been committed by any honest node before $\textsf{GST}$, then we set $N=0$.

Further, let $p$ be the highest period that some honest node is working on in iteration $N+1$ right before $\textsf{GST}$ is. Then after time $\Delta$, all honest nodes will also start period $p$ as they receive $2t + 1$ next-votes for period $p - 1$. Soon after, some value $v$ (which may be $\perp$) will receive at least $2t+1$ for period p by Lemma \ref{lem:deadlock_free}, and all honest node will all start period $p + 1$ within time $\Delta$ apart.

Once all honest nodes start the same period $p$ within time $\Delta$ apart, the network partition seems to have never happened.
Without partition, if all honest nodes have the same value, i.e., after $\max\{\textsf{GST},T\}$, the following facts hold:
\begin{itemize}
    \item By Lemma \ref{lem:deadlock_free}, there exist at most two values each of which has $2t+1$ next-votes for every period, and one of them is necessarily $\perp$ if there exist exactly two.
    \item If a period $p$ is reached and the leader is honest, then all honest nodes finish the current iteration in period $p$ by their own time $6\Delta$, with the same output $v \neq \perp$. Moreover, all honest nodes finish within time $\Delta$ apart. Indeed, if there exists a potentially committed value $v$ for period $p-1$, then $v$ will be the unique potential starting value for period $p$. In period $p$, the honest leader proposes $v$ in Step 1 and all honest nodes soft-vote $v$ in Step 2. In Step 3, by their own time $4\Delta$, all honest nodes have cert-voted $v$. Thus all of them finish the current iteration by their own time $6\Delta$, with output $v$. Otherwise if there is no potentially committed value in period $p-1$, then the leader of period $p$ may propose its private input $v'$ or a value $v \neq \perp$ for which it has seen $2t+1$ next-votes from period $p-1$. In the first case, all honest nodes will soft-vote $v'$ in Step 2 as they all see $2t+1$ next-votes for $\perp$ from period $p-1$; in the second case, all honest nodes will soft-vote v in Step 2 either by condition 1 or condition 2. In both cases, all honest nodes will cert-vote the same value in Step 3 by their own time $4\Delta$ and finish the current iteration with that value by their own time $6\Delta$.
    \item If a period $p$ is reached and there is no committed value for period $p$ (which only happens if the leader is malicious), then all honest nodes move to period $p+1$ by their own time $8\Delta$, and they move within time $\Delta$ apart. Note that the honest nodes may start period $p$ not at the same time but within time $\Delta$ apart. Indeed, if no honest node has cert-voted in Step 3, then all honest nodes next-vote a common starting value $v$ in Step 4 if no one has seen $2t+1$ next-votes for $\perp$ for period $p-1$, or else all honest nodes next-vote $\perp$ in Step 4 or Step 5. In both cases, they all see these next-votes and move to period $p+1$ by their own time $8\Delta$. If at least one honest node has cert-voted a value $v$ in Step 3, then $v$ must have received $2t + 1$ soft-votes and no other value could have these many soft-votes by quorum intersection. Thus only $v$ could have received cert-votes. Since those honest nodes have helped rebroadcasting the soft-votes for $v$ by their own time $4\Delta$, all honest nodes see $2t+ 1$ soft-votes for $v$ by their own time $6\Delta$. Thus all honest nodes next-vote $v$ either in Step 4 or Step 5 by their own time $6\Delta$, and all see $2t+1$ next-votes for the same value by their own time $8\Delta$. Note that some honest nodes may have next-voted for $\perp$ in Step 4 as well, thus there may also exist $2t + 1$ next-votes for $\perp$.
    \item Combining the above facts together, since the leader in each period is honest with probability $> 2/3$, one iteration takes in expectation at most 1.5 periods and at most $10\Delta$ time. Moreover, all honest nodes finish within time $\Delta$ apart.
\end{itemize}

\end{proof}

The liveness property showed in Lemma \ref{lem:cond_live} conditions on the assumption that all honest nodes will have the same values after some time $T$. However, this assumption may not hold right after $\textsf{GST}$ as the adversary could have a bank of private blocks that can be used to break the common prefix property. During this stage, Algorand BA may still output some values in favor of the adversary (eg., the adversary sends its favorite chain to all honest nodes). But if the output value is too old in the sense that no honest node has that checkpointed block in its local longest chain for a long time, then all honest blocks mined during this time interval could be potentially wasted as they are not extending the last checkpoint. However, we show that Algorand BA with our cautious modification won't let this happen in the following lemma. 

\begin{lemma}[Recency Condition After $\textsf{GST}$, {\bf CP1}]
\label{lem:recent}
After $\textsf{GST}$, if one honest node commits a value $v$ in one iteration at time $t$, then $v$ must have been the input value held by at least one honest node at time no earlier than $t - 8D\Delta$, where $D \sim \rm{Geo}(p)$ with $p>2/3$ being the fraction of honest nodes.
\end{lemma}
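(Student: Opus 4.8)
The plan is to trace the committed value $v$ \emph{backwards} through the periods of the iteration in which it is committed, and to argue that this trace must bottom out --- after only geometrically many periods --- at a moment at which $v$ (equivalently, the block $B$ that is $k$-deep in $v$, which is what actually gets checkpointed) was sitting in an honest node's checkpointed longest chain, where each period back costs at most $8\Delta$ of real time after \textsf{GST}. Note that, unlike Lemma~\ref{lem:cond_live}, we do \emph{not} assume common prefix here: the argument must work in the turbulent regime right after \textsf{GST}.

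First I would fix the iteration and the period $p$ in which the honest node commits $v$ at time $t$, and run a downward induction on the period number. Committing $v$ at $p$ forces $2t+1$ cert-votes, hence $\geq t+1$ honest cert-votes, hence $2t+1$ soft-votes, hence $\geq t+1$ honest soft-votes, at period $p$. Call $v$ \emph{alive at period $q$} if it is committed at $q$, collects $2t+1$ next-votes at $q$, or collects $\geq t+1$ honest soft-votes at $q$. From Step~2, an honest node soft-votes $v$ at period $q$ only (i) because $v$ is the current leader's proposal and $v$ is VALID for that node --- in which case $B$ lies in that node's checkpointed longest chain \emph{at the instant it entered period $q$} (the base case) --- or (ii) because $v$ already carries $2t+1$ next-votes from period $q-1$ --- in which case $v$ is alive at $q-1$. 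A short check of Steps~4--5 shows the same dichotomy propagates aliveness downward: an honest node next-votes $v$ in Step~4 either after certifying $v$ in period $q$ (forcing again $\geq t+1$ honest soft-votes for $v$ at $q$) or because $st^q_i=v$ (which requires $2t+1$ next-votes for $v$ at period $q-1$), and in Step~5 only after seeing $2t+1$ soft-votes for $v$ at $q$. Since every honest node has $st^1_i=\perp\neq v$, the trace cannot recurse below period $1$, so it reaches the base case at some period $p_0$, $1\leq p_0\leq p$. Moreover at $p_0$ the value $v$ is \emph{freshly proposed} (no $2t+1$ next-votes from $p_0-1$), which by Step~1 forces the leader of $p_0$ to have proposed \emph{its own input} $v_i$, so $v_i=v$; if that leader is honest this already gives ``$v$ held by an honest node'', and in any case the VALID check at $p_0$ gives ``$B$ lies in an honest node's chain when that node entered period $p_0$''.

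It then remains to bound the real time from entering period $p_0$ to the commit at time $t$. After \textsf{GST} the network is synchronous, so once one honest node starts a period every honest node starts it within $\Delta$ (re-broadcasting the $2t+1$ next-votes that triggered the period), and Steps~4--5 together with deadlock-freedom (Lemma~\ref{lem:deadlock_free}) guarantee an honest node leaves a period by its own clock time $8\Delta$; hence, exactly as in the timing accounting inside Lemma~\ref{lem:cond_live}, each of the periods $p_0,p_0+1,\dots,p$ costs at most $8\Delta$ of real time. To bound their number $p-p_0+1$, the key claim is that the trace terminates no later than the first period, reading backwards from $p$, whose leader is \emph{honest}: at such a period the honest leader by Step~1 either forwards the unique (Lemma~\ref{lem:deadlock_free}) non-$\perp$ value carried by $2t+1$ next-votes from the previous period --- but then \emph{every} honest node soft-votes that value in Step~2 (by either sub-case, since that value carries $2t+1$ next-votes from the previous period), so it is cert-voted and the iteration halts \emph{at that period}, contradicting that $v$ is committed only later --- or it proposes its own input, which by the freshness argument above must equal $v$, terminating the trace. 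Since the oracle $O$ selects leaders i.i.d.\ with each honest with probability $p>2/3$, the number of periods walked back is stochastically dominated by $D\sim\mathrm{Geo}(p)$; multiplying by the $8\Delta$-per-period bound yields that an honest node held $v$ (resp.\ $B$ in its chain) at a time no earlier than $t-8D\Delta$.

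The step I expect to be the main obstacle is the ``halts at that period'' claim, and more broadly the insistence that any honest-leader period the trace passes through carries exactly $v$. In the partial-synchrony regime right after \textsf{GST}, honest nodes may hold divergent checkpointed longest chains, so an honest leader's own proposal need not be VALID for all honest nodes, and the honest soft-votes in a single period can split between the leader's proposal and a value forwarded from the previous period. Ruling out that $v$ ``survives'' such a split-honest period without being exactly the honest leader's input and without the iteration halting there --- i.e.\ tracking, period by period, how the VALID condition of our modified Algorand BA interacts with the next-vote selection in Steps~4 and~5, and in particular separating the period in which $v$ becomes ``locked'' (acquires $2t+1$ next-votes) from the period in which it is committed --- is where the real work of the proof lies, and is precisely where the red modification of Appendix~\ref{sec:checkpointing} is used.
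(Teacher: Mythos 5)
Your skeleton matches the paper's argument: trace the votes for $v$ back period by period, bottom out at the VALID check of the modified Step~2 (which is where ``$B$ in an honest node's chain'' comes from), charge at most $8\Delta$ per period after \textsf{GST} via the timing analysis of Lemma~\ref{lem:cond_live}, and count periods back to an honest leader to get the geometric bound. But the step you yourself flag as the ``main obstacle'' is not an expected difficulty to be deferred --- it is the heart of the lemma, and your stated dichotomy at an honest-leader period is false without it. Reading backwards, at an honest-leader period $q$ the leader may have received $2t+1$ next-votes for $\perp$ from period $q-1$ and therefore (by Step~1) proposes its \emph{own fresh input} $v_i\neq v$, while at the same time $v$ also carries $2t+1$ next-votes from period $q-1$ held by other honest nodes (Lemma~\ref{lem:deadlock_free} explicitly allows both $\perp$ and one non-$\perp$ value to be locked). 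In that configuration neither of your branches applies: the iteration need not halt at $q$ (honest soft-votes can split between $v_i$ and $v$), and the leader's input is not $v$; yet on the face of it $v$ can stay ``alive'' through $q$ via the second sub-case of Step~2 and Steps~4--5. Your proposal gives no argument ruling this out, so the geometric count of periods is not established.

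The paper closes exactly this case (its ``Case~1'' in the proof of Lemma~\ref{lem:recent}, phrased via the set $V_p$ of potential starting values): after \textsf{GST}, the $2t+1$ next-votes for $\perp$ that the honest leader holds are re-broadcast and reach every honest checkpointer by its own clock $2\Delta$ (resp.\ $4\Delta$), so every honest node takes the \emph{first} branch of Step~2 and will only soft-vote the leader's proposal $v'\neq v$, and in Steps~4--5 next-votes $\perp$ or $v'$ but never $v$. Hence $v$ collects no honest soft-votes or next-votes in period $q$, cannot be committed there, and drops out of the set of potential starting values; the only way it can re-enter (and later be committed at period $p>q$) is through the VALID check at some period in $(q,p]$, which itself certifies recency at an even later time. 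A second, smaller inaccuracy: when the first honest-leader period read backwards is the commit period $p$ itself and the leader \emph{forwards} $v$ (the paper's ``Case~2''), the iteration commits $v$ there with no honest node holding it and no contradiction arises, so your trace must continue past that period; your claim ``terminates no later than the first honest-leader period'' needs to be restated (it only costs an extra period in the bound, but it shows the dichotomy as written is not the right case split).
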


\begin{proof}
For a fixed iteration, We define $V_p$ to be the set of potential starting values for period $p$. Then $V_1 = \{ \perp \}$ and also by Lemma \ref{lem:deadlock_free}, we have $1 \leq |V_p| \leq 2$, and $\perp \in V_p$ if $|V_p| = 2$ for all $p$. Note that $V_p = \{v, \perp\}$ is an bad set as $v$ may be an outdated value. Then we show that the state won't remain in $\{v, \perp\}$ for the same $v$ for a long time. 

Suppose the leader of period $p$ is honest and no honest node has value $v$ for period $p$, then we have the following two cases.
\begin{itemize}
    \item Case 1: if the leader of period $p$ has received $2t+1$ next-votes for $\perp$ for period $p-1$ at the beginning of Step 1, then it will propose $v' \neq v$. Further no honest node will soft-vote or next-vote $v$. Therefore, the state of period $p+1$ will be $\{\perp\}$, $\{v'\}$, or $\{v', \perp\}$.
    \item Case 2: if the leader of period $p$ has received $2t+1$ next-votes only for $v$ for period $p-1$ at the beginning of Step 1, then it will propose $v$. Further every honest node will soft-vote $v$ in Step 2 either by condition 1 or condition 2. Therefore, $v$ will be committed in period $p$.
\end{itemize}

Therefore, after $\textsf{GST}$, the state won't remain in $\{v, \perp\}$ for the same $v$ from period $p$ to period $p+1$, if the leader of period $p$ is honest and no honest node has value $v$ for period $p$. By Lemma \ref{lem:cond_live}, we know that after $\textsf{GST}$ each period (even with a malicious leader) will finish in $8\Delta$ time. Combining these facts together, we can conclude the lemma.

\end{proof}

We say the $i$-th checkpoints committed at some time $t$ by an honest node after $\textsf{GST}$ is $D_i$-{\it recent} if it has been held at depth exactly $k$ in the chain of an honest node at some time $t' \geq t - D_i$. Then by Lemma $\ref{lem:recent}$, we have $D_i$ is stochastically dominated by $8D\Delta$, where $D \sim \rm{Geo}(p)$ with $p>2/3$. Let us define event $E$ to be the event that all checkpoints committed after $\textsf{GST}$ and before $T_{\rm max}$ are $d$-recent, where $d$ is called the recency parameter of the protocol. Then we have the following corollary directly from Lemma $\ref{lem:recent}$.

\begin{corollary}
\label{coro:uniform}
There exists $d = O(\sqrt{\kappa} \Delta)$ such that $\mathbb{P}(E) \geq 1-e^{-\Omega(\sqrt \kappa)}$, where $\kappa$ is the security parameter. 
\end{corollary}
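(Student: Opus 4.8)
Looking at the statement, I need to prove Corollary \ref{coro:uniform}: that there exists a recency parameter $d = O(\sqrt{\kappa}\Delta)$ such that with probability at least $1 - e^{-\Omega(\sqrt{\kappa})}$, all checkpoints committed after $\textsf{GST}$ (and before $T_{\rm max}$) were held at depth exactly $k$ in an honest node's chain at some point no earlier than $d$ time units before being committed.

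\textbf{Plan of proof.} The strategy is to take the per-checkpoint bound from Lemma~\ref{lem:recent}, which says each committed checkpoint is $D_i$-recent with $D_i$ stochastically dominated by $8D\Delta$ where $D \sim \mathrm{Geo}(p)$ for $p > 2/3$, and then apply a union bound over all checkpoints committed during the execution. First I would bound the number of checkpoints: since the execution lasts $T_{\rm max} = O(\mathrm{poly}(\kappa))$ time and, by \textbf{CP2}, successive checkpoint iterations are separated by at least $e$ time, the total number of checkpoints committed after $\textsf{GST}$ is at most $T_{\rm max}/e = O(\mathrm{poly}(\kappa))$. Call this bound $M$. Next, for a single checkpoint, I would bound $\mathbb{P}(D_i > d)$. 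Since $D_i \preceq 8D\Delta$ with $D$ geometric, we have $\mathbb{P}(D_i > d) \le \mathbb{P}(8D\Delta > d) = \mathbb{P}(D > d/(8\Delta)) = (1-p)^{\lfloor d/(8\Delta)\rfloor} \le 3^{-\lfloor d/(8\Delta)\rfloor}$, using $1 - p < 1/3$. Setting $d = c\sqrt{\kappa}\,\Delta$ for a suitable constant $c$ makes this tail at most $e^{-\Omega(\sqrt{\kappa})}$ times $1/(2M)$, say; more precisely, $d/(8\Delta) = \Theta(\sqrt{\kappa})$, so $3^{-d/(8\Delta)} = e^{-\Omega(\sqrt{\kappa})}$, and since $M = O(\mathrm{poly}(\kappa)) = e^{O(\log \kappa)}$, the product $M \cdot e^{-\Omega(\sqrt{\kappa})}$ is still $e^{-\Omega(\sqrt{\kappa})}$ because $\sqrt{\kappa}$ dominates $\log\kappa$.

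\textbf{Key steps in order.} (i) Invoke \textbf{CP2} to get $M \le T_{\rm max}/e = O(\mathrm{poly}(\kappa))$ as an upper bound on the number of checkpoints in the window $[\textsf{GST}, T_{\rm max}]$. (ii) For each such checkpoint, apply Lemma~\ref{lem:recent} and the stochastic domination $D_i \preceq 8D\Delta$, $D \sim \mathrm{Geo}(p)$, to bound $\mathbb{P}(D_i > d) \le (1-p)^{\lfloor d/(8\Delta)\rfloor}$. (iii) Choose $d = c\sqrt{\kappa}\,\Delta$; then each term is $e^{-\Omega(\sqrt{\kappa})}$. (iv) Union bound: $\mathbb{P}(E^c) \le \sum_i \mathbb{P}(D_i > d) \le M \cdot e^{-\Omega(\sqrt{\kappa})} = e^{-\Omega(\sqrt{\kappa})}$, where the last equality uses that a polynomial factor in $\kappa$ is absorbed since $\log(\mathrm{poly}(\kappa)) = o(\sqrt{\kappa})$. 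Conclude $\mathbb{P}(E) \ge 1 - e^{-\Omega(\sqrt{\kappa})}$.

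\textbf{Main obstacle.} The only subtlety — and it is minor — is ensuring the union bound does not wash out the exponential gain, i.e., confirming that the polynomial number of checkpoints times the stretched-exponential per-event failure probability $e^{-\Omega(\sqrt{\kappa})}$ still gives $e^{-\Omega(\sqrt{\kappa})}$; this works precisely because $d$ is chosen of order $\sqrt{\kappa}\,\Delta$ rather than constant, so $\log M = O(\log\kappa)$ is negligible against $\sqrt{\kappa}$. One should also be careful that the $D_i$ for different checkpoints need not be independent, but since we only use a union bound this is irrelevant; and one should note that the domination in Lemma~\ref{lem:recent} is the only probabilistic input needed. The choice of the constant $c$ in $d = c\sqrt{\kappa}\,\Delta$ can be made explicit from the desired constant in the $\Omega(\sqrt{\kappa})$ exponent, but for the asymptotic statement any sufficiently large constant suffices.
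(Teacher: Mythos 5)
Your proposal is correct and follows essentially the same route as the paper: a union bound over the at most $T_{\rm max}/e$ checkpoints (via \textbf{CP2}), with each term bounded by the geometric tail $(1-p)^{d/(8\Delta)}$ from Lemma~\ref{lem:recent} and $d = \Theta(\sqrt{\kappa}\,\Delta)$ chosen so that the polynomial number of checkpoints is absorbed into $e^{-\Omega(\sqrt{\kappa})}$. Your added remarks on non-independence of the $D_i$ and on $\log(\mathrm{poly}(\kappa)) = o(\sqrt{\kappa})$ only make explicit what the paper leaves implicit.
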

\begin{proof}
Recall that there can be at most $T_{\rm max}/e$ checkpoints by {\bf CP2}. By the union bound, we have
\begin{equation*}
    \mathbb{P}(E^c) \leq \sum_i \mathbb{P}(D_i > d) \leq \frac{T_{\rm max}}{e} \mathbb{P}( {\rm Geo}(p) > \frac{d}{8\Delta}) = \frac{T_{\rm max}}{e} (1-p)^{\frac{d}{8\Delta}} = e^{-\Omega(\sqrt \kappa)}.
\end{equation*}
\end{proof}
Thus we conclude that the modified Algorand BA, given in Appendix \ref{sec:checkpointing}, provides the desired checkpointing properties {\bf CP0-CP3}.

\section{Checkpointed Longest Chain Properties}\label{sec:checkpointed_LC}
\subsection{Comparison With Longest Chain Protocol}
The security properties of the longest chain protocol has been intensely studied in recent years. The strong security properties have been demonstrated in increasing sophistication (both network models as well as tightness of security threshold):  the pioneering work of \cite{garay2015bitcoin} on the round by round network model has been extended to discrete and finely partitioned network model in \cite{pass2017analysis} and to a continuous time network model in \cite{ren2019analysis}. The tightness of the security threshold has been improved to the best possible in \cite{dembo2020everything}. Despite this wealth of technical background, the  {\em checkpointed} longest chain protocol has seemingly subtle differences with the vanilla longest chain protocol, but impact the analysis significantly. We discuss two of these issues next. 

\subsubsection{Synchrony versus Partial Synchrony}. The  security analyses of the longest chain protocol require that synchronous network conditions hold from the beginning of the protocol \cite{garay2015bitcoin,pass2017analysis,dembo2020everything}. 
In the partially synchronous model, where messages may be arbitrarily delayed until some unknown time $\textsf{GST}$, the protocol's security breaks down. More precisely, even after (a bounded time beyond) $\textsf{GST}$, safety and liveness may fail. We describe a possible attack scenario in Figure \ref{fig:partial_synchrony_attack}. 
\begin{figure}[htbp]
    \centering
    \includegraphics[width=\linewidth]{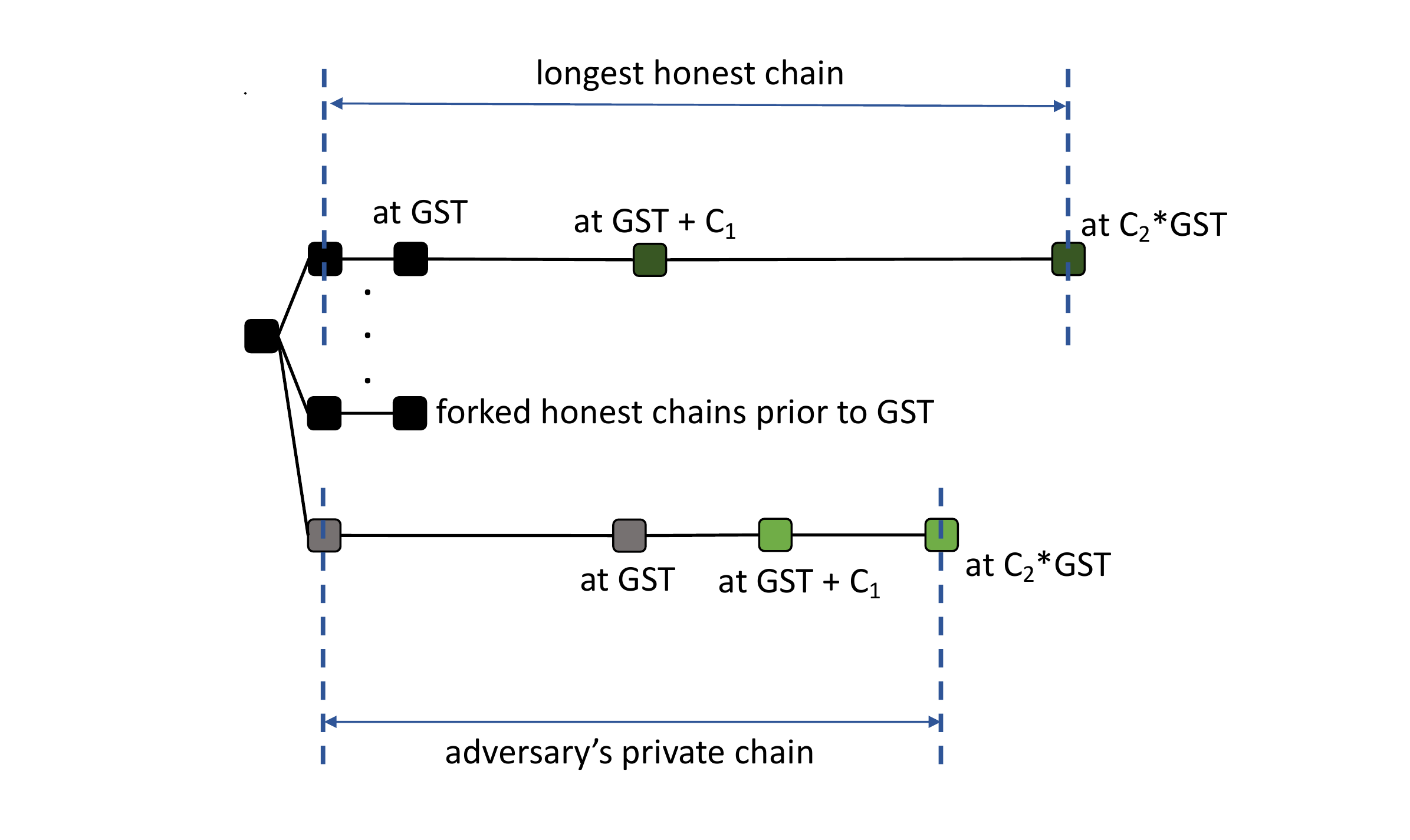}
    \caption{An illustration of the private attack by an adversary in a partially synchronous system. The private chain is longer than the longest honest chain for a short while after $\textsf{GST}$, but not indefinitely after.}
    \label{fig:partial_synchrony_attack}
\end{figure}
The aforementioned works do not provide any security guarantees in the partially synchronous model.  
However, the method of analysis suggests that it is possible to provide some security guarantees \textit{after $O(\textsf{GST})$ time}. This should hold for any adversarial power $\beta$ in which the protocol is secure in the synchronous regime. Consider the feasibility of a private attack, in which the adversary must mine a longer chain than that mined by the honest nodes. The adversary has some initial advantage immediately after \textsf{GST} by means of some private blocks (see Figure \ref{fig:partial_synchrony_attack}), but it will eventually lose out to the honest chain as the latter's growth rate is strictly larger. The analysis of \cite{dembo2020everything} suggests that the private attack is the worst-case attack. To the best of our knowledge, only \cite{neu2020ebb} analyzes a longest chain protocol (the sleepy consensus protocol of \cite{pass2017sleepy}) in the partially synchronous setting, and they show desirable properties hold after $O(\textsf{GST})$ time, with the constant being approximately $1/(1-2\beta)$.

\subsubsection{Non-Monotonic Chain Growth}. The checkpointed longest chain protocol can behave very differently compared to the classical longest chain protocol when checkpoints appear. In the classical version, the chain length of each honest node grows monotonically. In contrast, in our protocol, the chain held by an honest node may reduce in length once it hears of a new checkpoint. This could happen if its current chain does not contain the new checkpoint. This has important ramifications for security. A key property used in analyzing the classical protocol is that (after $\textsf{GST}$), two honest blocks that appear more than $\Delta$ time apart must be at different heights (see e.g., \cite{ren2019analysis}). This property needn't hold for the checkpointed longest chain protocol; in fact, it can be violated if the adversary chooses to do so. In effect, this causes a bleeding (or wastage) of honest mining power. An instance of this is illustrated in Figure \ref{fig:same_height_loners}. 
\begin{figure}[htbp]
    \centering
    \includegraphics[width=\linewidth]{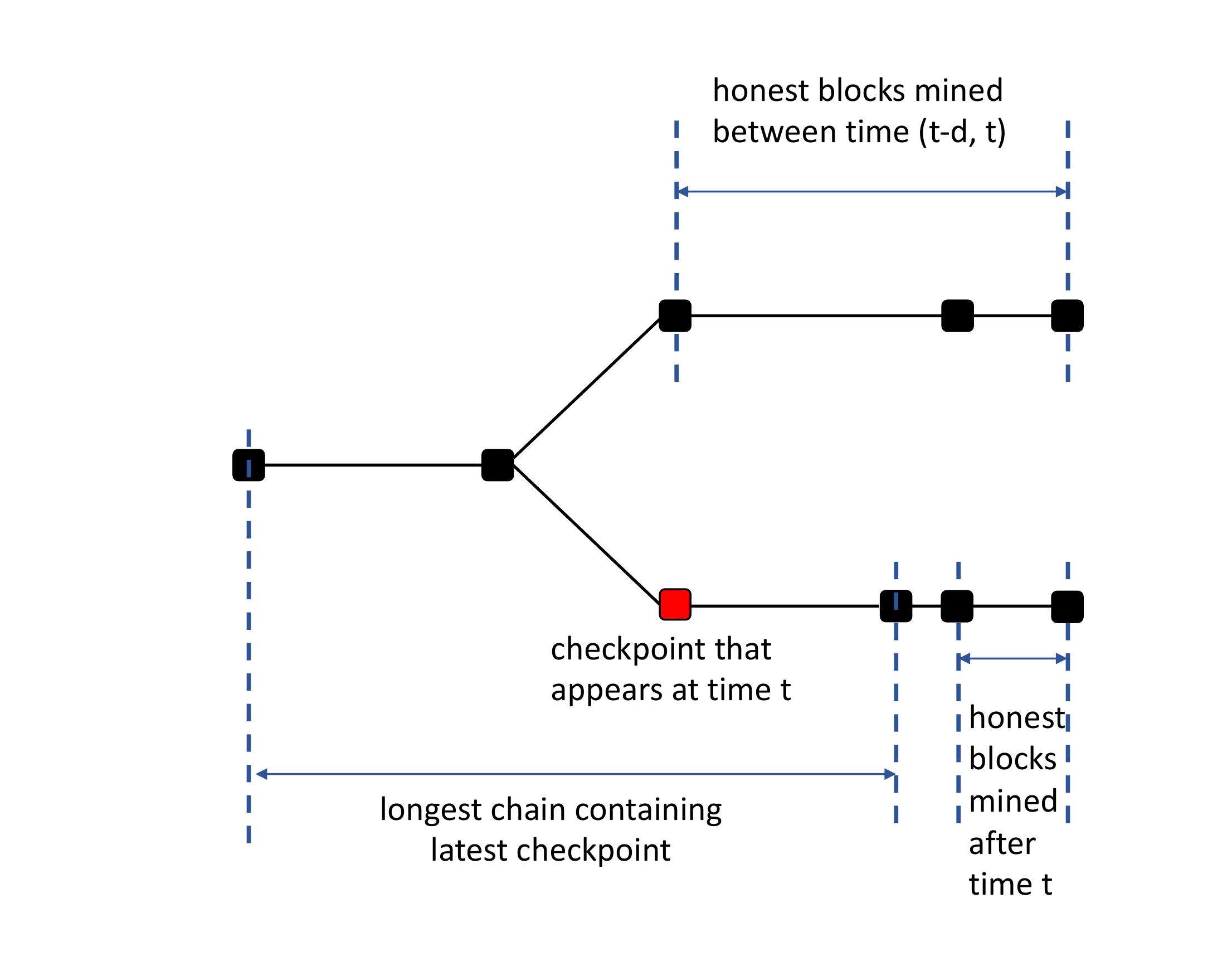}
    \caption{An illustration of a reduction in chain length due to the appearance of a checkpoint. Consequently, two honest blocks may be at the same height, even if they are mined more than $\Delta$ time apart.}
    \label{fig:same_height_loners}
\end{figure}
The analysis of \cite{neu2020ebb} uses many properties of longest chain protocols `off the shelf' from \cite{pass2017sleepy}. However, for reasons highlighted above, we must establish new properties that hold deterministically for our protocol, irrespective of the adversary's actions. These properties are weaker analogs of those satisfied by the longest chain protocol. This is to be expected, since the checkpointed longest chain protocol may behave just as the longest chain protocol irrespective of $\textsf{GST}$, if the adversary so wishes.
\subsection{Definitions}
We first define the notion of common prefix and chain quality, which are now standard in the literature (see, e.g., \cite{garay2015bitcoin}).
\begin{definition}[$k$-Common Prefix]\label{def:common_prefix}
{For a chain $\mathcal{C}$, let $\mathcal{C} \lfloor k$ denote the new chai nobtained by dropping the last $k$ blocks from $\mathcal{C}$. We say that the $k$-common prefix property holds if for any two chains $\mathcal{C}_1$, $\mathcal{C}_2$ held by honest nodes at times $t_1 \leq t_2$ respectively, it holds that $\mathcal{C}_1 \lfloor k \preceq \mathcal{C}_2$.}
\end{definition}

\begin{definition}[$(k, s)$-Chain Quality]
We say that the $(k, s)$-chain quality property holds if for all honestly held chains $\mathcal{C}$, every portion of $\mathcal{C}$ with $k$ consecutive blocks mined after time $s$ contains at least one honest block.
\end{definition}

Comparing to \cite{garay2015bitcoin}, our definition of $k$-common prefix is identical to theirs. Our definition of $(k, s)$-Chain Quality with $s = 0$ is identical to their definition of chain quality with parameter $\mu = 1/k$. We introduce the extra parameter $s$ to emphasize that the chain quality property holds only for the portion of the chain consisting of blocks mined after time $s = O(\textsf{GST})$. In the partially synchronous model, we cannot hope to do better. In \cite{garay2015bitcoin}, the authors show that common prefix is identical to safety (also called persistence) and chain quality is identical to liveness for the $k$-deep confirmation rule. Thus, to prove security of $\Pi_{\rm ada}$ after $O(\textsf{GST})$, it suffices to show that $k$-common prefix and $(k,s)$-chain quality hold after time $O(\textsf{GST})$, for $s = O(\textsf{GST})$.

\subsection{Proof Sketch}
In this section, we analyze the security of the protocol by discretizing time into slots. Note that this is purely a proof strategy; we do not change any assumptions about the behavior of the nodes, nor of the message timings. Our proof is structured similar to that in \cite{garay2015bitcoin}. It has three major parts that we describe below.
\begin{itemize}
    \item First, we define the notion of a typical execution (Definition \ref{def:typical}). A typical execution is one in which the number of mined honest and adversarial blocks over a sufficiently large period do not deviate from their mean by a large fraction. We introduce the notion of a \textit{convergence opportunity} here, similar to \cite{pass2017analysis}. We show that a typical execution occurs with high probability (Corollary \ref{coro:typical}).
    \item Next, we show necessary conditions for common-prefix violation and chain quality violation in terms of the mining process. These conditions are analogous to the condition in the classical longest chain protocol, which are roughly as follows: the number of adversarial blocks over an interval is more than the number of honest blocks over the same interval.
    \item Third, we show that these necessary conditions won't happen after $O(\textsf{GST} + \kappa)$ time in a typical execution (Lemma \ref{lem:prob}). Hence, with high probability, the common prefix property and the chain quality property of the checkpointed longest chain will hold after $O(\textsf{GST} + \kappa)$ time (Corollary \ref{coro:common}).
\end{itemize}
The first part is similar to existing works. Our main novelty lies is the second and third parts. As hinted above, we can at best hope to obtain weaker necessary conditions for common prefix violation and chain quality violation than those for the classical protocol. At the same time, a condition that is too weak may not be rare in a typical execution. We show that the necessary conditions are very similar for both $k$-common prefix violation and $(k, s)$-chain quality violation. Moreover, we show that these conditions are impossible in a typical execution after $O(\textsf{GST} + \kappa)$ time. Our analysis relies heavily on the properties of the checkpointing protocol proved in \S\ref{sec:checkpointing_property}. For clarity, we further split this part of the proof into four steps.
\begin{itemize}
    \item We reduce all common-prefix violation instances to a special case where two diverging chains are held by honest nodes at the same slot.
    \item We establish a lower bound on the lengths of honestly held chains in terms of convergence opportunities in an interval, with some slack to account for the possibility of bleeding.
    \item We establish the desired necessary condition for $k$-common prefix violation by analyzing the common portion and the forked portion of the two diverging honestly held chains.
    \item We establish the desired necessary condition for $(k, s)$-chain quality violation by comparing the chain growth for honest chains in an interval with the number of adversarial blocks in the same interval.
\end{itemize}
In this section, whenever we refer to a block $B$ being $k$-deep in a chain $\mathcal{C}$, we mean there are $k$ or more blocks that are descendants of $B$ in $\mathcal{C}$.

\subsection{Typical Execution}
From \S\ref{sec:protocol}, we know that the block mining process is a Poisson process with rate $\lambda$. This can be further split into an adversarial block mining process of rate $\beta\lambda$ and an independent honest block mining process of rate $(1-\beta)\lambda$, with $\beta < 1/2$. Therefore the number of honest/adversarial blocks mined in any interval is a Poisson random variable. By the Chernoff bound, we know that a Poisson random variable is tightly concentrated around its mean. However, we would like to obtain such a result for \textit{all intervals} of length bigger than $O(\kappa)$. Hence, for the sake of applying the union bound, we discretize time into slots: the $i\textsuperscript{th}$ slot is the time interval $((i-1) \Delta,i \Delta]$. The total number of slots in the entire execution of the system is $t_{\rm max} = \lceil T_{\rm max}/\Delta \rceil = O({\rm poly}(\kappa))$. We abuse notation to let $\textsf{GST}$ denote the first slot that begins after the global synchronization time. All our analysis henceforth will be in the form of slots, but the protocol remains a continuous time one as described in \S\ref{sec:protocol}.

We define the following random variables for slot $i$.
\begin{itemize}
    \item $Y_i = 1$ if there is exactly one block that is mined by honest nodes in slot $i$ and there is no block that is mined by honest nodes in slot $i-1$ and $i+1$, otherwise $0$. We follow the notation in \cite{pass2017analysis} to call such slots as a {\it convergence opportunity} (such a block is also called {\it loner} in \cite{ren2019analysis}).
    \item $Z_i$ is the number of adversarial blocks plus the number of honest blocks that are not a convergence opportunity, in slot $i$.
\end{itemize}
Let \[{\bar y} \triangleq \mathbb{E}[Y_i] = (1-\beta)\lambda\Delta\ e^{-3(1-\beta)\lambda\Delta}.\] Further, $\mathbb{E}[Y_i + Z_i] = \lambda\Delta$, and hence \[{\bar z} \triangleq \mathbb{E}[Z_i]= \lambda \Delta - \bar y.\]

For any $t_1 \leq t_2$, we define $Y[t_1, t_2] = \sum_{i=t_1+1}^{t_2} Y_i$ and $Z[t_1, t_2] = \sum_{i=t_1+1}^{t_2} Z_i$. With some abuse of notation, we let $Y[S]$ denote $\sum_{i\in S}Y_i$ for any set of slots $S \subset \mathbb{N}$. As we are now in discrete time, the notation $[t_1, t_2]$ denotes the set of slots $\{t_1, \ldots, t_2\}$.
We have that \[\mathbb{E}Y[t_1, t_2] = {\bar y}(t_2 - t_1), \quad \mathbb{E}Z[t_1, t_2] = {\bar z}(t_2 - t_1).\] 
We note that for any $\beta < 1/2$ and any $\Delta$, there exists $\varepsilon > 0$ and $\lambda > 0$ s.t. \[{\bar y}(1 - 3\varepsilon) > {\bar z}.\] In particular, this implies ${\bar y} > {\bar z}$, and $(1 - \varepsilon){\bar y} > {\bar z} + \varepsilon {\bar y}$. We shall operate under this regime. We also assume that $\lambda \Delta < 1$. This condition implies $\bar y = \Omega(\lambda \Delta)$, where the constants hidden by $\Omega(\lambda \Delta)$ do not depend on $\beta, \lambda, \Delta$.

With these notations in place, we can now define the notion of a {\it typical execution}.
\begin{definition}[Typical execution]
\label{def:typical}
An execution is $(\varepsilon, \tau)$-typical 
for a $\beta$-corrupt system, 
for $\varepsilon \in (0,1)$
and $\tau \in \mathbb{N}$,
if the following events hold
{for any $t_1, t_2$ s.t. $t_2 - t_1 \geq \tau$}:
\begin{align*}
    |Y[t_1, t_2]- \mathbb{E}Y[t_1, t_2]|&\le \varepsilon\mathbb{E}Y[t_1, t_2]; \\
    |Z[t_1, t_2]-\mathbb{E}Z[t_1, t_2]| &\le \varepsilon\mathbb{E}Y[t_1, t_2];\\
    |Y[t_1, t_2] + Z[t_1, t_2]-(\mathbb{E}Y[t_1, t_2] + \mathbb{E}Z[t_1, t_2])| &\le \varepsilon(\mathbb{E}Y[t_1, t_2] + \mathbb{E}Z[t_1, t_2]).
\end{align*}
\end{definition}

\begin{proposition}
\label{prop:ineq} For any $t_1 \leq t_2$, the following events hold with probability at least {$1-e^{-\Omega(\varepsilon^2 \lambda \Delta |t_2 - t_1|)}$}: 
\begin{align*}
    |Y[t_1, t_2]- \mathbb{E}Y[t_1, t_2]|&\le \varepsilon\mathbb{E}Y[t_1, t_2]; \\
    |Z[t_1, t_2]-\mathbb{E}Z[t_1, t_2]| &\le \varepsilon\mathbb{E}Y[t_1, t_2];\\
    |(Y[t_1, t_2] + Z[t_1, t_2])-(\mathbb{E}Y[t_1, t_2] + \mathbb{E}Z[t_1, t_2])| &\le \varepsilon(\mathbb{E}Y[t_1, t_2] + \mathbb{E}Z[t_1, t_2]).
\end{align*}
\end{proposition}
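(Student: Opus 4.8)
The plan is to prove Proposition~\ref{prop:ineq} as a routine consequence of Poisson tail bounds (Chernoff/Bernstein) applied to each of the three random variables, combined with a union bound, exploiting the fact that all three quantities are sums of independent contributions across slots. First I would observe that, since the honest and adversarial mining processes are independent Poisson processes, the vector of per-slot counts $(Y_i, Z_i)$ for $i \in \{t_1+1, \dots, t_2\}$ are independent across $i$ (each slot's counts depend only on the Poisson arrivals in that slot together with neighboring slots for the ``no block in slot $i-1, i+1$'' condition, but one can either invoke a standard dependency-graph Chernoff bound, or—cleaner—partition the slots into three residue classes mod $3$ so that within each class the $Y_i$ are genuinely independent, apply concentration within each class, and reassemble). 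For $Z[t_1,t_2]$ and $Y[t_1,t_2]+Z[t_1,t_2]$ the situation is even simpler: $Y[t_1,t_2] + Z[t_1,t_2]$ counts \emph{all} blocks mined in the interval, which is exactly a Poisson random variable with mean $\lambda\Delta(t_2 - t_1)$, so the third inequality is an immediate multiplicative Chernoff bound for a Poisson variable, giving failure probability $e^{-\Omega(\varepsilon^2 \lambda\Delta(t_2-t_1))}$ since the mean is $\Theta(\lambda\Delta(t_2-t_1))$.

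The second step handles $Y[t_1,t_2]$. Here I would use the partition-into-residue-classes trick: write $Y[t_1,t_2] = \sum_{j=0}^{2} Y^{(j)}$ where $Y^{(j)}$ sums $Y_i$ over slots $i$ in residue class $j$ mod $3$; within each class the indicator variables $Y_i$ are independent (the events defining distinct $Y_i, Y_{i'}$ in the same class depend on disjoint sets of slots). Apply a Chernoff bound to each $Y^{(j)}$ around its mean $\bar y \cdot |{\{i \in [t_1+1,t_2] : i \equiv j\}}|$, with deviation $\tfrac{\varepsilon}{3}$ of the total mean; union over the three classes. Since $\bar y = \Omega(\lambda\Delta)$ by the assumption $\lambda\Delta < 1$, each class has mean $\Omega(\lambda\Delta(t_2-t_1))$ and the failure probability is again $e^{-\Omega(\varepsilon^2\lambda\Delta(t_2-t_1))}$. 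For $Z[t_1,t_2]$, note $Z = (Y+Z) - Y$, so its concentration follows by combining the bounds on $Y+Z$ and on $Y$ (a triangle inequality, absorbing constants into the $\Omega$); alternatively bound $Z$ directly the same way. Crucially, the right-hand sides of the second and third displayed inequalities are stated in terms of $\varepsilon \mathbb{E}Y[t_1,t_2]$ (resp.\ $\varepsilon(\mathbb{E}Y + \mathbb{E}Z)$), and since $\mathbb{E}Y = \bar y(t_2-t_1) = \Omega(\lambda\Delta(t_2-t_1))$, a deviation that is small relative to the \emph{total} mean $\lambda\Delta(t_2-t_1)$ is, up to constants, also small relative to $\mathbb{E}Y$, so the bookkeeping goes through.

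Finally I would take a union bound over the three events to conclude that all three hold simultaneously with probability at least $1 - e^{-\Omega(\varepsilon^2\lambda\Delta|t_2-t_1|)}$, as claimed. The main obstacle—really the only subtlety—is the mild dependence among the $Y_i$'s induced by the ``loner'' definition spanning three consecutive slots; this is exactly what the residue-class partition is designed to neutralize, and it is the one place where a little care is needed rather than a black-box invocation of a Poisson Chernoff bound. Everything else is standard concentration bookkeeping, and indeed Proposition~\ref{prop:ineq} immediately yields Corollary~\ref{coro:typical} (that the execution is typical with high probability) by a union bound over the $O(\mathrm{poly}(\kappa))$ choices of interval endpoints, provided $\tau = \Omega(\kappa/(\varepsilon^2\lambda\Delta))$.
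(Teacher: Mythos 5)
Your proposal is correct and follows essentially the same route as the paper's proof: a Poisson Chernoff bound for $Y[t_1,t_2]+Z[t_1,t_2]$, the residue-class-mod-3 decomposition of $Y[t_1,t_2]$ into independent (binomial) pieces with deviation $\varepsilon/3$ per class, and the bound on $Z[t_1,t_2]$ obtained by differencing the other two with slightly tightened constants (the paper likewise replaces $\varepsilon$ by $\varepsilon/3$ and uses $\bar y > \bar z$ to keep the right-hand side at $\varepsilon\,\mathbb{E}Y[t_1,t_2]$). No gaps worth noting.
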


\begin{proof}
The Chernoff bound for a binomial or a Poisson random variable $X$ gives us
\[\mathbb{P}(|X - \mathbb{E}X| \geq \delta \mathbb{E}X) \leq 2e^{-\delta^2\mathbb{E}X/3} \]

We first prove the concentration bound on $Y[t_1, t_2]$. We cannot directly apply standard concentration inequalities on $Y[t_1, t_2]$ since $Y_i$'s are not independent random variables. However, for any $i$, $Y_i$ and $Y_{i+3}$ are independent. Each $Y_i$ is Bernoulli({$\bar y$}). Hence, $Y[t_1, t_2]$ can be written as the sum of three binomial random variables.
\[Y[t_1, t_2] = \sum_i Y_{t_1 + 3i} + \sum_i Y_{t_1 + 3i + 1} + \sum_i Y_{t_1 + 3i + 2}.\]
Using the Chernoff bound and the union bound, we get that each of these three binomial random variables are within a factor of $\varepsilon/3$ from their mean, except with probability with probability $e^{-\Omega(\varepsilon^2 \lambda \Delta (t_2 - t_1))}$. When this happens, the desired bound on $Y[t_1, t_2]$ follows. 

Next, we note that $Y[t_1, t_2]+Z[t_1, t_2]$ is a Poisson random variable. Using the Chernoff bound for Poisson random variables, and the fact that $\mathbb{E}(Y[t_1, t_2]+Z[t_1, t_2]) = \lambda \Delta (t_2 - t_1)$, we obtain that $Y[t_1, t_2]+Z[t_1, t_2]$ is within a factor of $\varepsilon$ from its mean except with probability $e^{-\Omega(\varepsilon^2 \lambda \Delta (t_2 - t_1))}$. 

It remains to bound $Z[t_1, t_2]$. For this, we need a slightly tighter bound on $Y[t_1, t_2]$ and $Y[t_1, t_2] + Z[t_1, t_2]$ than the one in the statement of the proposition; we replace $\varepsilon$ by $\varepsilon/3$. Both these bounds also hold except with probability $e^{-\Omega(\varepsilon^2 \lambda \Delta (t_2 - t_1))}$. From these bounds, and the relation ${\bar y} > {\bar z}$, we can the desired bound on $Z[t_1, t_2]$.
\end{proof}

There are at most $t_{\rm max}^2$ choices of $t_1, t_2$. The following corollary follows from Proposition \ref{prop:ineq} and the union bound.

\begin{corollary}\label{coro:typical}
The protocol executes with ($\varepsilon,\tau$)-\textit{typical execution} with probability at least $1-e^{-\Omega(\varepsilon^2\kappa)}$.  
\end{corollary}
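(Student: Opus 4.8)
The plan is to derive Corollary \ref{coro:typical} directly from Proposition \ref{prop:ineq} by a union bound over all pairs of slots, the only real choice being the value of the window parameter $\tau$.

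First I would fix $\tau$ so that $\lambda\Delta\,\tau = \Theta(\kappa)$, i.e.\ $\tau = \Theta(\kappa/(\lambda\Delta))$. This is a legitimate choice: by the standing assumptions $\lambda\Delta < 1$ and $\bar y = \Omega(\lambda\Delta)$ with the hidden constant independent of $\beta,\lambda,\Delta$, so $\tau$ is a well-defined positive integer (up to rounding) and the constants appearing in Proposition \ref{prop:ineq} are uniform in $\beta,\lambda,\Delta$. With this $\tau$, for each fixed pair $t_1 \le t_2$ with $t_2 - t_1 \ge \tau$, Proposition \ref{prop:ineq} says the three concentration inequalities in Definition \ref{def:typical} all hold except with probability at most $e^{-\Omega(\varepsilon^2 \lambda\Delta (t_2-t_1))}$, and since this exponent is monotone nondecreasing in $t_2-t_1$, it is at most $e^{-\Omega(\varepsilon^2 \lambda\Delta \tau)} = e^{-\Omega(\varepsilon^2\kappa)}$.

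Next I would take the union bound over all relevant pairs. The execution lasts $t_{\rm max} = \lceil T_{\rm max}/\Delta\rceil = O(\mathrm{poly}(\kappa))$ slots, so there are at most $t_{\rm max}^2 = O(\mathrm{poly}(\kappa))$ choices of $(t_1,t_2)$. Hence the probability that the execution is \emph{not} $(\varepsilon,\tau)$-typical is at most $t_{\rm max}^2 \cdot e^{-\Omega(\varepsilon^2\kappa)}$. Finally, since $\log(t_{\rm max}^2) = O(\log\kappa) = o(\kappa)$, the polynomial prefactor is absorbed into the exponent, giving $t_{\rm max}^2\, e^{-\Omega(\varepsilon^2\kappa)} = e^{-\Omega(\varepsilon^2\kappa)}$, which is the stated bound.

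There is essentially no analytical obstacle: all the probabilistic content sits in Proposition \ref{prop:ineq}, and this corollary is just its union-bounded, uniform-in-all-intervals version. The one point that deserves a line of care is the bookkeeping on constants — making sure the $\Omega(\cdot)$ in the final statement can be chosen independently of $\beta,\lambda,\Delta$ — which is exactly why $\tau$ is taken proportional to $\kappa/(\lambda\Delta)$ rather than to $\kappa$ alone, and why we rely on the assumption that the constant in $\bar y = \Omega(\lambda\Delta)$ does not depend on these parameters.
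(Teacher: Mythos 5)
Your proof is correct and follows essentially the same route as the paper's: apply Proposition \ref{prop:ineq}, union-bound over the at most $t_{\rm max}^2$ pairs of slots, choose $\tau$ so that $\lambda\Delta\tau = \Theta(\kappa)$, and absorb the ${\rm poly}(\kappa)$ prefactor into the exponent. Your extra remarks on monotonicity of the exponent and uniformity of constants are fine and only make explicit what the paper leaves implicit.
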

\begin{proof}
From Proposition \ref{prop:ineq} and the union bound, we get that the probability of a typical execution is at least $1- t_{\rm max}^2 e^{-\Omega(\varepsilon^2 \lambda \Delta \tau)}$. We can choose $\tau$ large enough such that $\lambda\Delta\tau=\Theta(\kappa)$. We know that $t_{\rm max} = {\rm poly}(\kappa)$. From this, the stated result follows.
\end{proof}


\subsection{Necessary Conditions for Common Prefix/Chain Quality Violation}
We begin by noting some basic properties of the checkpointing protocol, and also introduce some new notation that is useful for this subsection. Our first result is a chain growth lemma (Lemma \ref{lem:chain_growth}). The chain growth lemma highlights the extent of bleeding that can take place in the checkpointing protocol. Next, we focus on the common prefix property. We show that without loss of generality, we can assume that common-prefix violation occurs between two chains at the same slot (Lemma \ref{lem:cp_reduction}). Following this, we show that in the forked portion, the number of adversarial blocks over the forked portion must be at least half the total number of blocks. This gives us the desired necessary condition comparing the number of convergence opportunities and adversarial blocks over a sufficiently large period (Lemma \ref{lem:necessary}). Finally, we analyze the chain quality property in Lemma \ref{lem:chain_quality_necessary}.

\subsubsection{Basic properties of checkpoints}
Let us recall some properties of the checkpointing algorithm, re-stated in terms of slots instead of discrete time.
\begin{enumerate}
    \item We say that a checkpoint appears in slot $t$ if the first time an honest node marks the checkpoint is in slot $t$. If a checkpoint appears at slot $t$, all honest nodes mark the checkpoint by the beginning of slot $t+2$. 
    \item Let $d' \triangleq \lceil d/\Delta \rceil$. Any checkpoint that appears at slot $t > \textsf{GST} + d'$ was at least $k$-deep in a chain held by an honest node in some slot $t' \in \{t - d', \ldots t\}$.
    \item Let $e' = \lfloor e/\Delta\rfloor$. If two consecutive checkpoints appear at slots $t, t'$, then $t' \geq t + e'$.
\end{enumerate}

Let $\textsf{GST} + d' < t^*_1 \leq t^*_2 \leq \ldots$ be the sequence of slots at which checkpoints appear after slot $\textsf{GST} + d'$. Let the corresponding checkpoint blocks be $B_1^*, B_2^*, \ldots$. Let $t^*_0 \triangleq \textsf{GST}$. 
Define $S_l \triangleq \{t^*_l + 2, \ldots, t^*_{l+1} - d' - 2\}$ for all $l \geq 0$ and let $S \triangleq \cup_{l \geq 0} S_l$. We refer to each $S_l$ as an inter-checkpoint interval.

\subsubsection{Chain growth results}

\begin{lemma}[Chain Length Lower Bound]\label{lem:chain_length_lowet_bound}
Let $t_1, t_2$ be two slots such that $t_1 \in S$ and and $t_2 \geq t_1 + 2$. Let $\mathcal{C}$ be a chain held by some honest node in slot $t_1$. Then all honest nodes will hold a chain of length at least $|\mathcal{C}|$ in slot $t_2$.
\end{lemma}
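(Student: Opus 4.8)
The plan is to show that $\mathcal{C}$ becomes a permanent ``floor'' for the lengths of honest chains from slot $t_1+2$ onward: first that every honest node holds a chain of length $\ge |\mathcal{C}|$ by slot $t_1+2$, and then that this lower bound is never breached, even when new checkpoints appear.

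\textbf{Propagation and the initial bound.} Since $t_1 \in S$, write $t_1 \in S_l$, so $t^*_l + 2 \le t_1 \le t^*_{l+1} - d' - 2$. By the first basic property of checkpoints, all honest nodes have marked $B^*_l$ by the start of slot $t_1$, and since $t_1 < t^*_{l+1}$ no honest node has yet heard of $B^*_{l+1}$ or any later checkpoint; hence $B^*_l$ is the globally latest known checkpoint at slot $t_1$, and by the checkpointed longest chain rule $\mathcal{C}$ contains $B^*_l$. By the $\Delta$-delay bound and the re-broadcast assumption, all honest nodes receive $\mathcal{C}$ by the start of slot $t_1+2$; at that moment $B^*_l$ is still the latest checkpoint anyone knows (as $t^*_{l+1} \ge t_1 + d' + 2 > t_1 + 2$), so $\mathcal{C}$ is an admissible chain for every honest node. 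Hence every honest node holds a chain of length at least $|\mathcal{C}|$ at slot $t_1+2$.

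\textbf{Preservation within an inter-checkpoint window.} As long as the latest checkpoint known to the honest nodes does not change, each honest node only ever switches to a longer admissible chain, so the minimum honest chain length is non-decreasing. Thus, once the bound $\ge |\mathcal{C}|$ holds at the first slot of an inter-checkpoint window, it holds throughout that window; in particular it holds throughout $[t_1+2, t^*_{l+1})$.

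\textbf{Crossing a checkpoint.} We induct over the checkpoints $B^*_{l+1}, B^*_{l+2}, \dots$ appearing after slot $t_1$. Suppose the bound holds just before $B^*_m$ (with $m \ge l+1$) appears at slot $t^*_m$. By the second basic property, $B^*_m$ was at least $k$-deep in a chain $\mathcal{C}'$ held by an honest node at some slot $t' \in \{t^*_m - d', \dots, t^*_m\}$; combining the gap property ($e' \gg d'$) with $t^*_{l+1} \ge t_1 + d' + 2$ gives $t' \ge t_1 + 2$, so the inductive hypothesis applies at slot $t'$ and $|\mathcal{C}'| \ge |\mathcal{C}|$. Moreover $\mathcal{C}'$ contains $B^*_m$, which by the VALID condition of the checkpointing protocol (equivalently property \textbf{P1} together with \textbf{CP0}) is a descendant of every earlier checkpoint; hence $\mathcal{C}'$ contains all checkpoints and is admissible for every honest node once it receives the checkpoint message for $B^*_m$, which by \textbf{P3} arrives together with such a chain. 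Therefore, after $B^*_m$ is heard, whether or not a node is forced off its current chain, every honest node holds an admissible chain of length at least $|\mathcal{C}'| \ge |\mathcal{C}|$, and by the preservation step this persists until the next checkpoint. Putting the three steps together, at every slot $t_2 \ge t_1 + 2$ all honest nodes hold a chain of length at least $|\mathcal{C}|$.

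\textbf{Main obstacle.} The delicate point is the inductive step across a checkpoint: one must guarantee that the honest chain $\mathcal{C}'$ witnessing the new checkpoint's depth is itself long, which requires $\mathcal{C}'$ to be held at a slot $t'$ to which the inductive hypothesis already applies --- this is exactly where the recency parameter $d$ (\textbf{CP1}) and the gap $e \gg d$ (\textbf{CP2}) are indispensable --- and that $\mathcal{C}'$ be admissible for every honest node, which relies on checkpoints lying on a single chain (\textbf{P1}/\textbf{CP0}) and on the checkpoint message carrying a suitable chain (\textbf{P3}). Some care is also needed for the boundary case $t' = t^*_m$ and, when $l=0$, for residual message delays around $\textsf{GST}$; both are absorbed by the $+2$ and $-d'-2$ slacks built into the definition of $S$.
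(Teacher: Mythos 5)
Your proof is correct and follows essentially the same route as the paper's: establish that $\mathcal{C}$ (containing the latest checkpoint $B^*_l$) reaches all honest nodes by slot $t_1+2$, and then handle nodes that have marked a new checkpoint by using the recency window $[t^*_{l+1}-d',t^*_{l+1}]$ together with $t_1 \le t^*_{l+1}-d'-2$ to find an honest witness chain already subject to the bound, iterating over subsequent checkpoints. Your forward-induction phrasing is just a reorganization of the paper's two-case argument (no new checkpoint marked vs.\ new checkpoint marked, then induction), with the same reliance on \textbf{CP1}/\textbf{CP2}, \textbf{P1}/\textbf{CP0}, and \textbf{P3}.
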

\begin{proof}
Let $h$ be an honest node that holds chain $\mathcal{C}$ in slot $t_1$, and let $h'$ be an arbitrary honest node (possibly $h$) that holds a chain $\mathcal{C}'$ in slot $t_2$. Since $t_1 \in S$, we know that $t \in S_l$ for some $l$. Therefore $B^*_l$ is the last checkpoint that is marked in $\mathcal{C}$. Clearly, all honest nodes will hear of $\mathcal{C}$ by the beginning of slot $t_2$, and would also have marked $B^*_l$ as a checkpoint. We aim to show that $|\mathcal{C}'| \geq |\mathcal{C}|$. There are now two cases to consider.
%
\begin{itemize}
    \item Node $h'$ has not marked any new checkpoint at the time it holds $\mathcal{C}'$. In this case, it will hold a chain at least as long as $\mathcal{C}$, as it is a chain that it has heard of that contains all the checkpoints that it has marked.
    \item Node $h'$ has marked at least one new checkpoint at the time it holds $\mathcal{C}'$. In particular, it has marked $B^*_{l+1}$. Note that $B^*_{l+1}$ must be exactly $k$-deep in some honestly held chain $\mathcal{C}^*$ at some slot in the interval $[t^*_{l+1} - d', t^*_{l+1}]$ in which $B^*_{l+1}$ was not marked as a checkpoint. By the first case, $\mathcal{C}^*$ must be at least as long as $\mathcal{C}$, since $t_1 \leq t^*_{l+1} - d' - 2$. Further, all honest nodes must hold a chain at least as long as $\mathcal{C}^*$ at the time of marking $B^*_{l+1}$. Arguing inductively, we see that all honest nodes that have marked at least one new checkpoint compared to $\mathcal{C}$ also hold a chain at least as long as $\mathcal{C}$.
%
\end{itemize}
\end{proof}

\begin{lemma}[Chain Growth]\label{lem:chain_growth}
Consider a chain $\mathcal{C}$ held by an honest node in a slot $t \geq \textsf{GST} + 1$. Suppose $\mathcal{C}$ contains an honest block $B$ mined in slot $s$ at height $\ell$. Then $|\mathcal{C}| \geq \ell + Y[S \cap [s+2, t-2]]$.
\end{lemma}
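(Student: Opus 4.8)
The plan is to track the growth of the shortest honestly-held chain across the convergence opportunities that fall inside $S$, and to invoke Lemma~\ref{lem:chain_length_lowet_bound} (Chain Length Lower Bound) to carry this growth safely across the slots in which new checkpoints appear. Write $m \triangleq Y[S \cap [s+2,t-2]]$ and let $i_1 < i_2 < \cdots < i_m$ enumerate the convergence-opportunity slots in $S \cap [s+2,t-2]$, with $B_{i_j}$ denoting the unique honest block mined in slot $i_j$. The case $m=0$ is immediate, since $\mathcal{C}$ contains $B$ and $B$ is at height $\ell$. For $m \ge 1$, the core of the proof will be the claim that, for every $j$, the (honest, online) miner of $B_{i_j}$ holds a chain of length at least $\ell + j$ at slot $i_j$; given this, one last application of Lemma~\ref{lem:chain_length_lowet_bound} with $t_1 = i_m \in S$ and $t_2 = t \ge i_m + 2$ (the inequality $i_m \le t-2$ holds by construction) shows that every honest node, and in particular the node holding $\mathcal{C}$, holds a chain of length at least $\ell + m$ at slot $t$.

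For the base case $j=1$, I would first show that every online honest node holds a chain of length at least $\ell$ at slot $i_1$. Since $i_1 \in S \subseteq [\textsf{GST}+2,\infty)$ and $i_1 \ge s+2$, by slot $i_1$ the $\Delta$ message-delay bound is in force and every honest node has heard of the chain through $B$, which has length at least $\ell$. Let $\hat B$ be the last checkpoint known to an arbitrary online honest node $h$ at slot $i_1$. Since $\hat B$ appeared at some slot at most $i_1 \le t-2$, the node holding $\mathcal{C}$ has heard of $\hat B$ by slot $t$, so $\hat B$ is a weak ancestor of $\mathcal{C}$'s last checkpoint and hence, because all checkpoints lie on a single chain ({\bf CP0}, {\bf P1}), lies on $\mathcal{C}$. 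As $B$ also lies on $\mathcal{C}$, the blocks $\hat B$ and $B$ are comparable. If $\hat B$ is an ancestor of $B$, then the chain through $B$ already contains $h$'s last checkpoint, so $h$ adopts a chain of length at least $\ell$. If instead $\hat B$ is a descendant of $B$, then every chain containing $\hat B$ contains $B$, so the chain $h$ holds has length at least $\ell$. Since $i_1$ is a convergence opportunity, its miner builds $B_{i_1}$ on a chain of length at least $\ell$, so $B_{i_1}$ sits at height at least $\ell+1$, which is the claim for $j=1$.

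For the inductive step, I would reason as follows: if the miner of $B_{i_j}$ holds a chain of length at least $\ell+j$ at slot $i_j \in S$, then Lemma~\ref{lem:chain_length_lowet_bound} yields that every honest node holds a chain of length at least $\ell+j$ at every slot at least $i_j+2$; since $i_j$ and $i_{j+1}$ are both convergence opportunities, slot $i_j+1$ contains no honest block whereas slot $i_{j+1}$ does, forcing $i_{j+1} \ge i_j+2$, and so the miner of $B_{i_{j+1}}$ holds a chain of length at least $\ell+j$ at slot $i_{j+1}$ and builds $B_{i_{j+1}}$ at height at least $\ell+j+1$. This completes the induction, and combining it with the final application of Lemma~\ref{lem:chain_length_lowet_bound} described above completes the proof.

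I expect the main obstacle to be the base case, specifically coping with the non-monotone behaviour of chain length (the ``bleeding'' illustrated in Figure~\ref{fig:same_height_loners}): a priori an honest node could switch to a newly-appeared checkpoint lying off the chain through $B$ and thereby drop below height $\ell$. The decisive observation that rules this out is that $\mathcal{C}$ itself contains $B$, so by the single-chain structure of checkpoints every checkpoint an honest node can hold while $\mathcal{C}$ remains viable must be comparable to $B$; moreover, the intervals $S_l$ are defined with exactly the $d'+2$-slot (respectively $2$-slot) slack that lets Lemma~\ref{lem:chain_length_lowet_bound} shepherd the accumulated growth across the slots in which a fresh checkpoint surfaces. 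The remaining bookkeeping, namely the $+2$ slack in $[s+2,t-2]$ coming from the $\Delta$ message delay and the edge cases near $\textsf{GST}$, should be routine.
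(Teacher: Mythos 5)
Your proposal is correct, and its skeleton is the same as the paper's: both proofs rest on Lemma~\ref{lem:chain_length_lowet_bound}, use the fact that consecutive convergence opportunities are at least two slots apart to push the height up by one at each convergence opportunity in $S\cap[s+2,t-2]$, and finish with one more application of the lemma at $t_1=i_m$, $t_2=t$ (the paper phrases the middle step as ``all convergence opportunities in $S$ are mined at distinct heights'' rather than as an explicit induction, but it is the same argument). The one place where you genuinely diverge is the initial height bound. The paper anchors at the mining slot $s$ of $B$ and splits into three cases ($s\in S$; $s\notin S$ with $s\geq\textsf{GST}$; $s<\textsf{GST}$), in the first case applying the chain-length lemma directly with $t_1=s$, and in the other two arguing that by slot $t_l^*+2$ (resp.\ $\textsf{GST}+1$) all honest nodes see a chain containing both $B$ and the relevant checkpoints. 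You instead anchor at the first convergence opportunity $i_1\in S$ and give a single argument: since any node's latest checkpoint $\hat B$ at slot $i_1$ appeared by slot $i_1\le t-2$, the holder of $\mathcal{C}$ has marked it by slot $t$, so by {\bf CP0}/{\bf P1} $\hat B$ lies on $\mathcal{C}$ and is therefore comparable to $B$, and in either direction of the comparison the node's held chain has length at least $\ell$. This buys a uniform treatment that subsumes the paper's case split (including $s<\textsf{GST}$ and the transition windows around checkpoint appearances), at the cost of the extra observation that $B\in\mathcal{C}$ forces comparability with every such $\hat B$ --- an observation the paper's $s\in S$ case gets for free from the lemma. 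One cosmetic point: your induction claim ``the miner of $B_{i_j}$ holds a chain of length at least $\ell+j$ at slot $i_j$'' should be read as referring to the chain ending at $B_{i_j}$, held by its miner within slot $i_j$ after mining; with that reading the applications of Lemma~\ref{lem:chain_length_lowet_bound} (which needs a chain held by some honest node in a slot of $S$) go through exactly as you state.
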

\begin{proof}
From Lemma \ref{lem:chain_length_lowet_bound}, we can deduce three useful facts. Firstly, the first converge opportunity block in $S \cap [s+2, t-2]$ will be built at a height $\geq \ell + 1$, since all honest nodes by that slot would have adopted a chain at least as long as $\ell$.  Secondly, we see that all convergence opportunities in $S$ are mined at distinct heights. This is because all convergence opportunities are spaced at least $2$ slots apart. Thirdly, all honest nodes will adopt a chain that is at least as long as the chain mined by the last converge opportunity in $S \cap [s+2, t-2]$. Together, these two statements prove the desired statement when $s \in S$.

To prove the same result when $s \not \in S$, we consider the following two cases:

\begin{itemize}
    \item Suppose $s \notin S$, and $s \geq \textsf{GST}$. This implies $s \in \{t_l^* - d' - 1, \ldots t_l^* + 1\}$ for some $l \geq 1$. then in slot $t_l^*+2$ all honest nodes will see both block $B$ and checkpoint $B_l$. Moreover, block $B$ and checkpoint $B_l$ are on the same chain $\mathcal{C^*}$, hence all honest nodes should hold a chain of length at least $\ell$ in slot $t_l^* + 2$, simply because $\mathcal{C^*}$ contains $B$. The rest of the proof is identical to the case $s \in S$.
    \item Suppose $s \notin S$ and $s<\textsf{GST}$. Let $\mathcal{B}$ be the set of checkpoints with appearance time $<= \textsf{GST}$, i.e, for a block $B^* \in \mathcal{B}$, at least one honest node marks $B^*$ before $\textsf{GST}$. Then in slot $\textsf{GST} + 1$, all honest nodes will see a chain $\mathcal{C^*}$ that contains block $B$ and all checkpoints in $\mathcal{B}$. Hence all honest nodes should hold a chain of length at least $\ell$ in slot $\textsf{GST}+1$, simply because $\mathcal{C^*}$ contains $B$. The rest of the proof is identical to the case $s \in S$.
\end{itemize}

\end{proof}

\subsubsection{Analysis of Common Prefix Property}
Let us first note a simple fact about the checkpointed longest chain rule. In what follows, we say a chain $\mathcal{C}$ was held by an honest node in slot $t$ if it was held by the node at any time in the interval $((t-1)\Delta, t\Delta]$. 

\begin{proposition}\label{lem:cp_persistence}
If a block $B$ is $k$-deep in a chain $\mathcal{C}$ held by an honest node $h$ in slot $t$, then it either remains $k$-deep in $\mathcal{C}'$ or is not present in $\mathcal{C'}$, where $\mathcal{C'}$ is an chain held by $h$ in slot $t'$.
\end{proposition}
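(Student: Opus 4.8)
\emph{Proof sketch / plan.} I will prove the statement for $t' \ge t$; this is the case that is used in the sequel, and the word ``remains'' indicates it is the intended direction. Write $h$ for the honest node, and for a block $B$ contained in a chain let $\mathrm{ht}(B)$ denote its height, so that any chain $\mathcal{C}$ with $B\in\mathcal{C}$ has exactly $|\mathcal{C}|-\mathrm{ht}(B)$ blocks below $B$; thus ``$B$ is $k$-deep in $\mathcal{C}$'' means $|\mathcal{C}|\ge \mathrm{ht}(B)+k$. If $B\notin\mathcal{C}'$ there is nothing to prove, so assume $B\in\mathcal{C}'$. The plan is to use only two unconditional features of the protocol --- no appeal to \textsf{GST}, typical executions, or $\textbf{CP0}$--$\textbf{CP3}$ is needed. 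First, by the checkpointed longest chain rule, $h$'s chain at any slot is at least as long as every chain $h$ has ever heard of that contains the last checkpoint $h$ knows (ties broken adversarially). Second, honest nodes never discard chains they have heard, so at slot $t'$ the node $h$ still ``knows'' the chain $\mathcal{C}$ it held at slot $t$.

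Let $B^\star$ be the checkpoint of largest iteration number that $h$ has heard of by slot $t'$. By the adoption rule $B^\star\in\mathcal{C}'$, so $B$ and $B^\star$ both lie on $\mathcal{C}'$ and are therefore ancestor-comparable; the argument then splits into two cases. \textbf{Case 1: $B^\star$ is an ancestor of $B$, or $B^\star=B$.} Since $B\in\mathcal{C}$, the unique genesis-to-$B$ path --- and in particular $B^\star$ --- lies on $\mathcal{C}$, and $|\mathcal{C}|\ge\mathrm{ht}(B)+k$ because $B$ is $k$-deep in $\mathcal{C}$. As $\mathcal{C}$ is a chain $h$ knows that contains $B^\star$, the adoption rule gives $|\mathcal{C}'|\ge|\mathcal{C}|\ge\mathrm{ht}(B)+k$; since $B$ sits at height $\mathrm{ht}(B)$ in $\mathcal{C}'$ as well, it is $k$-deep there. \textbf{Case 2: $B$ is a strict ancestor of $B^\star$.} Here I use that the checkpointing protocol chooses as the checkpoint the block that is \emph{exactly} $k$-deep in the agreed chain $v$, and that $v$ is broadcast to all honest nodes with the certificate; hence $h$ knows a chain $v\ni B^\star$ with $|v|=\mathrm{ht}(B^\star)+k$, so $|\mathcal{C}'|\ge|v|=\mathrm{ht}(B^\star)+k$ and $B^\star$ is $k$-deep in $\mathcal{C}'$. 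Since $B$ precedes $B^\star$ on $\mathcal{C}'$, $B$ is at least as deep, hence $k$-deep. This exhausts the cases.

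The one subtlety to watch --- and the reason this is not merely ``chains grow monotonically'' --- is that between $t$ and $t'$ the node $h$ may have shortened its chain upon hearing a new checkpoint, and may even have dropped $B$ and re-adopted it later. The case split absorbs this: whatever $B^\star$ turns out to be at slot $t'$, either the old chain $\mathcal{C}$ (Case 1) or the chain broadcast together with $B^\star$ (Case 2) is a witness --- known to $h$, containing $B^\star$, and long enough --- that forces $B$ to be $k$-deep in $\mathcal{C}'$. Beyond getting this dichotomy right, I expect no further difficulty, as the remaining steps are immediate from the definitions of the adoption rule and of block height.
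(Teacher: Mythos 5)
Your proof is correct, and its engine is the same as the paper's: the latest checkpoint is at least $k$-deep in any chain an honest node currently holds (because the agreed chain $v$, in which the checkpoint is exactly $k$-deep, is broadcast with the certificate, and the adoption rule selects the longest known chain containing that checkpoint), and any block lying below the checkpoint is at least as deep. The packaging, however, differs. The paper reasons along the node's successive adoptions: depth is preserved trivially while chain length grows, and in the only problematic event---truncation upon hearing a new checkpoint---a block $B$ surviving in the new chain must be an ancestor of that checkpoint, hence still $k$-deep. You instead give a one-shot argument at slot $t'$: compare $B$ with the latest checkpoint $B^\star$ known at the moment $\mathcal{C}'$ is held, and in either case exhibit a chain known to $h$ that contains $B^\star$ (either $\mathcal{C}$ itself or the broadcast chain $v$) whose length forces $|\mathcal{C}'|$ to be large enough. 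This buys independence from the intermediate history---it covers, with no extra work, the possibility that $h$ dropped $B$ and re-adopted it later, which the paper's ``present in both $\mathcal{C}$ and $\mathcal{C}'$'' phrasing glosses over---at the cost of leaning explicitly on the convention that nodes retain all chains they have heard, a convention the paper itself uses (e.g., in Lemma~\ref{lem:chain_length_lowet_bound}). Two small refinements: define $B^\star$ at the exact time within slot $t'$ at which $h$ holds $\mathcal{C}'$ (the latest checkpoint can change within a slot), and note the degenerate case in which $h$ has heard no checkpoint yet, where the rule reduces to the plain longest-chain rule and your Case 1 argument applies verbatim with the genesis block in place of $B^\star$.
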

The above statement is trivially true for the longest chain rule, since the chain length held by an honest node is monotonically increasing. It is not so immediate for the checkpointed longest chain rule, since the length of the chain held by a particular honest node may reduce upon adopting a new checkpoint block. A crucial point in the proof of this statement is the property that a checkpoint is always $k$-deep in all honest nodes' chains. The statement is illustrated in Figure \ref{fig:k_deep_persistence}, and is proven below.

\begin{proof}
As remarked above, the property holds trivially as long as the honest node $h$ adopts successive chains of increasing length. Consider the case when at some slot $t' \geq t$, $h$ adopts a new checkpoint block and truncates its chain in the process. Suppose block $B$ is present in both $\mathcal{C}$ and $\mathcal{C'}$. Then it must be an ancestor of the new checkpoint block. Since the new checkpoint block is at least $k$-deep in the new chain, so is block $B$.
\end{proof}
\begin{figure}[h]
\begin{centering}
\includegraphics[scale=0.4]{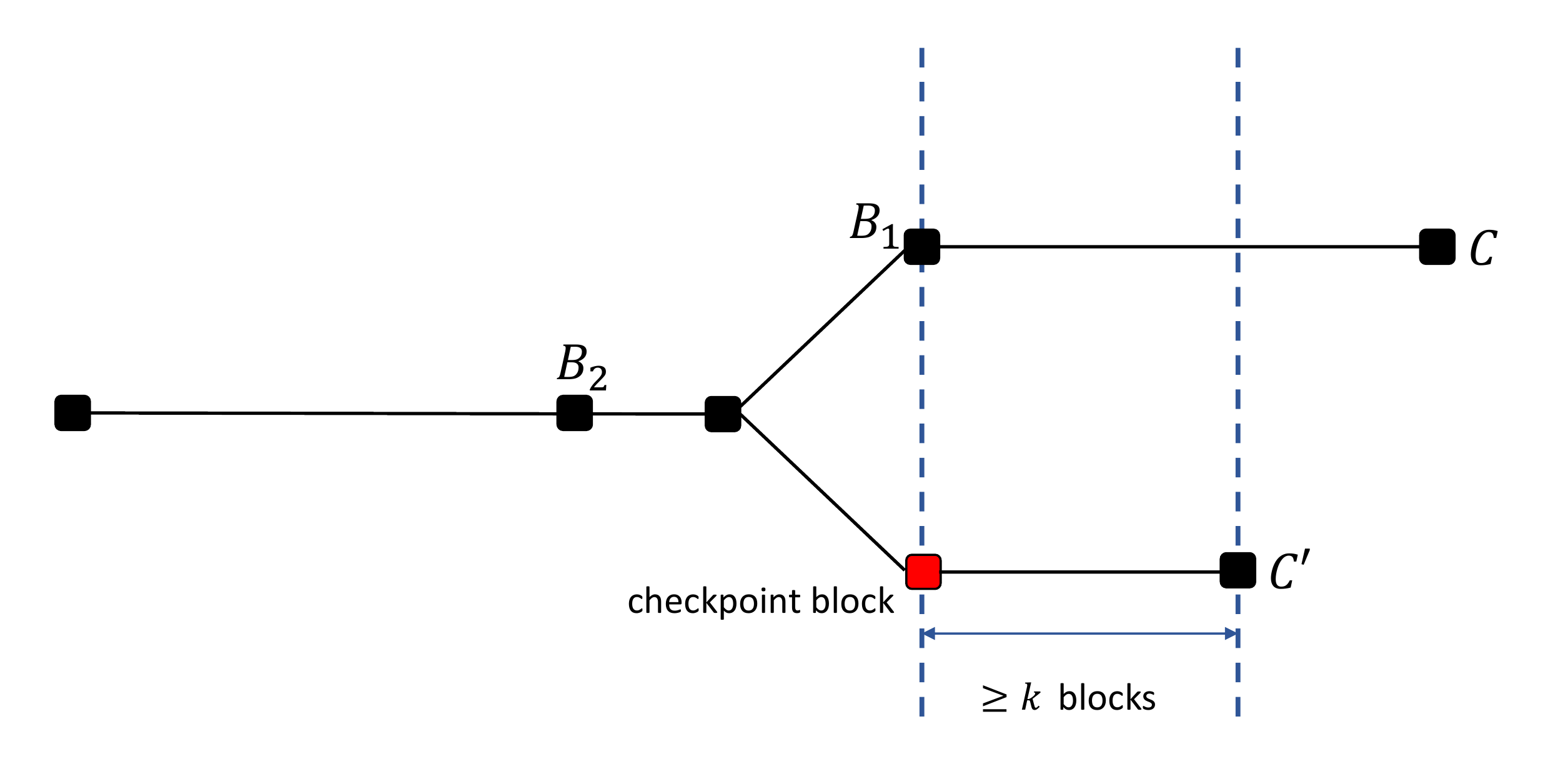}
\caption{Illustration of Lemma \ref{lem:cp_persistence}. Both $B_1$, $B_2$ are $k$-deep in $\mathcal{C}$. $B_1$ gets ousted in $\mathcal{C'}$, while $B_2$ remains $k$-deep.}
\end{centering}
\label{fig:k_deep_persistence}
\end{figure}

\begin{lemma}[Common Prefix Reduction]\label{lem:cp_reduction}
Suppose the common prefix property is violated after slot $t_0$, i.e., there exist slots $t_1$ and $t_2$: $t_0 \leq t_1 \leq t_2$, honest nodes $h_1, h_2$ (possibly the same), and chains $\mathcal{C}_1$, $\mathcal{C}_2$, s.t. $\mathcal{C}_1$ is held by $h_1$ at slot $t_1$, $\mathcal{C}_2$ is held by $h_2$ at slot $t_2$, but $\mathcal{C}_1 \lfloor k \npreceq \mathcal{C}_2$. {Then the following event happens at some slot $t \in [t_1, t_2]$: there exists chain $\mathcal{C}$ and $\mathcal{C}'$ such that $\mathcal{C}' \lfloor k \npreceq \mathcal{C}$ while both $\mathcal{C}'$ and $\mathcal{C}$ are held by honest nodes in slot $t$.}
\end{lemma}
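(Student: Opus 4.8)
The plan is to argue by an extremal (minimal‑counterexample) choice, with Proposition~\ref{lem:cp_persistence} supplying the key monotonicity. First I would restate the hypothesis in terms of a single block. Since $\mathcal{C}_1 \lfloor k \npreceq \mathcal{C}_2$, if $F$ denotes the last common ancestor of $\mathcal{C}_1$ and $\mathcal{C}_2$, then $F$ has at least $k+1$ descendants in $\mathcal{C}_1$, so the child $B^\star$ of $F$ on $\mathcal{C}_1$ is at least $k$-deep in $\mathcal{C}_1$ while $B^\star \notin \mathcal{C}_2$. Hence it suffices to prove the following ``bad pair'' statement: whenever a block $B$ is at least $k$-deep in a chain held by an honest node at some slot $t$, and $B$ is absent from a chain held by an honest node at some slot $t' \ge t$, then such a situation also occurs with $t = t'$. (Given such a same-slot pair, the chain $\mathcal{C}$ in which $B$ is $\ge k$-deep has $B \in \mathcal{C}\lfloor k$, and the other chain $\mathcal{C}'$ has $B \notin \mathcal{C}'$, so $\mathcal{C}\lfloor k \npreceq \mathcal{C}'$; after swapping the names $\mathcal{C},\mathcal{C}'$ this is exactly the conclusion of Lemma~\ref{lem:cp_reduction}.)

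Next I would choose, among all such bad pairs $(B,t,t',\mathcal{C},\mathcal{C}')$, one that minimizes $t'-t$ (breaking ties by taking $t$ as large as possible), and suppose for contradiction that $t'>t$. Let $h$ be the honest node holding $\mathcal{C}$ at slot $t$ and let $\mathcal{D}$ be the chain $h$ holds at slot $t+1$. Since $B$ is $\ge k$-deep in $\mathcal{C}$ at slot $t$, Proposition~\ref{lem:cp_persistence} forces one of two cases at slot $t+1$: either $B$ is still $\ge k$-deep in $\mathcal{D}$, or $B\notin \mathcal{D}$. In the first case $(B,t+1,t',\mathcal{D},\mathcal{C}')$ is a bad pair with strictly smaller gap, contradicting minimality. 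In the second case $(B,t,t+1,\mathcal{C},\mathcal{D})$ is a bad pair of gap $1$, which again contradicts minimality unless the original gap already equals $1$. So we are reduced to the residual case $t'=t+1$: within a single slot, $h$ (and the node witnessing $\mathcal{C}'$) drops $B$ from its chain.

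The residual case is the crux, and I expect it to be the main obstacle. Here I would use the $\le\Delta$ message delay together with the chain-length lower bound of Lemma~\ref{lem:chain_length_lowet_bound}: the chain $\mathcal{C}$, being held by an honest node in slot $t$, is heard by every honest node by the end of slot $t+1$; and provided $t$ lies strictly inside an inter-checkpoint interval, no new checkpoint can have appeared in the meantime, so each honest node must then hold a chain of length at least $|\mathcal{C}|$, which exceeds the height of $B$ by at least $k$. Consequently every honest chain at slots $t$ and $t+1$ either contains $B$ at depth $\ge k$ or does not contain $B$ at all — the ``shallow'' possibility is excluded — so either some honest node at slot $t$ already lacks $B$ (giving a same-slot bad pair at $t$, done), or all honest nodes at slot $t$ contain $B$ $\ge k$-deep while $h$ lacks it at $t+1$, and chasing back the deepest honest chain known just before $h$'s switch pins both diverging chains into one slot, contradicting minimality. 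The delicate point — precisely what separates the checkpointed longest chain from the vanilla protocol — is that chain length is not monotone here (bleeding), so the step ``every honest chain is at least this long'' is only valid in the interior of an inter-checkpoint interval; for slots adjacent to a checkpoint or preceding \textsf{GST} one must instead invoke the corresponding case analysis from the proof of the chain-growth Lemma~\ref{lem:chain_growth}, which handles those boundary slots by noting that the relevant block and the enclosing checkpoints lie on a common chain seen by all honest nodes.
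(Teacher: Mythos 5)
Your reduction to a single block $B$ and the persistence-based descent are fine and essentially parallel the paper's own argument (the paper simply takes the earliest slot $\geq t_1$ at which some honest node holds a chain missing the block that is exactly $k$-deep in $\mathcal{C}_1$). The genuine gap is in your residual case $t'=t+1$, which is precisely the crux and which you attack with the wrong tools. Observe that in your second descent case the two chains of the gap-one pair are held by the \emph{same} node $h$: it holds $\mathcal{C}$ (with $B$ at depth $\geq k$) in slot $t$ and a $B$-less chain $\mathcal{D}$ in slot $t+1$. The resolution is pure bookkeeping: $h$'s held chain changes only at discrete adoption events, so there is a first moment, after the time it held $\mathcal{C}$, at which it switches from a chain containing $B$ to one not containing $B$; immediately before the switch it holds a $B$-containing chain, which by Proposition~\ref{lem:cp_persistence} has $B$ at depth $\geq k$, and immediately after it holds the $B$-less chain, and both chains are held within the single slot (either $t$ or $t+1$, hence in $[t_1,t_2]$) in which the switch occurs. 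This is exactly the paper's ``technical point'' (a node adopting a new chain at some time in a slot held the old chain at some time in that same slot), and it closes the case with no network or checkpoint assumptions whatsoever.

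By contrast, your treatment of the residual case via Lemma~\ref{lem:chain_length_lowet_bound} and the chain-growth machinery does not go through, and even where it applies it would prove less than the lemma claims. Lemma~\ref{lem:chain_length_lowet_bound} requires $t_1 \in S$ and $t_2 \geq t_1+2$, so it does not even apply to the adjacent slots $t$ and $t+1$; more importantly, it is valid only in the interior of inter-checkpoint intervals after $\textsf{GST}$, whereas Lemma~\ref{lem:cp_reduction} is an unconditional, deterministic statement that must hold for every slot and every adversarial strategy (Lemma~\ref{lem:necessary} invokes it with no restriction on $t$, and the entire point of the reduction is that it holds irrespective of checkpoint timing or $\textsf{GST}$). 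Your concession that boundary and pre-$\textsf{GST}$ slots must be handled ``by the case analysis from Lemma~\ref{lem:chain_growth}'', and the closing step ``chasing back the deepest honest chain \ldots pins both diverging chains into one slot'', are not arguments; as written the proposal does not establish the lemma. Replacing that paragraph by the within-slot-switch observation above yields a complete proof that coincides with the paper's.
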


We refer to this events as \textit{common prefix violation at slot $t$}, and the pair of chains $\mathcal{C}'$ and $\mathcal{C}$ as a \textit{certificate of the violation}. Lemma \ref{lem:cp_reduction}, stated for the checkpointed longest chain rule, is also true for the longest chain rule. The proofs are identical as well, given that the statement of Lemma \ref{lem:cp_persistence} holds for both protocols. We provide the proof for completeness.

\begin{proof}
Let $t \geq t_1$ be the earliest slot at which there exists some chain $\mathcal{C}$ held by an honest node $h$ (possibly $h_1$) s.t. $\mathcal{C}_1 \lfloor k \npreceq \mathcal{C}$. Let $B$ be the block that is exactly $k$-deep in $\mathcal{C}_1$; clearly, it is not present in $\mathcal{C}$. We now claim that there must be a chain $\mathcal{C}'$ held by node $h_1$ at some time in slot $t$ that contains $B$. Note that $\mathcal{C}$ and $\mathcal{C}'$ satisfy the relation $\mathcal{C}' \lfloor k \npreceq \mathcal{C}$, thus proving the desired statement. It remains to show the claim.

Suppose $h_1$ never holds a chain containing $B$ in slot $t$. Then it must have adopted a chain that does not contain $B$ in an earlier slot. However, we have defined $t$ to be the earliest slot such that some chain held by an honest node does not contains $B$. Thus, we have shown the claim to be true. A technical point to note is that if an honest node adopts a new chain $\mathcal{C}_{\text{new}}$ at time $s$ in slot $t$, it must have held the older chain $\mathcal{C}_{\text{old}}$ at some time ($s-\epsilon$) \emph{in slot $t$ itself}.
%
\end{proof}
\begin{figure}[h]
\begin{centering}
\includegraphics[scale=0.4]{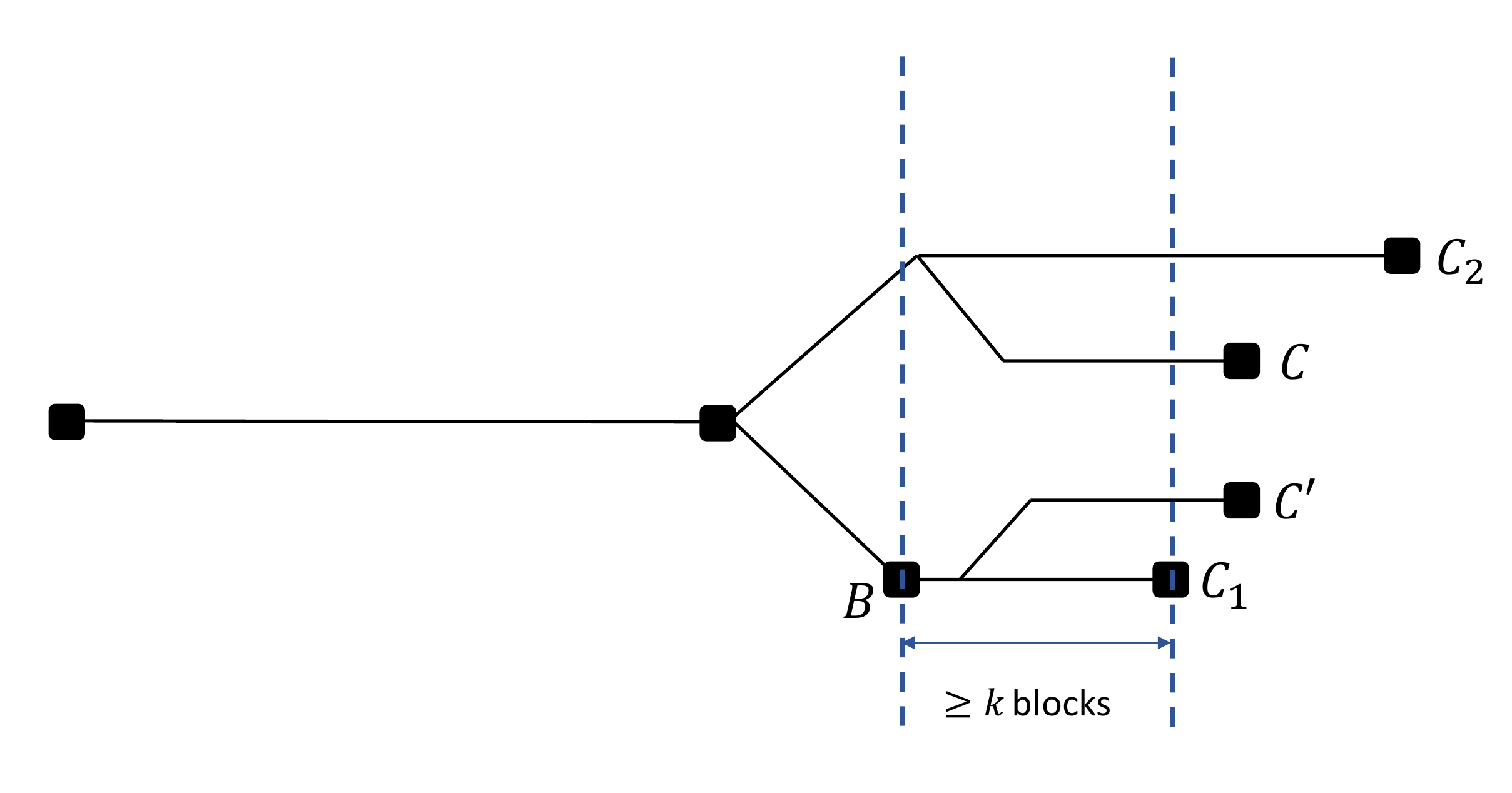}
\caption{Illustration of Lemma \ref{lem:cp_reduction}. Block $B$ is $k$-deep in $\mathcal{C}_1$. It remains so in $\mathcal{C}'$. However, it is absent from $\mathcal{C}$. In this case, $\mathcal{C}'$ and $\mathcal{C}$ form a certificate of $k$-common prefix violation.} 
\end{centering}
\label{fig:cp_reduction}
\end{figure}

\begin{lemma}[Necessary condition for common prefix violation]
\label{lem:necessary}
Suppose $k$-common prefix violation occurs at slot $t$. Then there exist $t^* < t$ such that $Z[t^*,t] + Y[t^*,t] \geq k$ and $Z[t^*, t] \geq Y[S \cap [t^*+2, t-2]] - 1$. 
\end{lemma}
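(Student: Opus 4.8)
The plan is to reduce to a same-slot violation via Lemma~\ref{lem:cp_reduction}, locate the fork, and then combine the chain-growth results (Lemmas~\ref{lem:chain_length_lowet_bound} and~\ref{lem:chain_growth}) with a height-counting argument in the spirit of \cite{pass2017analysis}; the new ingredient is that bleeding is controlled precisely on the inter-checkpoint intervals making up $S$.

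By Lemma~\ref{lem:cp_reduction}, the violation at slot $t$ gives chains $\mathcal{C},\mathcal{C}'$ held by honest nodes in slot $t$ with $\mathcal{C}'\lfloor k\npreceq\mathcal{C}$; let $B$ be the block exactly $k$-deep in $\mathcal{C}'$, so $B\notin\mathcal{C}$. Let $B_0$ be the last block common to $\mathcal{C}$ and $\mathcal{C}'$, so that $B$ is a strict descendant of $B_0$ in $\mathcal{C}'$, and let $B^*$ (of height $\ell^*$, mined in slot $s^*$) be the last \emph{honest} block on the common prefix (it exists, e.g.\ genesis). Put $t^*:=s^*-1$ and $a:=\mathrm{ht}(B_0)-\ell^*$; the $a$ blocks strictly between $B^*$ and $B_0$ are all adversarial. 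For the first inequality, $B$ together with its $k$ descendants in $\mathcal{C}'$ is a set of $k+1$ blocks, each a strict descendant of $B_0$ (hence of $B^*$) and lying in $\mathcal{C}'$, hence each mined in a slot of $\{t^*+1,\dots,t\}$. Since $Y_i+Z_i$ is exactly the number of blocks mined in slot $i$, this yields $Y[t^*,t]+Z[t^*,t]\ge k+1\ge k$.

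For the second inequality I construct an injection from the loners counted by $Y[S\cap[t^*+2,t-2]]$ into a set of distinct $Z$-blocks mined in $\{t^*+1,\dots,t\}$. Fix such a loner $B_{\mathrm{co}}$, at slot $i\in S\cap[t^*+2,t-2]$ and height $\ell$. Lemma~\ref{lem:chain_length_lowet_bound} with $(t_1,t_2)=(i,t)$ gives $|\mathcal{C}|,|\mathcal{C}'|\ge\ell$; moreover $\ell>\ell^*$ (a short argument: $B^*$ has propagated to all honest nodes by slot $i$, and every checkpoint active at slot $i<t$ lies on both $\mathcal{C}$ and $\mathcal{C}'$, hence weakly below $B_0$, so $B_{\mathrm{co}}$'s miner builds on a chain of length $>\ell^*$). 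Consequently the loners of height at most $\mathrm{ht}(B_0)$ occupy distinct heights in $(\ell^*,\mathrm{ht}(B_0)]$ and so number at most $a$. For every other loner, at a height $\ell>\mathrm{ht}(B_0)$, the height-$\ell$ blocks of $\mathcal{C}$ and of $\mathcal{C}'$ lie in the two diverging forks and are therefore distinct, so at least one of them, say $B'_\ell$, is different from $B_{\mathrm{co}}$. The key claim is that $B'_\ell$ is \emph{not} a loner: if it were a loner mined in slot $i'>i$, then $i'\ge i+2$ and Lemma~\ref{lem:chain_length_lowet_bound} would force $B'_\ell$ above height $\ell$; if mined in slot $i'<i$, then $i'\le i-2$, and writing $i\in S_l$ the active checkpoint of $B_{\mathrm{co}}$'s miner at slot $i$ is exactly $B^*_l$, which, being a checkpoint that appeared before slot $t$, lies on $\mathcal{C}$ and $\mathcal{C}'$, hence weakly below $B_0$, hence on $B'_\ell$'s chain, so $B_{\mathrm{co}}$'s miner again builds above height $\ell$; and $i'=i$ would force $B'_\ell=B_{\mathrm{co}}$. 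Hence $B'_\ell$ is adversarial or an honest non-loner, i.e.\ a $Z$-block, and it is mined in $\{t^*+1,\dots,t\}$ (a strict descendant of $B^*$ lying in $\mathcal{C}$ or $\mathcal{C}'$). The $B'_\ell$ have pairwise distinct heights, all exceeding $\mathrm{ht}(B_0)$, so they are disjoint from the $a$ adversarial blocks strictly between $B^*$ and $B_0$, and therefore
\[
Z[t^*,t]\ \ge\ \bigl(Y[S\cap[t^*+2,t-2]]-a\bigr)+a\ =\ Y[S\cap[t^*+2,t-2]]\ \ge\ Y[S\cap[t^*+2,t-2]]-1.
\]

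The main obstacle is the key claim that $B'_\ell$ is not itself a convergence opportunity; disposing of the case $i'<i$ is precisely where one must use that $i$ lies in an inter-checkpoint interval and that Lemma~\ref{lem:chain_length_lowet_bound} forbids honest chains from shrinking below lengths already attained in $S$ (a plain longest-chain argument fails here because of bleeding). What remains is routine bookkeeping: checking the slot ranges, handling the boundary cases near $\textsf{GST}$ and near checkpoint appearances (where no loner can occur in the slot right after $B^*$), and noting that the stated ``$-1$'' is harmless slack.
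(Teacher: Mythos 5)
Your proof is correct in substance but takes a genuinely different route from the paper's. The paper anchors at the last \emph{convergence opportunity} block on the common prefix, invokes the chain-growth lemma (Lemma~\ref{lem:chain_growth}) to get $\ell-\ell^*\ge Y[S\cap[t^*+2,t-2]]$, and then proves $Z[t^*,t]\ge \ell-\ell^*-1$ via a case analysis on whether one chain carries an extra checkpoint appearing at slot $t-1$ or $t$, plus a height-pairing argument in the forked region; the $-1$ is the price of a possible convergence opportunity from slots $\{t-1,t\}$. You instead anchor at the last \emph{honest} block $B^*$ on the common prefix with $t^*=s^*-1$, and build a direct injection from the loners counted by $Y[S\cap[t^*+2,t-2]]$ into distinct $Z$-blocks mined in slots $t^*+1,\dots,t$: loners at heights $\le\mathrm{ht}(B_0)$ are charged to the (all-adversarial) common-prefix segment above $B^*$, and each loner at height above $\mathrm{ht}(B_0)$ is charged to the same-height block on the other fork, which your key claim shows cannot be a loner --- using Lemma~\ref{lem:chain_length_lowet_bound} for later slots and, for earlier slots, the fact that the loner miner's last checkpoint appeared by slot $i\le t-2$ and hence lies weakly below $B_0$, so the chain ending at that earlier honest block would already force the loner higher. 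This bypasses Lemma~\ref{lem:chain_growth} and the paper's two-case checkpoint analysis, yields the slightly stronger bound without the $-1$, and your convention $t^*=s^*-1$ cleanly places every counted block inside the summation range of $Z[t^*,t]$ (the paper's choice of $t^*$ as the slot of the reference block technically leaves slot $t^*$ outside it). Two small points you flag do need the extra lines: the claim $\ell>\ell^*$ requires the case split on whether the miner's last checkpoint sits weakly below $B^*$ (then use the honestly broadcast chain ending at $B^*$) or strictly above height $\ell^*$ (then the miner's own chain already exceeds $\ell^*$), since the miner need not have heard any chain ending at $B_0$; and for $i\in S_0$ the active checkpoint is a pre-\textsf{GST} one rather than $B^*_l$, though your argument only uses that it appeared by slot $i\le t-2$, so nothing breaks.
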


\begin{proof}
The first condition $Z[t^*,t] + Y[t^*,t] \geq k$ is trivial, as a certificate of $k$-common prefix violation contains at least $k$ blocks. Then we focus on the derivation of the second condition.
From Lemma \ref{lem:cp_reduction}, we know that there exists two honestly held chains $\mathcal{C}, \mathcal{C}'$ such that $\mathcal{C}' \lfloor k \npreceq \mathcal{C}$. Let $\ell$ be the height of the shorter of the two chains. Consider the last block on the common prefix of $\mathcal{C}, \mathcal{C}'$ that is a convergence opportunity. Denote it by $\overline{B}$, let its height be $\overline{\ell}$ and let $\overline{t}$ be the slot at which it was created. {If there is no convergence opportunity in the common prefix, let $\overline{B}$ be the genesis block and $\overline{t} = 0$}. From Lemma \ref{lem:chain_growth}, we know that $\ell \geq \ell^* + Y[S \cap [t^*+2, t-2]]$, which implies $\ell - \ell^* \geq Y[S \cap [t^*+2, t-2]]$. We now show that $Z[t^*, t] \geq \ell - \ell^* - 1$.

Firstly, we note that any adversarial blocks in $\mathcal{C}, \mathcal{C}'$ that are descendants of $\overline{B}$ must be mined in the interval $[t^*, t]$. Let $\Tilde{B}$ denote the last block on the common prefix of $\mathcal{C}, \mathcal{C}'$, let its height be $\Tilde{\ell}$. The portion of the common prefix from block $\overline{B}$ (excluded) to $\Tilde{B}$ (included) is composed entirely of adversarial blocks. This accounts for $\Tilde{\ell} - \overline{\ell}$ adversarial blocks.

We now consider two possibilities regarding the number of checkpoints on the chains $\mathcal{C}, \mathcal{C}'$. In the first case, both chains have the same number of checkpoints marked. In this case, all checkpoints must be on the common prefix of $\mathcal{C}, \mathcal{C}'$. In the second case, one of the chains has one extra checkpoint. This checkpoint must have appeared in slot $t$ or slot $t-1$, for otherwise, it would be adopted by the other chain too. In both cases, let $\hat{B}$ denote the last \textit{common} checkpoint of the two chains. We now analyze the forked portion of the chains in both cases.

For the first case, we show that all convergence opportunity blocks that are descendants of $\hat{B}$ must be at distinct heights. Note that every chain ending at each such convergence opportunity blocks contain all checkpoints that appear until slot $t$. Thus, at the time when any honest node would have received such a chain (say $\Tilde{\mathcal{C}}$), it would adopt $\Tilde{\mathcal{C}}$ or a chain of at least $|\Tilde{\mathcal{C}}|$. This is because $\Tilde{\mathcal{C}}$ would contain all the checkpoints that this node would have marked until then. Since the miner of every convergence opportunity hears of broadcasts from all previous convergence opportunities before mining, the claimed result follows. For the second case, a slightly modified statement holds: all convergence opportunities that are descendants of $\hat{B}$ mined in slots $\leq t-2$ must be at distinct heights. This can be proven by arguing in the same manner as above. Since the latter is a weaker claim than the former, we use the latter claim in the general case.

Note that all blocks in the forked portion are descendants of $\Tilde{B}$, and are therefore descendants of $\hat{B}$. Therefore, all convergence opportunity blocks in the forked portion mined in slots $\leq t-2$ must be at distinct heights. Since they can be on at most one of the chains, they must be paired with either an adversarial block or a convergence opportunity from slot $t-1$ or $t$. There can be at most one convergence opportunity from slots $\{t-1, t\}$. From this, we conclude that the number of adversarial blocks in the forked portion is at least $\ell - \Tilde{\ell} - 1$.

In summary, we can say that the number of distinct adversarial blocks in chains $\mathcal{C}, \mathcal{C}'$ combined that are descendants of block $B^*$ is at least $\ell - \Tilde{\ell} - 1 + \Tilde{\ell} - \overline{\ell}$ $=$ $\ell - \overline{\ell} - 1$. All these blocks are mined in slots in the range $[t^*, t]$. Thus, $Z[t^*, t] \geq \ell - \ell^* - 1$. Combined with the inequality $\ell - \ell^* \geq Y[S \cap [t^*+2, t-2]]$, we get the inequality: $Z[t^*, t] \geq Y[S \cap [t^*+2, t-2]] - 1$.
\end{proof}

\subsubsection{Analysis of Chain Quality Property}

With some abuse of notation, we use the term $(k, s)$-chain quality to refer to a slot $s$ instead of some time $s$.

\begin{proposition}[Chain Quality Reduction]\label{prop:chain_quality_reduction}
Suppose at some slot $t$, $(k, s)$-chain quality is violated, i.e., there exists an honestly held chain $\mathcal{C}$ in slot $t$ that contains $k$-consecutive blocks mined after slot $s$ and there is no honest block amongst them. Then at some slot $t' \in (s, t]$, there exists a chain $\mathcal{C}'$ held by an honest node such that the last $k$ blocks of $\mathcal{C}'$ are all adversarial blocks.
\end{proposition}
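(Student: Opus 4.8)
The plan is to mimic the reduction argument used for the common prefix property in Lemma \ref{lem:cp_reduction}, adapting it to the chain quality setting. Suppose $(k,s)$-chain quality is violated at slot $t$, witnessed by an honestly held chain $\mathcal{C}$ containing $k$ consecutive blocks $B_1, \ldots, B_k$ (in order of increasing height), all mined after slot $s$ and all adversarial. Let $B_0$ be the parent of $B_1$; note $B_0$ is either honest, the genesis block, or mined at or before slot $s$. The key object to track is the block $B_k$ (the topmost of the bad run): I would look at the \emph{earliest} slot $t' \in (s, t]$ at which some honest node holds a chain $\mathcal{C}'$ whose $k$-deep-or-shallower portion already exhibits the all-adversarial run ending at $B_k$ — more precisely, the earliest slot at which some honest node holds a chain in which $B_k$ appears with at most $k - 1$ descendants and $B_1, \ldots, B_k$ are its top blocks above $B_0$. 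Since $\mathcal{C}$ at slot $t$ is such a chain (with $B_k$ exactly $k$ blocks below the tip, hence $k-1$ proper descendants counting $B_k$ itself appropriately — I would fix the off-by-one bookkeeping to match Definition \ref{def:common_prefix}'s convention), such a slot exists, and $t' \le t$.

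At slot $t'$, let $\mathcal{C}'$ be the witnessing honest chain. I claim the last $k$ blocks of $\mathcal{C}'$ are $B_1, \ldots, B_k$, all adversarial. The point is minimality of $t'$: if $\mathcal{C}'$ had strictly more than $k - 1$ blocks below $B_k$ (i.e. $B_k$ were deeper), then since an honest node's adopted chain only ever loses the property ``$B_k$ is at depth $\le k-1$'' by truncating to a new checkpoint (chain length is otherwise monotone, and a checkpoint block is always $k$-deep by Proposition \ref{lem:cp_persistence}), the node would have had to hold, in an \emph{earlier} slot within $(s,t]$, a chain already containing $B_k$ at depth $\le k-1$ — or $B_k$ itself would be an ancestor of the new checkpoint, forcing $B_k$ to remain at depth $\ge k$, contradicting that $\mathcal{C}$ at slot $t$ has $B_k$ shallow. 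Either way we contradict the choice of $t'$. The technical subtlety, exactly as flagged in the proof of Lemma \ref{lem:cp_reduction}, is that when an honest node switches chains at time $\sigma$ within slot $t'$, it held the \emph{old} chain at time $\sigma - \epsilon$, still within slot $t'$; so ``earliest slot'' genuinely pins down a chain with the last $k$ blocks adversarial and none honest below $B_k$.

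The main obstacle I anticipate is purely bookkeeping rather than conceptual: reconciling the ``$k$ consecutive blocks'' phrasing of $(k,s)$-chain quality (which does not require the run to be at the tip of $\mathcal{C}$) with the conclusion, which asserts the run is exactly the last $k$ blocks of $\mathcal{C}'$. One has to argue that the topmost bad run can always be ``pushed to the tip'' by passing to an earlier slot: at the moment $B_k$ was first mined by the adversary and first adopted by an honest node (necessarily in $(s, t]$ since $B_k$ is mined after $s$), $B_k$ sits at the tip of the honest node's chain, and at that instant the $k-1$ blocks below it, $B_1, \ldots, B_{k-1}$, are its only ancestors above $B_0$ and hence are exactly the last $k-1$ blocks before $B_k$ — all adversarial. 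So I would in fact take $t'$ to be the first slot in which any honest node adopts a chain with $B_k$ as a block, which is cleaner; the checkpoint-truncation argument above is then only needed to rule out $B_k$ being adopted ``already buried,'' which cannot happen since $B_k$ is freshly mined. With $B_0$ honest-or-pre-$s$ and $B_1,\dots,B_k$ adversarial and at the tip, $\mathcal{C}'$ is the desired certificate.
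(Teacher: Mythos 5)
Your reduction via the topmost adversarial block $B_k$ has a genuine gap at the step where you discard the ``already buried'' case: the assertion that $B_k$ must sit at the tip of the first honestly adopted chain containing it ``since $B_k$ is freshly mined'' is false. The adversary is free to withhold $B_1,\ldots,B_k$, mine further private blocks on top of $B_k$, and release the whole extension at once (or simply release it late); then the first honest node to adopt a chain containing $B_k$ adopts it with $B_k$ already buried, and your claim that $B_1,\ldots,B_k$ are the last $k$ blocks of that chain fails. (The earlier paragraph's fallback, which leans on checkpoint truncation and on Proposition~\ref{lem:cp_persistence} to control the depth of $B_k$, does not rescue this: that proposition says nothing about newly released adversarial blocks, and the ``depth $\le k-1$'' bookkeeping is never actually pinned down.) The gap is repairable, but by a different observation than the one you make: at the \emph{first time} any honest node holds a chain containing $B_k$, every block above $B_k$ in that chain must itself be adversarial, because an honest block extending $B_k$ would require its honest miner to have held a chain containing $B_k$ strictly earlier. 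Hence the last $k$ blocks of that chain are all adversarial (some mix of the burying blocks and the $B_i$'s), the corresponding slot lies in $(s,t]$ since $B_k$ is mined after slot $s$, and the conclusion follows --- but you need this argument, not tipness of $B_k$.

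For comparison, the paper's proof avoids tracking adoption times altogether: it splits on whether $\mathcal{C}$ contains any honest block above the run of $k$ adversarial blocks. If not, $\mathcal{C}$ itself at slot $t$ is the certificate (everything from the run to the tip is adversarial). If yes, take the first such honest block $B$, mined at some slot $t' \in (s,t]$ (its ancestors include the run, so it is mined after slot $s$); its honest miner held the prefix of $\mathcal{C}$ ending at $B$'s parent in slot $t'$, and the last $k$ blocks of that prefix are adversarial by choice of $B$. This is both shorter and free of the withholding subtlety that trips up your version.
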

\begin{proof}
Suppose, at some slot $t$, $(k, s)$-chain quality is violated. Clearly, $t > s$, else it is not possible for $\mathcal{C}$ to contain blocks mined at slots $> s$. There are two cases to consider. First, we consider the case where there are no more honest blocks in $\mathcal{C}$ after the string of $k$ consecutive adversarial blocks. Then the claim trivially follows with $t' = t$, $\mathcal{C}' = \mathcal{C}$. Next, consider the case where there exists an honest block in $\mathcal{C}$ after the string of $k$ consecutive adversarial blocks. Let the block be $B$ and let its mining time be $t'$. Since these $k$ blocks are all mined after slot $t_0$, so is $B$ (i.e., $t' > s$). Then the portion of $\mathcal{C}$ up to $B$ (but not including it) was held by the honest miner of block $B$. Call this sub-chain $\mathcal{C}'$. From the condition on $\mathcal{C}$, it is immediate that the last $k$ blocks of $\mathcal{C}'$ are adversarial. Thus we have our claim.
\end{proof}

If there exists an honestly held chain $\mathcal{C}$ at slot $t$ such that the last $k$ blocks of $\mathcal{C}$ are all adversarial blocks mined after slot $s$, we say that \emph{$(k, s)$-chain quality violation occurs at slot $t$}. From Proposition \ref{prop:chain_quality_reduction}, we assume without loss of generality that all violations of $(k, s)$-chain quality are such that $(k, s)$-chain quality violation occurs at slot $t$, for some $t > s$.

\begin{lemma}[Necessary Condition for Chain Quality Violation]\label{lem:chain_quality_necessary}
Suppose, for any slot $t \geq \textsf{GST} + 1$, \emph{$(k,s)$-chain quality is violated at slot $t$}. Then there $\exists$ $t^* < t$ s.t. $Z[t^*, t] \geq k$ and $Z[t^*, t] \geq Y[S \cap [t^*+2, t-2]]$.
\end{lemma}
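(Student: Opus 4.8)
The plan is to follow the template of the proof of Lemma~\ref{lem:necessary}, taking advantage of the fact that a chain-quality violation involves a single honestly-held chain rather than a diverging pair, so the combinatorics are cleaner. By the definition of a $(k,s)$-chain quality violation at slot $t$ (see Proposition~\ref{prop:chain_quality_reduction}), there is a chain $\mathcal{C}$ held by an honest node in slot $t \geq \textsf{GST}+1$ whose last $k$ blocks are all adversarial and mined after slot $s$; call these $k$ blocks the \emph{suffix}. Let $B$ be the block immediately preceding the suffix (so $B$ has exactly $k$ descendants in $\mathcal{C}$; if $|\mathcal{C}| = k$ then $B$ is the genesis block), and let $\overline{B}$ be the last convergence opportunity among $B$ and its ancestors in $\mathcal{C}$, or the genesis block if there is none, at height $\overline{\ell}$ and mined in slot $\overline{t}$. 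Because the suffix is entirely adversarial, $\overline{B}$ is in fact the last convergence opportunity on all of $\mathcal{C}$, and the portion of $\mathcal{C}$ from $\overline{B}$ (exclusive) up to $B$ (inclusive) consists entirely of non-convergence-opportunity blocks.

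The first step is to invoke the chain growth lemma (Lemma~\ref{lem:chain_growth}) for the honest block $\overline{B}$, which gives $|\mathcal{C}| \geq \overline{\ell} + Y[S \cap [\overline{t}+2, t-2]]$, the genesis case being the analogous base case with $\overline{\ell} = 0$. The second step is to lower bound $Z$ over an interval of the form $[t^*, t]$ by counting blocks: the $|\mathcal{C}| - \overline{\ell} - k$ blocks strictly above $\overline{B}$ up to and including $B$ are non-convergence-opportunities, and the $k$ suffix blocks are adversarial, so these $|\mathcal{C}| - \overline{\ell}$ blocks all contribute to $Z$; they are strict descendants of $\overline{B}$ and appear in a chain held in slot $t$, hence are all mined in slots between $\overline{t}$ and $t$. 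Taking $t^* \triangleq \overline{t} - 1$ (and $t^* \triangleq 0$ when $\overline{B}$ is genesis) therefore yields $Z[t^*, t] \geq |\mathcal{C}| - \overline{\ell}$; the one-slot shift from $\overline{t}$ to $t^*$ absorbs a descendant of $\overline{B}$ possibly mined in slot $\overline{t}$ itself, and since $\overline{B}$ is honest there is no convergence opportunity in slot $\overline{t}+1$, so $Y[S \cap [t^*+2, t-2]] = Y[S \cap [\overline{t}+2, t-2]]$.

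Combining the two steps, $Z[t^*, t] \geq |\mathcal{C}| - \overline{\ell} \geq Y[S \cap [t^*+2, t-2]]$, and moreover $Z[t^*, t] \geq |\mathcal{C}| - \overline{\ell} \geq k$ since the suffix alone contributes $k$ adversarial blocks; as $t^* < t$, this is exactly the assertion of the lemma.

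I expect the main obstacle to be the same bookkeeping subtlety that forces the ``$-1$'' slack in Lemma~\ref{lem:necessary}: chain growth only credits convergence opportunities that fall inside the inter-checkpoint intervals $S$ (in order to account for the bleeding that can occur around a checkpoint), so one must verify that the interval $[t^*, t]$ over which $Z$ is counted lines up with the interval $S \cap [\overline{t}+2, t-2]$ produced by the chain growth lemma, and that the boundary slots near $\overline{t}$, $t-1$ and $t$ are treated consistently. Here the absence of a forked portion means no convergence-opportunity-versus-adversarial-block pairing is needed, so no ``$-1$'' appears; the delicate points are merely the placement of $t^*$ one slot before $\overline{t}$ and the genesis special case.
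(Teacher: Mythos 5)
Your proof is correct and follows essentially the same route as the paper: invoke the chain growth lemma (Lemma~\ref{lem:chain_growth}) at an honest anchor block of $\mathcal{C}$ and then lower-bound $Z[t^*,t]$ by counting the blocks of $\mathcal{C}$ above that anchor. The only differences are cosmetic: the paper anchors at the last \emph{honest} block of $\mathcal{C}$ (so every block above it is adversarial), whereas you anchor at the last convergence opportunity (so the blocks above are non-convergence-opportunity blocks, which still count toward $Z$), and your shift to $t^* = \overline{t}-1$ together with the observation $Y_{\overline{t}+1}=0$ tidies an off-by-one (blocks mined in the anchor's own slot) that the paper's proof glosses over.
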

\begin{proof}
Let $\mathcal{C}$ be the honestly held chain at slot $t$ for which chain quality is violated. Let $t^*$ be the time of mining of the last honest block in $\mathcal{C}$. Let this block be called $B$ and let its height be $\ell$. By Lemma \ref{lem:chain_growth}, we have $|\mathcal{C}| \geq \ell + Y[S \cap [t^*+2, t-2]]$. We know that the portion of $\mathcal{C}$ from height $\ell$ onward is composed only of adversarial blocks. These must be mined at or after slot $t^*$. Therefore, $|\mathcal{C}| - \ell \leq Z[t^*, t]$. This implies $Z[t^*, t] \geq Y[S \cap [t^*+2, t-2]]$. Furthermore, we know that $Z[t^*, t] \geq k$ by the condition that the last $k$ blocks do not contain any convergence opportunity.
\end{proof}

\subsection{Probabilistic guarantee}
We begin with a proposition that transforms the necessary conditions from Lemmas \ref{lem:necessary} and \ref{lem:chain_quality_necessary} to a more convenient form for analysis.

\begin{proposition}
\label{prop:necessary}
Suppose $k$-common prefix violation or $(k, s)$-chain quality violation occurs at slot $t$. Then there exist $t^* < t$ such that $Z[t^*,t] + Y[t^*,t] \geq k$ and $Z[t^*, t] \geq Y[\max({\textsf{GST},t^*})+2, t-2] - (1 + (t - t^*)/e')(d'+3)-1$.
\end{proposition}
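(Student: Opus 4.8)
The plan is to start from the two necessary conditions already in hand --- Lemma~\ref{lem:necessary} for $k$-common prefix violation and Lemma~\ref{lem:chain_quality_necessary} for $(k,s)$-chain quality violation --- and to massage the term $Y[S\cap[t^*+2,t-2]]$ appearing there into the cleaner quantity $Y[\max(\textsf{GST},t^*)+2,\,t-2]$, at the price of an explicit additive correction. The first conjunct $Z[t^*,t]+Y[t^*,t]\ge k$ requires nothing new: Lemma~\ref{lem:necessary} states it verbatim, and Lemma~\ref{lem:chain_quality_necessary} gives the stronger $Z[t^*,t]\ge k$, which implies it since $Y\ge 0$. So all the work is in the second conjunct.

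For that, I would first note that $S$ is contained in $[\textsf{GST}+2,\infty)$, because the first inter-checkpoint interval $S_0$ begins at $t^*_0+2=\textsf{GST}+2$; hence $S\cap[t^*+2,t-2]=S\cap[\max(\textsf{GST},t^*)+2,\,t-2]$, i.e. the part of $[t^*+2,t-2]$ lying before $\textsf{GST}+2$ never contributed to $Y[S\cap\cdot]$ to begin with. Since each $Y_i\in\{0,1\}$ is nonnegative, passing from the full interval $[\max(\textsf{GST},t^*)+2,\,t-2]$ to its intersection with $S$ costs at most one unit of $Y$ per removed slot, so
\[
Y[\max(\textsf{GST},t^*)+2,\,t-2]\;-\;Y[S\cap[t^*+2,t-2]]\;\le\;\bigl|\,[\max(\textsf{GST},t^*)+2,\,t-2]\setminus S\,\bigr|.
\]
It remains to bound the right-hand side combinatorially.

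Inside $[\textsf{GST}+2,\infty)$ the complement of $S$ is exactly the union of the ``holes'' $H_l\triangleq\{t^*_l-d'-1,\ldots,t^*_l+1\}$ wedged between consecutive inter-checkpoint intervals $S_{l-1}$ and $S_l$, each $H_l$ having exactly $d'+3$ slots. A hole $H_l$ can meet $[\max(\textsf{GST},t^*)+2,\,t-2]$ only if its centre $t^*_l$ lies in a window of length at most $t-t^*$ anchored between $\max(\textsf{GST},t^*)$ and (up to a $d'$-slack) $t$; and by property \textbf{CP2} --- basic property $3$ of checkpoints, that successive checkpoint appearance slots are $\ge e'$ apart --- at most $1+(t-t^*)/e'$ checkpoints have an appearance slot in such a window. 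Therefore $\bigl|[\max(\textsf{GST},t^*)+2,\,t-2]\setminus S\bigr|\le\bigl(1+(t-t^*)/e'\bigr)(d'+3)$. Substituting into the displayed inequality, and then into the second conjunct of Lemma~\ref{lem:necessary} (which carries its own extra $-1$) respectively Lemma~\ref{lem:chain_quality_necessary} (which carries no such term, so the bound only becomes looser), yields
\[
Z[t^*,t]\;\ge\;Y[\max(\textsf{GST},t^*)+2,\,t-2]\;-\;\Bigl(1+\tfrac{t-t^*}{e'}\Bigr)(d'+3)\;-\;1,
\]
as claimed. The only delicate point --- and the one I would write out carefully --- is the bookkeeping at the two endpoints: checking that no hole gets double-counted and that a hole partially straddling slot $t$ from the right is still absorbed by the ``$1+$'' in the count. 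This is routine given the $e'$-spacing of checkpoints; the statement is purely deterministic, with no probabilistic content, so it is more a matter of careful interval arithmetic than of any real difficulty.
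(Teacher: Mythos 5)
Your proof follows essentially the same route as the paper's: reduce everything to the second inequality via Lemmas~\ref{lem:necessary} and~\ref{lem:chain_quality_necessary}, observe that the slots lost in passing from $Y[\max(\textsf{GST},t^*)+2,\,t-2]$ to $Y[S\cap[t^*+2,t-2]]$ lie in gaps of size $d'+3$ around checkpoint appearance slots, and bound the number of relevant gaps by $1+(t-t^*)/e'$ using the $e'$-spacing of checkpoints (\textbf{CP2}). If anything, you are slightly more explicit than the paper about gaps straddling the right endpoint $t$; both arguments treat that $d'$-sized slack as absorbed into the constants, which is harmless for the downstream analysis.
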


\begin{proof}
By comparing the necessary conditions in Lemma~\ref{lem:necessary} and  Lemma~\ref{lem:chain_quality_necessary}, it suffices to show that $Z[t^*, t] \geq Y[S \cap [t^*+2, t-2]] - 1$ implies $Z[t^*, t] \geq Y[\max({\textsf{GST},t^*})+2, t-2] - (1 + (t - t^*)/e')(d'+3)-1$. Indeed, $Z[t^*, t] \geq Y[S \cap [t^*+2, t-2]] - 1$ implies
\begin{equation*}
    Z[t^*, t] \geq Y[\max({\textsf{GST},t^*})+2, t-2] - n(d'+3)-1,
\end{equation*}
where $n$ is the number of checkpoints that appear in the interval $[t^*,t]$. 
We now show a upper bound on $n$: $n \leq (t - t^*)/e' + 1$. Indeed, a block mined in slot $t^*$ cannot be checkpointed at a slot earlier than $t^*$. Of course, at slot $t$, we only see checkpoints that appear until slot $t$. No more than $\lceil (t - t^*)/e' \rceil \leq (t - t^*)/e' + 1$ checkpoints can appear over such an interval, as consecutive checkpoints are separated at least $e'$ slots apart. Thus, $n \leq (t - t^*)/e' + 1$. Therefore, $Z[t^*, t] \geq Y[\max({\textsf{GST},t^*})+2, t-2] - (1 + (t - t^*)/e')(d'+3)-1$.
\end{proof}

We conclude the proof in this section by showing that the in a typical execution, the condition of Proposition \ref{prop:necessary} does not occur after slot $t>C\cdot \textsf{GST}$ for some appropriate $C > 0$. Therefore $k$-common prefix property and $(k, s)$-chain quality property hold after slot $t>C\cdot \textsf{GST}$.

\begin{lemma}
\label{lem:prob}
For $\tau$ large enough, under ($\varepsilon,\tau$)-\textit{typical execution}, there exist a constant $C > 0$ such that $k$-common prefix property and $(k, s)$-chain quality property holds after slot $t>s$, where $k=2\lambda\Delta\tau$ and $s = C\cdot \textsf{GST}$.
\end{lemma}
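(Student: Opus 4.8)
The plan is to argue by contradiction from Proposition~\ref{prop:necessary}. Suppose that at some slot $t>s$ either a $k$-common prefix violation or a $(k,s)$-chain quality violation occurs; then there is a slot $t^{*}<t$ satisfying \textbf{(i)} $Z[t^{*},t]+Y[t^{*},t]\ge k$ and \textbf{(ii)} $Z[t^{*},t]\ge Y[\max(\textsf{GST},t^{*})+2,\,t-2]-(1+(t-t^{*})/e')(d'+3)-1$. Taking $k=2\lambda\Delta\tau$ with $\tau$ large, I will show that in an $(\varepsilon,\tau)$-typical execution no such pair $(t^{*},t)$ with $t>C\cdot\textsf{GST}$ can exist, which gives the lemma (choosing $\tau$ so that $\lambda\Delta\tau=\Theta(\kappa)$ makes $k=\Theta(\kappa)$, as the protocol requires).

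The first step is to use \textbf{(i)} to force $t-t^{*}\ge\tau$: if $t-t^{*}<\tau$, embed $[t^{*},t]$ in a length-$\lceil\tau\rceil$ interval ending at $t$ and apply the $Y+Z$ bound of typicality, giving $Y[t^{*},t]+Z[t^{*},t]\le(1+\varepsilon)\lambda\Delta\lceil\tau\rceil<2\lambda\Delta\tau=k$ for $\tau$ large, a contradiction. Hence the concentration bounds of Definition~\ref{def:typical} apply to $[t^{*},t]$, and (once it is long enough) to $[\max(\textsf{GST},t^{*})+2,\,t-2]$, namely $Z[t^{*},t]\le(\bar z+\varepsilon\bar y)(t-t^{*})$ and $Y[\max(\textsf{GST},t^{*})+2,\,t-2]\ge(1-\varepsilon)\bar y\,(t-\max(\textsf{GST},t^{*})-4)$.

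Substituting these into \textbf{(ii)}, using $t-t^{*}\le(t-\textsf{GST})+\textsf{GST}$ and the slack $\bar z<(1-3\varepsilon)\bar y$ of the operating regime, \textbf{(ii)} reduces (after collecting the terms linear in $t-\textsf{GST}$ and bounding the checkpoint-correction term) to an inequality of the form
\[
\bigl(\varepsilon\bar y-(d'+3)/e'\bigr)(t-\textsf{GST})\ \le\ O(\bar y)\cdot\textsf{GST}+O(\bar y+d').
\]
Here it is essential that the inter-checkpoint gap obeys $e\gg d$ (property \textbf{CP2}), so that $e'$ may be taken at least $2(d'+3)/(\varepsilon\bar y)$ and the coefficient $\delta:=\varepsilon\bar y-(d'+3)/e'\ge\varepsilon\bar y/2$ is strictly positive. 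Dividing by $\delta$ and using $\bar y=\Theta(\lambda\Delta)$ with $\lambda\Delta<1$ yields $t-\textsf{GST}\le(2/\varepsilon)\,\textsf{GST}+O\!\bigl(1/\varepsilon+d'/(\varepsilon\lambda\Delta)\bigr)$, hence $t\le C\cdot\textsf{GST}$ for a suitable constant $C$ (the residual $\mathrm{poly}(\kappa)$ term being of the $O(\textsf{GST}+\kappa)$ order already present in Theorem~\ref{thm:main}). This contradicts $t>s=C\cdot\textsf{GST}$; the $(k,s)$-chain quality case is identical since Proposition~\ref{prop:necessary} produces the same pair of inequalities.

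The main obstacle is the case $t^{*}<\textsf{GST}$: the adversary may enter $\textsf{GST}$ holding an arbitrarily long cache of privately mined blocks, so a bound of the form ``$t=\textsf{GST}+O(\kappa)$'' is too weak — one genuinely needs $t$ to be a constant multiple of $\textsf{GST}$, so that the strictly larger post-$\textsf{GST}$ honest chain-growth rate (encoded by $\bar y(1-3\varepsilon)>\bar z$) can overcome the cache. When $t^{*}\ge\textsf{GST}$ the same computation instead bounds $t-t^{*}$ by $\mathrm{poly}(\kappa)$, which contradicts $t-t^{*}\ge\tau$ directly, so no constraint on $\textsf{GST}$ is needed there. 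A secondary subtlety is that the cumulative checkpoint-correction loss $(1+(t-t^{*})/e')(d'+3)$ must be kept well below the honest advantage over the same interval; this is exactly what $e\gg d$ buys, and without it the argument would break.
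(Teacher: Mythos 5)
Your proposal follows essentially the same route as the paper's proof: contradiction from the necessary conditions of Proposition~\ref{prop:necessary}, forcing $t-t^{*}\ge\tau$ via the $Y+Z$ typicality bound, then combining the concentration bounds with $\bar z<(1-3\varepsilon)\bar y$ and a large enough inter-checkpoint gap $e'$ so that the correction term $(1+(t-t^{*})/e')(d'+3)$ is dominated; your inequality $t-t^{*}\le(t-\textsf{GST})+\textsf{GST}$ plays the role of the paper's Case~2 step of bounding $Z[r,t]\le Z[0,t]$ and using $t>C\cdot\textsf{GST}$, so the case analysis is the same in substance.

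The one loose end is the endgame of the $t^{*}<\textsf{GST}$ case. From $t-\textsf{GST}\le(2/\varepsilon)\,\textsf{GST}+O\bigl(1/\varepsilon+d'/(\varepsilon\lambda\Delta)\bigr)$ you cannot conclude $t\le C\cdot\textsf{GST}$ outright, and appealing to the $O(\textsf{GST}+\kappa)$ form of Theorem~\ref{thm:main} silently weakens the statement being proved: the lemma (and Corollary~\ref{coro:common}) asserts the properties hold after slot $C\cdot\textsf{GST}$ with no additive $\kappa$ term. The repair uses only facts you already established: since any violation forces $t\ge t-t^{*}\ge\tau=\Theta(\kappa/(\lambda\Delta))$ while the residual is $O(\sqrt{\kappa})$ (because $d'=O(\sqrt{\kappa})$), the additive term is $o(\tau)\le o(t)$ and is absorbed by taking $C$ and $\kappa$ large enough --- which is exactly how the paper closes this case (``given that $\tau$, as well as $t$, is large''). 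The same observation tightens your $t^{*}\ge\textsf{GST}$ case: the bound you obtain on $t-t^{*}$ is $O(\sqrt{\kappa})$, not merely $\mathrm{poly}(\kappa)$, and it is this quantitative gap that contradicts $t-t^{*}\ge\tau=\Theta(\kappa)$.
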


\begin{proof}
Suppose $k$-common prefix property or $(k, s)$-chain quality property is violated at slot $t>C\cdot \textsf{GST}$, then by Lemma \ref{lem:necessary} we have that there exists $r<t$ such that 
\begin{equation}
\label{eqn:condition1}
Z[r,t] + Y[r,t] \geq k,    
\end{equation}
and 
\begin{equation}
\label{eqn:condition2}
Z[r, t] \geq Y[\Bar r +2, t-2] - (1 + (t - r)/e')(d'+3)-1, 
\end{equation}
where $\Bar r \triangleq \max(r, \textsf{GST})$. We only need to show that this event won't happen under typical execution. If $t-r < \frac{k}{2\lambda\Delta} = \tau$, then under typical execution we have
\begin{equation*}
    Y[r,t]+Z[r,t] \leq Y[r,r+\tau] + Z[r,r+\tau] \leq (1+\varepsilon)\lambda\Delta\tau < 2\lambda\Delta\tau = k,
\end{equation*}
which contradicts Eqn. (\ref{eqn:condition1}). Hence we have $t-r \geq \frac{k}{2\lambda\Delta} = \tau$.

Further we can rewrite Eqn. (\ref{eqn:condition2}) as
\begin{equation}
    \label{eqn:condition2_1}
    Z[r, t] \geq Y[\Bar r +2, t-2] - (t - r)d_1/e' - d_2,
\end{equation}
where $d_1 = d'+3$ and $d_2 = d'+4$.
Eqn. (\ref{eqn:condition2_1}) describes a random walk with drift. Note that the extra terms on the right side of Eqn. (\ref{eqn:condition2_1}) are small by the fact that $d_2 = O(\sqrt \kappa) = O(\sqrt \tau)$ and $e'$ can be chosen arbitrarily large. 

As we have seen $t-r \geq \tau$, then under typical execution we have
\begin{equation*}
    Z[r,t] \leq (t-r)\bar z + \varepsilon(t-r)\bar y < (1-3\varepsilon) (t-r) \bar y + \varepsilon(t-r)\bar y = (1-2\varepsilon) (t-r) \bar y.
\end{equation*}

\noindent Case 1: $r>\textsf{GST}$, i.e., $\bar r =r$. Then we have
\begin{equation*}
    Y[r+2,t-2] \geq (1-\varepsilon) (t-r-4)\bar y.
\end{equation*}
Therefore, we can always choose $e'$ to be large enough such that $(1-2\varepsilon + d_1/e') (t-r) \bar y + d_2 < (1-\varepsilon) (t-r-4) \bar y$, given that $\tau$, as well as $t-r$, is large. Therefore, $Z[r, t] < Y[\Bar r +2, t-2] - (t - r)d_1/e' - d_2$, which contradicts Eqn. (\ref{eqn:condition2_1}).

\noindent Case 2: $r\leq \textsf{GST}$, i.e., $\bar r =\textsf{GST}$. Then we have 
\begin{equation*}
    Y[\bar r+2,t-2] \geq (1-\varepsilon) (t-{\rm GST}-4)\bar y >(1-\varepsilon)(\frac{C-1}{C} t -4)\bar y.
\end{equation*}
Again we can always let $e'$ and $C$ to be large enough such that $(1-2\varepsilon + d_1/e') s \bar y + d_2 < (1-\varepsilon) (\frac{C-1}{C} t -4) \bar y$, given that $\tau$, as well as $t$, is large.
Then we have $Y[\bar r +2,t-2] > Z[0,t] + d_1t/e' + d_2 \geq Z[r,t] + d_1(t-r)/e' +d_2$, which contradicts Eqn. (\ref{eqn:condition2_1}).
\end{proof}

Then we have the following corollary, which states that the common prefix property and the chain quality property of the checkpointed longest chain  will hold after $O(\textsf{GST})$ time with high probability.

\begin{corollary}
\label{coro:common}
If we choose $k$ to be large enough such that $k = \Theta(\kappa)$, then there exist a constant $C >0$ such that $k$-common prefix property and $(k, C \cdot \textsf{GST})$-chain quality property hold after slot $C\cdot\textsf{GST}$ with probability at least $1-e^{-\Omega(\sqrt{\kappa})}$. 
\end{corollary}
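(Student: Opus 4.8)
The plan is to combine Lemma~\ref{lem:prob} with the two high-probability guarantees already established: the typicality of the execution (Corollary~\ref{coro:typical}) and the recency of all checkpoints (Corollary~\ref{coro:uniform}).

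First I would fix the parameters so that everything is mutually consistent. Given the requirement $k = \Theta(\kappa)$, set $\tau \triangleq k/(2\lambda\Delta)$, so that $k = 2\lambda\Delta\tau$ exactly as in Lemma~\ref{lem:prob} and $\lambda\Delta\tau = \Theta(\kappa)$ (recall $\lambda\Delta$ is a constant bounded below $1$). Since $\kappa$ grows, this $\tau$ is as large as needed for Lemma~\ref{lem:prob} and Corollary~\ref{coro:typical}. The recency parameter is taken to be the $d = O(\sqrt{\kappa}\,\Delta)$ furnished by Corollary~\ref{coro:uniform}, so $d' = \lceil d/\Delta \rceil = O(\sqrt{\kappa}) = O(\sqrt{\tau}) = o(\tau)$; and the inter-checkpoint gap $e$ (equivalently $e'$) is chosen large enough to satisfy the inequalities appearing in the proof of Lemma~\ref{lem:prob}, consistently with $e \gg d$ as required by \textbf{CP2}.

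Next I would define the good event $G = G_{\rm typ} \cap E$, where $G_{\rm typ}$ is the event that the execution is $(\varepsilon,\tau)$-typical and $E$ is the recency event of Corollary~\ref{coro:uniform} (every checkpoint appearing after \textsf{GST} and before $T_{\rm max}$ is $d$-recent). By Corollary~\ref{coro:typical}, $\mathbb{P}(G_{\rm typ}^c) \le e^{-\Omega(\varepsilon^2 \kappa)}$, and by Corollary~\ref{coro:uniform}, $\mathbb{P}(E^c) \le e^{-\Omega(\sqrt{\kappa})}$. A union bound gives $\mathbb{P}(G^c) \le e^{-\Omega(\varepsilon^2\kappa)} + e^{-\Omega(\sqrt{\kappa})} = e^{-\Omega(\sqrt{\kappa})}$, since for fixed $\varepsilon$ the first term is dominated by the second. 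On the event $G$ I would then invoke Lemma~\ref{lem:prob}: the hypothesis of $(\varepsilon,\tau)$-typical execution is exactly $G_{\rm typ}$, while $E$ is what legitimizes the use of the recency parameter $d'$ in the necessary condition of Proposition~\ref{prop:necessary} underlying the lemma's proof (without $E$ a checkpoint could be arbitrarily stale and the slack term $(1 + (t-t^*)/e')(d'+3)$ would not control the bleeding). Lemma~\ref{lem:prob} yields a constant $C > 0$ such that $k$-common prefix and $(k, C\cdot\textsf{GST})$-chain quality hold at every slot after $C\cdot\textsf{GST}$; combined with the bound on $\mathbb{P}(G^c)$, this gives the claim.

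I expect the only delicate point to be this parameter bookkeeping — verifying that $k = \Theta(\kappa)$, $\tau = \Theta(\kappa)$, $d' = O(\sqrt{\kappa}) = o(\tau)$, and $e'$ arbitrarily large can all be arranged at once — together with the observation that the $O(\sqrt{\kappa})$-scale slack terms inherited from the recency condition are genuinely negligible against the $\Theta(\kappa)$-scale drift $\bar y - \bar z$ accumulated over intervals of length at least $\tau$, which is precisely what makes the random-walk argument in Lemma~\ref{lem:prob} close.
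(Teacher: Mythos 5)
Your proposal is correct and follows essentially the same route as the paper: the paper's own proof conditions on the recency event $E$ from Corollary~\ref{coro:uniform}, combines it with the typicality guarantee of Corollary~\ref{coro:typical} and Lemma~\ref{lem:prob}, and applies a union bound, exactly as you do. Your additional parameter bookkeeping ($\tau = k/(2\lambda\Delta)$, $d' = O(\sqrt{\kappa}) = o(\tau)$, $e'$ large) just makes explicit the choices the paper leaves implicit in Corollary~\ref{coro:typical} and Lemma~\ref{lem:prob}.
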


\begin{proof}
All the analysis in this section conditions on that event $E$ occurs ({\bf CP1} holds), and we have $\mathbb{P}(E) \geq 1 - e^{-\Omega(\sqrt{\kappa})}$ by Corollary \ref{coro:uniform}. Then combine it with Corollary \ref{coro:typical} and Lemma \ref{lem:prob}, and apply the union bound, we can conclude the proof.
\end{proof}

\section{Discussion}\label{sec:discussion}

In this section, we elaborate on some of the finer details pertaining to our work. 

\paragraph{Choice of Algorand as the Checkpointing BFT Protocol} We chose Algorand BA as our checkpointing protocol because it satisfies the desired  properties {\bf CP0}-{\bf CP3} as listed in \S\ref{sec:protocol}. However, some other well-known BFT protocols fail to meet all these properties, and therefore we don't use them as a candidate checkpointing protocol. Generally, BFT protocols can be categorized into two classes: 1) Classic BFT protocols, such as PBFT \cite{castro1999practical} and Hotstuff \cite{yin2019hotstuff}; 2) Chained BFT protocols, such as Streamlet \cite{chan2020streamlet} and Chained Hotstuff \cite{yin2019hotstuff}. 

PBFT and Hotstuff don't satisfy the {\bf CP1} (the recency condition). The adversary could potentially lock on a block $B$ privately when it becomes the leader of one view, and then the adversary finalizes $B$ as a checkpoint after a long time when $B$ is no longer in the best chain ($B$ is a descendant of the last checkpoint but not in the longest chain). In this attack, all  honest blocks mined during this time period could be potentially wasted as they are not extending $B$. Further, this resource-bleeding attack could be turned into a perpetual liveness attack in the partial synchronous model. One simple way to fix this bug is by introducing randomness into the protocol: when a new honest leader does not hear any locked block from previous views, it will propose an empty block $\perp$ to refresh the privately locked block, with a certain probability. However, this simple fix introduces unnecessary latency in the normal path when there is no attack and the network is synchronous, as it may need more views with honest leaders to ensure progress.

On the other hand, for Streamlet and Chained Hotstuff, the finalization of one block is supported by its descendants, therefore it is hard to design a checkpointed longest chain protocol such that all finalized checkpoints lie on a single blockchain. Therefore, finding or designing a good checkpointing protocol is non-trivial, and it would be interesting to see whether there is any other BFT protocol that fits our need and perhaps has better latency and lower communication cost.

\paragraph{Checkpointing protocol properties} The finality gadget in GRANDPA \cite{stewart2020grandpa} and the finality layer of Afgjort \cite{DinsdaleYoung2020AfgjortAP} are also checkpointing protocols. The desired checkpointing properties in GRANDPA are very similar to ours: it has {\bf CP0}, {\bf CP1} and {\bf CP3}, but {\bf CP2} (gap in checkpoints) is missing. Without {\bf CP2}, the mining bleeding could happen frequently, which leads to security vulnerabilities. (Note that GRANDPA also does not have property {\bf P3}, as discussed in \S\ref{sec:related}).

Afgjort only requires {\bf CP0}, but it has two other desired properties: The ``$\Delta$-updated" property guarantees that the chains held by honest nodes are at most $\Delta$ blocks beyond the last finalized block; ``$k$-support property" ensures that any finalized block must have been on the chain adopted by at least $k$ honest nodes. The ``$\Delta$-updated" property  is in the opposite direction of our {\bf CP2}; our protocol satisfies $n/3$-support. We defer a full comparison of all these properties and identifying the minimal set of desired properties to future work.

\paragraph{Attack on GRANDPA} In \S\ref{sec:related}, we claimed that in the GRANDPA design, the $k$-deep rule is insecure in the synchronous but unsized setting. Here, we describe a possible ``roll-back" attack, using some notations from GRANDPA paper \cite{stewart2020grandpa}. Let $h$ be the number of online honest voters and we assume $f+1 \leq h \leq 2f$. Suppose at the beginning of round 1, the genesis block $B_0$ has two children $B_1$ and $B_2$ (they are siblings), where $B_1$ is an honest block and $B_2$ is a private block. Then all $h$ honest voters prevote for $B_1$ in step 3 of round 1. However, all honest voters will get stuck in step 4 if the adversary doesn't send any prevotes, because $g(V_{1,v}) = nil$ and $E_{0,v} = B_0$ hence $g(V_{1,v}) \geq E_{0,v}$ is not satisfied. However, all honest miners will mine on $B_2$ (assuming tie breaking in favor of the adversary). When $B_2$ becomes $k$-deep, the adversary casts enough prevotes and precommits on $B_1$ to make $B_1$ finalized in round 1. Now $B_2$ is deconfirmed, and all honest miners switch to mine on $B_1$, creating a {\em safety violation}. Further if the adversary has mined some private blocks on $B_1$, then he can continuous this attack in all rounds and all the finalized block will be adversarial block, leading to a {\em liveness violation}.

}
{
}

\end{document}